\newcommand{\seeExt}[1]{the online appendix~\cite[#1]{abate2021fully}}
\newcommand{\refOrCite}[2][]{\ifbool{isExtended}{\autoref{#2}}{\seeExt{#1}}\xspace}
\renewcommand{\autoref}[1]{\cref{#1}}
\begin{document}

\title{Fully Abstract and Robust Compilation}
%\title{Fully Abstract and Robust Compilation}
% \title{The Fox and the Hound}
% \subtitle{A General Comparison and a Structural Marriage}
\subtitle{And how to reconcile the two, abstractly}

% author names and affiliations use a multiple column layout for up to three
% different affiliations

% \author{Anonymized for submission}

\author{
  Carmine Abate\inst{1}%\orcid{}
  \and
  Matteo Busi\inst{2}%\orcid{}
  \and
  Stelios Tsampas\inst{3}%\orcid{}
}

\institute{
  MPI-SP Bochum, Germany\\
  \email{carmine.abate@mpi-sp.org}
  \and
  Università di Pisa, Italy\\
  \email{matteo.busi@di.unipi.it}
  \and
  KU Leuven, Belgium\\
  \email{stelios.tsampas@cs.kuleuven.be}
}

\maketitle

\begin{abstract}
The most prominent formal criterion for secure compilation is \emph{full abstraction},
the preservation and reflection of contextual equivalence.
% Secure compilation has been first defined as the preservation and reflection of
% contextual equivalence (\emph{fully abstract compilation}).
%
Recent work
% marked the arrival of
%a new criterion, that of
%
introduced \emph{robust compilation}, defined as the preservation of \emph{robust satisfaction} of hyperproperties,~\IE their
satisfaction against arbitrary attackers.
% adversarial contexts.
% Instead, the recent notion of \emph{robust compilation} is defined as the
% preservation of \emph{robust satisfaction} of hyperproperties,~\IE their
% satisfaction against arbitrary adversarial contexts.
%
In this paper, we initially set out to compare these two approaches to secure compilation.
%
%relative advantages and disadvantages.
% with the goal of guiding developers in choosing the one that best
% suits their needs.
%
To that end, we provide an exact description of the hyperproperties that are robustly satisfied by programs compiled
with a fully abstract compiler, and show that they can be meaningless or trivial.
%
% However, we argue that assessing the meaningfulness of such a description may be
% unfeasible in practice.
%
% However, we argue that such a description does not generally respect the intended meaning of the hyperproperties.
%
% assessing the meaningfulness of such a description may be
% unfeasible in practice.
%
We then propose a novel criterion for secure compilation formulated in the framework of Mathematical Operational
Semantics (MOS),
%
% \emph{GSOS laws},
%% the abstract representation of (operational) semantics as distributive laws,
%% \cnote{I would skip the explanation of GSOS laws in the abstract}
%
guaranteeing both full abstraction and the preservation of robust satisfaction of hyperproperties in a more
sensible manner.
% Thus, we propose a novel criterion based on GSOS laws and maps of distributive
% laws that implies the preservation of contextual equivalence and that comes with
% an (easy to compute) algebraic description of the hyperproperties guaranteed to
% be robustly satisfied after compilation.
\end{abstract}

\keywords{ secure compilation, fully abstract compilation, robust
  hyperproperty preservation, language-based security, Mathematical
  Operational Semantics
}

\theoremstyle{remark}
\newtheorem*{remark*}{Remark}
\begin{remark*}
  To ease reading, we highlight the elements of source languages in \src{blue,
    sans\text{-}serif}, the target elements in \trg{red, bold} and the common ones in
  black~\cite{patrignani2020why}.
\end{remark*}

\section{Introduction}\label{sec:intro}
  % !TeX root = ../main_aplas.tex
%
\noindent
%% To deal with the complexity of modern computing systems, engineers prefer a \emph{layered design}, ensuring that
%% systems are composed of a stack of layers, each sitting on a different level of
%% abstraction~\cite{piessens2020security}.
%
Due to the complexity of modern computing systems, engineers make substantial use of \emph{layered design}. Higher
layers hide details of the lower ones and come with abstractions that ease reasoning about the system
itself~\cite{piessens2020security}.
%
%
% \cnote{o.w. we can keep the old sentence below}
%where a system is a
%stack of layers and each layer comes with a certain level of abstraction~\cite{piessens2020security}.
%
%
A layered design of programming languages allows to benefit from modules, interfaces or dependent types of a
\emph{source, high-level} language to write well-structured programs, and execute them efficiently in a \emph{target,
low-level} language, after \emph{compilation}.
%
%% compilation chains that transform programs written in a \emph{source, high-level}
%% language to that of a \emph{target, low-level} one, possibly through multiple intermediate steps.
%% %
%% % In this way, it is possible to obtain efficient executable programs,
%% %
%% Compilation chains thus allow programmers to use convenient high-level abstractions,\EG~modules, interfaces or dependent
%% types, to write well structured or even provably secure code, and executed them efficiently in the lower level.
%
% The result of compilation chains are efficient executable programs, while programmers are provided with
% abstractions that help them in writing better structured code -- \EG modules and interfaces of OO languages -- or even
% provably correct and secure code -- \EG dependent-types and pre/post conditions in verification languages like Coq orn
% F*.
%
% \cnote{
%     Dealing with the complexity of modern computing systems often requires a \emph{layered design} where each layer
% presents a ceratin level of abstraction~\cite{piessens2020security}, this is very well exemplified by the design of
% programming languages. Programming languages may present abstractions that help programmers in writing better
% code,\EG~modules and interfaces in Java, or even to verify their code is correct and secure,\EG~dependent types or pre
% and post conditions in Coq or F* TODO: cite the manuals.
% }
%
%
Unfortunately, an attacker may exploit the lack of abstractions at the low-level to mount a so-called \emph{layer-below
attack}~\cite{piessens2020security}, which is otherwise impossible at the
high-level~\cite{durumeric2014matter,dsilva2015correctness}.
%
%
%
%
%
%
%% \emph{layer-below}
%% attacks~\cite{piessens2020security}, \IE attacks on compiled programs that were previously ruled-out by source
%% abstractions, \EG~\cite{durumeric2014matter,dsilva2015correctness}.
% ,lipp2018meltdown,kocher2019spectre}.  an attacker targeting a target program which has been compiled from a
%source-level one, can mount attacks that were ruled out by source abstractions (\emph{layer-below}
%attacks~\cite{piessens2020security}),
%\EG~\cite{durumeric2014matter,dsilva2015correctness,lipp2018meltdown,kocher2019spectre}.  However, when compiling a
%high-level language into a low-level one, an attacker can mount attacks that were ruled out by source abstractions
%(\emph{layer-below} attacks~\cite{piessens2020security}),
%\EG~\cite{durumeric2014matter,dsilva2015correctness,lipp2018meltdown,kocher2019spectre}.
%
%

\emph{Secure compilation}~\cite{patrignani2019formal} devises both principles and proof techniques to
%
%no layer-below attack is possible after compilation.
%
%The main idea is that preserving
%
preserve the (security-relevant) abstractions of the source and prevent layer-below attacks.
%
%% the (security-relevant) abstractions present at the high-level alleviates the threat of layer-below attacks.
%
%
\citet{abadi1999protection} hinted that secure compilers must respect \emph{equivalences}, as some security properties
can be expressed in terms of indistinguishability w.r.t. arbitrary attackers, or \emph{contextual equivalence}.
%
%% In his seminal paper,~\citet{abadi1999protection} noticed that when security properties can be expressed in terms of
%% indistinguishability w.r.t.\ arbitrary adversarial contexts, a secure compiler must preserve such an equivalence.
%
%
%\emph{fully abstract compilation} as definition of secure compilation.
%
%% a tool to reason about preservation of security properties, when these are
%% expressed in terms of indistinguishability or equivalence.
%
% Roughly, a compiler is \fac whenever it translates source programs that are equivalent according to a source-level attacker into target programs that are the same for any target attacker (and vice versa).
%
% Assume as given an equivalence relation expressing the fact that two programs are \emph{indistinguishable} according to an attacker.
%
%
Fully abstract compilers preserve and reflect (to avoid trivial translations) contextual equivalence.

Two decades of
successes~\cite{abadi1999protection,ahmed2008typed,ahmed2011equivalence,patrignani2015secure,patrignani2015fully,bowman2015noninterference,skorstengaard2019stktokens,strydonck2019linear,busi2020provably,elkorashy2020capableptrs}
made full abstraction the gold-standard for secure compilation.
However, %% working with \fac is not trivial at all and
some ad-hoc examples from recent
literature~\cite{patrignani2017secure,abate2019journey} showed that fully abstract
compilers may still introduce bugs that were not present in source programs, \EG
%
% \cnote{I don't know if we should write ``inspired to'' or ``From''}
\begin{example}[See also Appendix E.5 of~\cite{abate2019journey}]\label{ex:introex}
Consider source programs to be functions $\src{\mathbb{B} \to \nat}$ (from booleans to natural numbers) and target ones
to be functions $\trg{\nat \to \nat}$. Define contextual equivalence to be equality of outputs on equal inputs.
Next, identify $\src{\mathbb{B}}$ with $\{0, 1\} \subseteq \trg{\nat}$, and compile a program $\src{P}$ to $\comp{P}:
\trg{\nat \to \nat}$ that coincides with $\src{P}: \src{\mathbb{B} \to \nat}$ on $\{0,1\}$ and returns a default value
-- denoting a bug -- otherwise,
\begin{equation*}
    \comp{P}(n) =
        \begin{cases}
            \src{P}(n) & \text{ for } n = 0,1 \\
            42         & \text{ otherwise } \\
        \end{cases}
\end{equation*}
% Despite the fact that target contexts distinguish compiled programs if and only if source context do
%
%
$\comp{\cdot}$ is fully abstract, yet a source program that ``never outputs 42'', will
%
%be buggy after compilation and no
no longer enjoy this property.
%
%    \footnote{ In this specific case one could solve the issue by identifying $\mathbb{B}$ with the
%     quotient set $\nat/ \texttt{even}$.  We conjecture this idea may also be generalized to solve this type-mismatch
%     issue even in presence of more complex types (like recursive types).}
%     %
%     \cnote{I don't know if we want this footnote. It may ``distract'' the reader and makes him/her ask why we are not
%     trying to do that}
%     \mnote{Agree --- removed}
\end{example}
\noindent
This simple example underlines the fact that if a security property like
``never output 42'' is not captured by contextual equivalence, there is no
guarantee it will be preserved by a fully abstract compiler.
%
%
%% In general fully abstract compilers deal with contextual equivalence, and one may be not able to
%% specify, and a fortiori to preserve, all the security properties she is interested in,
\citet{abadi1999protection} tellingly wrote
%
%% data integrity nor data confidentiality~\cite{abate2019journey}.
%% Thus, following the suggestion
%
% This was already noticed by~\citet{abadi1999protection}
%
\begin{quote}
    [\dots] we still have only a limited understanding of how to specify and prove that a translation preserves
    particular security properties. [\dots]
    %
    %% This question deserves further attention.  It may be worthwhile to address it first
    %% in special cases [\dots]
\end{quote}
\citet{abate2019journey} proposed to specify security in terms of \emph{hyperproperties}, sets of sets of traces of
observable events~\cite{clarkson2010hyperproperties}.  In this setting, they consider a compiler \emph{secure} only if
it \emph{robustly preserves} a class of hyperproperties, \IE if it preserves their satisfaction against arbitrary
attackers.
%
%% \noindent
%% \citet{abate2019journey} proposed to specify security in terms of \emph{hyperproperties}, sets of sets of execution
%% traces~\cite{clarkson2010hyperproperties}, and to consider a compiler \emph{secure} only if it preserves the class of
%% hyperproperties that capture the intended notion of security.
%
%% and to define secure compilation criteria in terms of security-relevant hyperproperties.
%
For~\autoref{ex:introex}, ``never output 42'' can be specified as a \emph{safety} hyperproperty, where function inputs
and outputs are the observable events. The above compiler $\comp{\cdot}$ is \emph{not} secure according
to~\citet{abate2019journey}, as it does not robustly preserve the class of safety hyperproperties.
%
%% More precisely, every criterion in the family of \emph{robust criteria}, ensures the preservation of \emph{robust
%% satisfaction} -- \IE~against arbitrary adversarial contexts -- of a class of hyperproperties.
%
More generally, each particular class of hyperproperties, \EG~the one for data integrity or the one for data
confidentiality~\cite{clarkson2010hyperproperties}, determines a precise formal secure compilation criterion.

Despite the introduction of the robust criteria, full abstraction is still widely
adopted~\cite{skorstengaard2019stktokens,strydonck2019linear,busi2020provably,elkorashy2020capableptrs}, for at least
two reasons.
First, contextual equivalence can model security properties such as noninterference~\cite{bowman2015noninterference},
isolation~\cite{busi2020provably}, well-bracketed control flow or local state
encapsulation~\cite{skorstengaard2019stktokens} for programs that don't expose events externally.
%
%
%% First, hyperproperties may seem artificial in systems that lack an explicit notion of externally observable events,
%% while contextual equivalence can be used to model noninterference~\cite{bowman2015noninterference},
%% isolation~\cite{busi2020provably}, well-bracketed control flow or local state
%% encapsulation~\cite{skorstengaard2019stktokens} in a more natural manner.
%
Second, even though fully abstract compilers do not \emph{in general} preserve data integrity or confidentiality, they
often do so in practice.

Fully abstract and robust compilation both embody valuable notions of secure
compilation and neither is stronger than the other nor are they orthogonal, which makes
us believe their relation deserves further investigation.
Our goal is to have criteria with well understood security guarantees for compiled programs, so that both users and
developers of compilers may decide which criterion better fits their needs.
%
%
%
%% Since both fully abstract and robust compilation have good reasons to be picked as criteria for secure compilation, we
%% set out to investigate the relation between them beyond the mere logical implication.
%
%
%\mnote{Like this?
%
For that, we assume an abstract trace semantics, collecting observables events and internal steps, is given for both
source and target languages, and start our quest not by asking \emph{if} a given fully abstract compiler preserves
\emph{all} hyperproperties, but \emph{which ones do} and \emph{which ones do not} preserve.
%but \emph{which ones it does} and which \emph{it does not} preserve.
%
%% Our goal is to have criteria with well understood security guarantees for compiled programs, so that both users and
%% developers of compilers, may decide which criterion better fits their needs.
%
%% \cnote{$\bullet$ -- In comments old version that I believe is a bit obsolete after Stelios criterion. Let's give more
%%   space to the fact that we have a novel criterion, and that we introduced it because FAC was failing to give a good
%%   descprition of the hyperproperties in the target.}
%
%% Our goal is to make explicit the guarantees provided by the
%% secure compilation criteria in the literature, so making it more accessible.  Indeed, we believe that an exhaustive
%% answer to our questions may guide developers of secure compilers in choosing the criterion that gives them the
%% \emph{guarantees} they actually need or expect.
%
%% Beyond the foundational interest, a deep understanding of the relation between
%% fully abstract and robust compilation may also help in adapt proof techniques to
%% prove one criterion to also prove the other.
%% This happened up to some extent for the so-called technique of ``approximate
%% back-translation'', originally developed by~\citet{devriese2016fully} to prove
%% fully abstract compilation, and then adapted to also prove some of the robust
%% criteria of~\citet{abate2019journey}.
%% %
\vspace*{-0.3cm}
\paragraph*{Contributions.}
First, we make explicit the guarantees given by full abstraction w.r.t.\
arbitrary source hyperproperties.
We achieve this by showing that for every fully abstract compiler $\comp{\cdot}$, there exists a translation or
interpretation of source hyperproperties into target ones, $\tilde{\tau}$, such that if $\src{P}$ robustly satisfies a
source hyperproperty \src{H}, \comp{P} robustly satisfies $\tilde{\tau}(\src{H})$ (\autoref{thm:facrhp}).
However, we observe that a fully abstract compiler may fail to preserve the
robust satisfaction of some hyperproperty, as $\tilde{\tau}$ may map interesting
hyperproperties to trivial ones (\autoref{ex:runningex}).
We then provide a sufficient and necessary condition to preserve the robust satisfaction of hyperproperties
(\autoref{cor:tool}), but argue that it is unfeasible to be proven true for an arbitrary fully abstract compiler.
%
% Following that, we discuss the limitations of the interpretation $\tilde{\tau}$,
% namely that its own definition is compiler dependent and that its computation may require so
% much information to result not useful in practice.
%
To overcome the above issues, we introduce a novel criterion, that we formulate in the abstract framework of
Mathematical Operational Semantics (MOS).  We show that our novel criterion implies full abstraction and the
preservation of robust satisfaction of arbitrary hyperproperties (\autoref{sec:MoDLexample}).  We illustrate
effectiveness and realizability of our criterion in~\autoref{sec:secureex}.
%
% Finally, in~\autoref{sec:MoDLexample} we fix the compiler
% from~\autoref{ex:runningex} so that our novel criterion holds a
% \paragraph*{Structure of the paper}
% %
% % The rest of the paper is structured as follows.
% %
% In~\autoref{sec:facrobust} we fix the definitions of fully abstract and robust
% compilation.
% %
% In~\autoref{sec:comparing} we provide a further example of a compiler that is
% fully abstract but not robust, and we state~\autoref{thm:facrhp}
% and~\autoref{cor:tool}.
% %
% In~\autoref{sec:modl} we provide the mathematical tools we need to define our
% novel criterion.
% %
% In~\autoref{sec:MoDLexample} we extend~\autoref{ex:runningex} to fit our
% criterion and show that this ensures robust compilation.
% %
% \autoref{sec:relatedwork} discusses some related work relevant to this paper
% and~\autoref{sec:conclusion} draws the conclusions.
% %
% All the details of the formalization, including the proofs of theorems and
% lemmas, can be found in~\autoref{sec:tech}.

%
%%% Local Variables:
%%% coding: utf-8
%%% mode: latex
%%% TeX-master: "../main_aplas"
%%% End:

%% %
\section{Fully abstract and robust compilation}\label{sec:facrobust}
  % !TeX root = ../main_aplas.tex
%
\noindent
% In~\autoref{sec:fac} and~\autoref{sec:robust} we recall the intuitions of fully abstract and robust compilation and
% provide their rigorous definitions.
Let us briefly recall the intuition of fully abstract and robust compilation,
and provide their rigorous definitions.
%\snote{Fore more information, give some refs?}
We refer the interested reader to~\cite{patrignani2019formal, abate2019journey,abate2019trace} for more details.
%
%%  about fully abstract and robust compilation, respectively.
\vspace*{-0.08cm}
\subsection{Fully abstract compilation}\label{sec:fac}
\noindent
%
%% \citet{abadi1999protection} proposed the notion of fully abstract compilation as secure compilation criterion, motivated
%% by the following two reasons. The first is that many interesting security properties of programs can be expressed in
%% terms of \emph{indistinguishability} or \emph{equivalence} w.r.t. the observations of an attacker, and the second is that
%% attackers can be modelled as execution contexts interacting with programs.
%
\citet{abadi1999protection} proposed fully abstract compilation to preserve security properties such as confidentiality
and integrity when these are expressed in terms of indistinguishability w.r.t. the observations of arbitrary attackers,
%
%% Intuitively, two programs are considered indistinguishable if they are \emph{equivalent} up-to the observations
%% of arbitrary attackers,
%
the latter modeled as execution contexts.
% interacting with programs.
% This notion of indistinguishability is known as \emph{contextual equivalence}.
%\snote{Slightly arbitrary.}
%\mnote{Why?}
%
For a concrete example, if no source context $\src{C_S}$ can distinguish a program $\src{P_1}$ that uses some
confidential data $k$ from a program $\src{P_2}$ that does not, we can deduce that $k$ is kept confidential by
$\src{P_1}$.
% w.r.t. source semantics.
%
%% With this approach one can model some security properties such as confidentiality or integrity, in terms of
%% %
%% contextual equivalences~\cite{cheval2013deciding}, and to preserve confidentiality or integrity preserved in the target
%% it is necessary that also the compiled programs \comp{P_1} and \comp{P_2} cannot be distiguished by target contexts
%% \trg{C_T}.
%
% To preserve confidentiality of $k$ also w.r.t. the target semantics, also the
% compiled versions of \src{P_1} and \src{P_2} must be indistinguishable
% according any target context \trg{C_T}.
%
Thus, a compiler $\comp{\cdot}$ from a source language to a target one, that aims to preserve confidentiality, must
ensure that also $\comp{\src{P_1}}$ and $\comp{\src{P_2}}$ are equivalent w.r.t. the observations of any target context
$\trg{C_T}$.
To avoid trivial translations, one typically asks for the reflection of the equivalence as well.
%
% In order to avoid trivial translations one may also require the
% \emph{reflection} of equivalences, \IE~that compiled programs should be
% equivalent in the target only if they already were in the source.
%
% A compiler $\comp{\cdot}$ that preserves and reflects contextual equivalence
% is \emph{fully abstract},
%
% We consider compilers to be maps $\comp{\cdot}$ from a source language to a
% target one, and say that $\comp{\cdot}$ is \emph{fully abstract} if it preserves
% and reflects contextual equivalence:
%
%\cnote{FAC is only the one in section 3 (with trace equality)}
\begin{definition}[Fully abstract compilation~\cite{abadi1999protection}]\label{defn:originalfac}
A compiler \comp{\cdot} is \emph{fully abstract} iff for any $\src{P_1}$ and $\src{P_2}$,
  \[
    (\forall \src{C_S}. \linkS{C_S}{P_1} \mathrel{\src{\approx}} \linkS{C_S}{P_2})
    \!\Leftrightarrow\!
    (\forall \trg{C_T}. \linkT{C_T}{P_1} \mathrel{\trg{\approx}} \linkT{C_T}{P_2})
  \]
  where $\src{C_S}$, $\trg{C_T}$ denote source and target contexts resp., \src{\approx}, \trg{\approx} denote the two
  contextual equivalences, \IE~equivalence relations on programs.
\end{definition}
\noindent
Notice that the security notions one can preserve and reflect with a fully abstract compiler are those captured by the
contextual equivalence relation $\approx$, that determines both the
meaningfulness and the effectiveness of full abstraction.
Indeed, if \src{\approx} is too coarse-grained, some interesting security properties may be ignored.  Dually, if
\trg{\approx} is too fine-grained, equivalent source programs may not have counterparts that are equivalent in the
target.
%
% \citet{parrow2016general} gives sufficient conditions on \src{\approx} and
% \trg{\approx} for the existence of a fully abstract translation between two
% languages. \citet{gorla2016full} discuss how an erroneous usage of fully abstract
% compilation can lead to meaningless results, at least when comparing the
% expressiveness of two languages (see also~\autoref{sec:relatedwork}).
%
In~\autoref{sec:comparing}, we pick $\approx$ to be equality of execution traces
which, under mild assumptions~\cite{engelfriet1985determinacy,
  mitchell1993abstraction}, coincides with other common choices of $\approx$
(see also~\autoref{sec:relatedwork}).
%
%and therefore of both external events
%and internal steps.
%
% which is a natural choice when observable events are collected on execution traces,
%
%that coincides with other common choices under mild assumptions

%
\subsection{Robust compilation}\label{sec:robust}
\noindent
\citet{abate2019journey} suggest a family of secure compilation criteria that
depend on the security notion one is interested in preserving.
The key idea in their criteria is the preservation of \emph{robust
satisfaction},~\IE satisfaction of (classes of) security properties against
arbitrary attackers, modeled as contexts.
More concretely,~\citet{abate2019journey, abate2019trace} assume that every
execution of a program exposes a trace of observable events $t \in \Trace$ for a
fixed set $\Trace$ and model interesting security notions like data integrity,
confidentiality or observational determinism as sets of sets of traces,
\IE~\emph{hyperproperties} denoted by $H \in
\hprop$~\cite{clarkson2010hyperproperties}.
%
% \begin{definition}[Robust satisfaction~\cite{abate2019journey,abate2019trace}]\label{defn:rsat}
%   A program $P$ robustly satisfies a hyperproperty $H$ iff $ \forall C.~\link{C}{P} \models H $,
%   %
%   where $\link{C}{P} \models H$ holds whenever $\beh{\link{C}{P}} \in H$ and $\beh{\link{C}{P}}$ is the set of all
%   traces that can be observed when executing $\link{C}{P}$.
%   % We also write $\link{C}{P} \models H$ for $\beh{\link{C}{P}} \in H$.
%   %
% \end{definition}
\begin{definition}[Robust satisfaction~\cite{abate2019journey,abate2019trace}]\label{defn:rsat}
  A program $P$ robustly satisfies a hyperproperty $H$ iff $ \forall C.~\link{C}{P} \models H $,
  where $\link{C}{P} \models H \triangleq \beh{\link{C}{P}} \in H$ and $\beh{\link{C}{P}}$ is the set of all
  traces that can be observed when executing $\link{C}{P}$.
  % We also write $\link{C}{P} \models H$ for $\beh{\link{C}{P}} \in H$.
  %
\end{definition}
\noindent
Secure compilation criteria can then be defined as the preservation of robust
satisfaction of classes of hyperproperties such as safety or
liveness~\cite{abate2019journey}, in this paper we consider the class of
\emph{all} hyperproperties and \emph{robust hyperproperty preservation} (\rhptau
from~\cite{abate2019trace}).
For that, consider a function $\tau$ that takes a source-level hyperproperty and
returns its interpretation (or translation) at the target level.
Intuitively, a compiler $\comp{\cdot}$ is $\rhptau$ if, for any source
hyperproperty $\src{H}$ robustly satisfied by $\src{P}$, its interpretation
$\tau(\src{H})$ is robustly satisfied by $\comp{P}$, formally:
\begin{definition}[Robust hyperproperty preservation]\label{defn:rc}
  A compiler \comp{\cdot} \emph{preserves the robust satisfaction of
  hyperproperties} according to a translation $\tau: \hpropS \to \hpropT$
  \emph{iff} the following $\rhptau$ holds
  \begin{flalign*}
    \rhptau \equiv \forall \src{P}~\forall \src{H}\in\hprop.~& (\forall \src{C_S}.\
    \linkS{C_S}{P}\ \satS \src{H}) \Rightarrow \\
    & (\forall \trg{C_T}.\ \linkT{C_T}{P}\ \satT \tau(\src{H}))
  \end{flalign*}
  when $\tau$ is clear from the context we simply say that $\comp{\cdot}$ is
  robust.
\end{definition}
%
%% From now onward, we feel free leave $\tau$ implicit when clear from the context and just say that a compiler for which
%% \rhptau holds is \emph{robust}.
%

\noindent
\rhptau can be formulated without quantification on hyperproperties~\cite{abate2019journey,abate2019trace}.
%
%
%, and that is better suited for proofs.
%
\begin{lemma}[Property-free \rhptau]\label{lemma:rhptauchar}
  For a compiler \comp{\cdot}, \rhptau is equivalent to~\footnote{$\tau(\behS{\linkS{C_S}{P}})$ is a shorthand for $\tau(
  \{ \behS{\linkS{C_S}{P}} \})$ }
  \[
    \forall \src{P}~\forall \trg{C_T}~\exists \src{C_S}.~\behT{\linkT{C_T}{P}} = \tau(\behS{\linkS{C_S}{P}})
  \]
\end{lemma}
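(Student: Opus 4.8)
The plan is to prove the two implications of the equivalence separately by unfolding \autoref{defn:rc} and \autoref{defn:rsat}; the only step that is not pure bookkeeping is a structural use of $\tau$, so I would first record the standing hypotheses on it --- $\tau$ monotone and distributing over unions, as holds when, for instance, $\tau$ is the existential image of a relation on trace sets, and as is needed since the statement is false for a wholly arbitrary $\tau: \hpropS \to \hpropT$. For ``property-free $\Rightarrow$ \rhptau'', I fix $\src{P}$, a source hyperproperty $\src{H}$ with $\forall\src{C_S}.\ \linkS{C_S}{P}\satS\src{H}$ (equivalently $\behS{\linkS{C_S}{P}}\in\src{H}$ for all $\src{C_S}$), and a target context $\trg{C_T}$. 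The property-free statement hands me a source context $\src{C_S}$ with $\behT{\linkT{C_T}{P}} = \tau(\behS{\linkS{C_S}{P}})$; since $\{\behS{\linkS{C_S}{P}}\}\subseteq\src{H}$, monotonicity of $\tau$ gives $\tau(\{\behS{\linkS{C_S}{P}}\})\subseteq\tau(\src{H})$, hence $\behT{\linkT{C_T}{P}}\in\tau(\src{H})$, \IE $\linkT{C_T}{P}\satT\tau(\src{H})$, which closes this direction.

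For ``\rhptau $\Rightarrow$ property-free'', I fix $\src{P}$ and $\trg{C_T}$ and instantiate the universally quantified hyperproperty with the \emph{smallest} one robustly satisfied by $\src{P}$, namely $\src{H_P} \triangleq \{\, \behS{\linkS{C_S}{P}} \mid \src{C_S}\text{ a source context}\,\}$. This $\src{H_P}$ is exactly $\bigcap\{\src{H}\mid\forall\src{C_S}.\ \linkS{C_S}{P}\satS\src{H}\}$, so $\forall\src{C_S}.\ \linkS{C_S}{P}\satS\src{H_P}$ holds by construction, and \rhptau gives $\linkT{C_T}{P}\satT\tau(\src{H_P})$, \IE $\behT{\linkT{C_T}{P}}\in\tau(\src{H_P})$. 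Since $\src{H_P} = \bigcup_{\src{C_S}}\{\behS{\linkS{C_S}{P}}\}$ and $\tau$ commutes with unions, $\tau(\src{H_P}) = \bigcup_{\src{C_S}}\tau(\{\behS{\linkS{C_S}{P}}\})$, so some $\src{C_S}$ satisfies $\behT{\linkT{C_T}{P}}\in\tau(\{\behS{\linkS{C_S}{P}}\})$ --- the displayed equation, read with the footnote's shorthand.

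The main obstacle is concentrated in this last step: because $\tau$ acts on hyperproperties rather than on behaviors, trading the ``fat'' witness $\src{H_P}$ for a single source context forces one to decompose $\tau(\src{H_P})$ behavior by behavior, which is precisely where union-preservation of $\tau$ is used; symmetrically the first direction rests on monotonicity, and if one insists on the literal equality $\{\behT{\linkT{C_T}{P}}\} = \tau(\{\behS{\linkS{C_S}{P}}\})$ of the statement rather than the membership just derived, $\tau$ must additionally send singletons to singletons. All of this holds for the interpretations $\tau$ of interest, so I would make the hypothesis explicit up front; the remainder is just pushing \autoref{defn:rsat} through the quantifiers, the one genuinely clever move being the choice of $\src{H_P}$ as the canonical robustly-satisfied witness.
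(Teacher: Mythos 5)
The paper does not actually prove this lemma --- it is imported from~\citet{abate2019journey,abate2019trace} --- so there is no in-paper argument to compare against; judged on its own, your proof is correct and is essentially the standard one from that prior work. Your two key moves are exactly right: for the forward direction, monotonicity of $\tau$ turns $\{\behS{\linkS{C_S}{P}}\}\subseteq\src{H}$ into membership of $\behT{\linkT{C_T}{P}}$ in $\tau(\src{H})$; for the converse, instantiating $\src{H}$ with the canonical witness $\src{H_P}=\myset{\behS{\linkS{C_S}{P}}}{\src{C_S}}$ (the smallest robustly satisfied hyperproperty) and decomposing $\tau(\src{H_P})$ as a union over contexts recovers the existential. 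Your most valuable observation is the one you place up front: as literally stated, with an arbitrary $\tau:\hpropS\to\hpropT$, the equivalence is false, and one needs $\tau$ to be monotone, to commute with unions, and to send singletons to singletons --- conditions that hold exactly when $\tau$ is the direct-image lifting of a map on trace properties, which is the setting of~\citet{abate2019trace} and is what the paper's footnote tacitly assumes. Making that hypothesis explicit is a genuine improvement on the paper's presentation rather than a gap in yours.
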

\noindent
Notice that, while \autoref{defn:rc} describes --- through $\tau$ --- the target \emph{guarantees} for $\comp{P}$
against arbitrary target contexts,~\autoref{lemma:rhptauchar} enables proofs by \emph{back-translation}.
In fact, similarly to fully abstract compilation~\cite{patrignani2019formal}, one can prove that a compiler is \rhptau by
exhibiting a so-called \emph{back-translation} map producing a source context
$\src{C_S}$ whose interaction with $\src{P}$
exposes ``the same'' observables as \trg{C_T} does with \comp{P}:
\begin{remark}[\rhptau by back-translation]\label{rmk:rhpbk}
  \rhptau holds if there exists a back-translation function $\bk$ such that for any $\trg{C_T}$ and any $\src{P}$,
  $\bk(\linkT{C_T}{P}) = \src{C_S}$ is such that $ \behT{\linkT{C_T}{P}} = \tau(\behS{\linkS{C_S}{P}}) $.
\end{remark}
%
%%% Local Variables:
%%% coding: utf-8
%%% mode: latex
%%% TeX-master: "../main_aplas"
%%% End:

%% %
\section{Comparing \fac and \rhptau}\label{sec:comparing}
  % !TeX root = ../main_aplas.tex
%
% \mnote{
% The title of this section is unclear, since we do not know yet about FAC (we just introduced fully abstract
% compilation). Proposal: Comparing fully abstract and robust compilation. Q: does this fit the rest of the titles? We are
% repeating over and over "fully abstract" and "robust compilation".}
% \cnote{the current title is not too bad as we actually compare FAC and RHPtau, but I am ok with changing it. Maybe we can also drop the ``compilation'' at the end of the title you propose.}

In the previous section we defined fully abstract compilation as the preservation and reflection of contextual
equivalence, \IE~what the contexts can observe about programs.
Instead, \rhptau was defined as the preservation of (robust satisfaction of) hyperproperties of externally observable
traces of events.
%, \IE~sets of sets of traces.
%
To enable any comparison, we first provide an intuition on how to accommodate the mismatch in
\emph{observations} between full abstraction and \rhptau
(see~\refOrCite{sec:insertion} for all the details).
We assume the operational semantics of our languages exhaustively specify the execution of programs in contexts,
including both internal steps and steps that expose externally observable \emph{events} like inputs and outputs.
% (which we will call \emph{external events}).
%
Also, we say that a trace is \emph{abstract} if it collects both internal steps and externally observable events.
In a slight abuse of notation, we still denote with $\mathit{beh}(\link{C}{P})$ the set of all the possible
\emph{abstract} traces allowed by the semantics when executing $P$ in $C$.
Moreover, since hyperproperties just express predicates over events%of the traces
, we now write $\mathit{beh}(\link{C}{P}) \in H$ to mean that the traces of events for
$\link{C}{P}$ satisfy the hyperproperty $H$.
%
%
%% Therefore, to make the comparison possible, we elect to express contextual
%% equivalence as equality on the (sets of) traces observed in an arbitrary context.
%% %
%% Recall that, intuitively, $\linkS{C_S}{P_1} \approx \linkS{C_S}{P_2}$ (\linkT{C_T}{P_1}
%% $\trg{\approx}$ \linkT{C_T}{P_2} resp.) \textit{iff} they yield the same
%% \emph{observable} results. Since observables are collected on traces (at least
%% on trace-based systems), our choice
%% of equivalence agrees with the intuitive meaning of contextual equivalence and
%% works similarly to the one used in~\citet{patrignani2017secure} and~\citet{abate2019journey}.
%
%
%
% Since observables are collected on traces, both $\src{\approx}$ and $\trg{\approx}$ should relate programs that expose
% the same (set of) traces.
% Similarly to~\citet{patrignani2017secure} and~\citet{abate2019journey}, we
% consider the following equivalence relation,
%
% % \snote{Well, I believe this is true if and only if trace equivalence is
% well-behaved, aka a congruence.}\cnote{Look that %   we are taking the
% equality of traces in every context. So if you pick only equality of traces
% ignoring the context %   then you are right in requiring it to be a
% congurence.}
%
Finally, we elect to express contextual equivalence as the equality of the (sets of) abstract traces in an arbitrary context.
\begin{definition}[Equality of $\beh{\cdot}$]\label{defn:traceeq}
For programs $P_1, P_2$ and a context $~C$,
  \[
    \link{C}{P_1} \approx \link{C}{P_2} \iff \beh{\link{C}{P_1}} =
      \beh{\link{C}{P_2}}
  \]
\end{definition}
%
 % % \snote{That's trace equivalence (or equality, same thing I guess) for a
 % single context only. I don't get why not just %   define it simply for $P_{1}
 % \approx P_{2}$.}  \cnote{partially because of above (this way you don't even
 % need it to %   be a conguruence), the more common trace equivalence is the
 % same as our for all contexts}
%
\noindent
In~\autoref{sec:relatedwork} we discuss other common choices for $\approx$ such as \emph{equi-termination}, and the
hypotheses under which they are equivalent to ours.
%
% From now on we shall be using the following definition of contextual equivalence.
We now instantiate~\autoref{defn:originalfac} on the contextual equivalence from~\autoref{defn:traceeq} and make
explicit the notion of fully abstract compilation we are going to use from now on.  Note how we are only interested in
the \emph{preservation} of contextual equivalence, as reflection is often subsumed by compiler correctness (\EG in
absence of internal non-determinism)~\cite{abadi1999protection, patrignani2019formal}.
%
% %
% Before doing that, notice that the security relevant part of fully abstract
% compilation is the \emph{preservation} of contextual equivalence.
% In fact, as we mentioned above, the reflection of contextual equivalence is
% typically required to avoid the triviality of the translation and is (usually)
% subsumed by compiler correctness~\cite{abadi1999protection,
% patrignani2019formal}.
% %
% As a consequence, from now onwards we will use the term fully abstract
% compilation to denote only the preservation (and state explicitly when we will
% also need reflection).
%
\begin{definition}[\fac]\label{defn:fac}
  For a compiler \comp{\cdot}, \fac is the following predicate
  \begin{flalign*}
    \fac \equiv
      \forall \src{P_1} \src{P_2}.
        &(\forall \src{C_S}.~\behS{\linkS{C_S}{P_1}} = \behS{\linkS{C_S}{P_2}})
        \Rightarrow \\
        &(\forall \trg{C_T}.~\behT{\linkT{C_T}{P_1}} = \behT{\linkT{C_T}{P_2}})
  \end{flalign*}
\end{definition}
\citet{patrignani2017secure,abate2019journey} have previously investigated the relation
between \fac as in~\autoref{defn:fac} and \rhptau.
%has already been investigated.
In particular,~\citet{abate2019journey} showed that \fac does not imply any of the robust
criteria, with an example similar to the one we sketched in~\autoref{sec:intro}.
In this section, we provide further evidence of this fact: a fully abstract
compiler
%between two simple imperative languages
that does not preserve the robust satisfaction of a security-relevant
hyperproperty, namely noninterference. More details on the example can
be found in~\refOrCite{sec:runningexdetail}.
\begin{example}\label{ex:runningex}
% \snote{You just said that!}
% In this example we set out to show that a compiler between two simple imperative
% languages is \fac but fails to preserve (the robust satisfaction of) noninterference.
%
Let \src{Source} and \trg{Target} to be two \texttt{WHILE}-like
languages~\cite{nielson2007semantics} with a mutable state.
A state $s \in S \triangleq (\mathit{Var} \to \nat)$ assigns every variable $\texttt{v} \in
\mathit{Var}$ a natural number.
We assume $\mathit{Var}$ to be partitioned into a ``high'' (private) and a
``low'' (public) part.
We write $\texttt{v} \in \varH$ ($\texttt{v} \in \varL$, resp.) to denote that the
variable $\texttt{v}$ is private (public, resp.).
%
%
% Contexts and (partial) programs are defined by the grammars below, while terms
%
Partial programs are defined in the same way in both \src{Source} and
\trg{Target}, whereas whole programs, or terms, are obtained by filling the hole(s)
of a context with a partial program (\cref{fig:grammars}).
%
% Contexts model the attackers and are programs with a hole.
%
The only context in \src{Source} is \src{\hole{\cdot}}, called the
\emph{identity} context and such that for any $\src{P}$, $\src{\hole{P}} =
\src{P}$.
Instead, contexts in \trg{Target} additionally include $\trg{\lceil \cdot
\rceil}$ that is able to observe the \emph{internal} event $\mathcal{H}$
(intuitively, a form of \emph{information leakage} that is not observed by
source contexts) and report it by emitting \bang.
% Instead, contexts in \trg{Target} additionally include $\trg{\lceil
% \cdot \rceil}$, that is able to observe the \emph{internal} event $\mathcal{H}$,
% essentially a form of \emph{information leakage}, that cannot be observed in the
% source. The action of reporting is performed by returning trace \bang.
%
% \begin{figure}[tb]
\begin{figure}
  \vspace*{-0.4cm}
  \begin{minipage}[t]{\linewidth}
    \begin{grammar}
      <P> ::= skip | \texttt{v} := <expr> | $\langle P \rangle ; \langle P \rangle$ | while <expr> $\langle P\rangle$
    \end{grammar}
  \end{minipage}

  \smallskip
  \begin{minipage}[t]{0.3\linewidth}
    \begin{grammar}
      <\src{C_S}> ::= \src{\hole{\cdot}}
    \end{grammar}
  \end{minipage}
  ~
  \begin{minipage}[t]{0.3\linewidth}
    \begin{grammar}
      <\trg{C_T}> ::= \trg{\hole{\cdot}} | \trg{\lceil \cdot \rceil}
    \end{grammar}
  \end{minipage}

  \caption{
      $\langle P \rangle$ defines the syntax of both \S and \T partial programs, where $\langle \mathit{expr}
      \rangle$ denotes the usual arithmetic expressions over $\nat$.
      $\langle \src{C_S} \rangle$ and $\langle \trg{C_T} \rangle$ define instead the contexts of \S and \T, respectively.
  }\label{fig:grammars}
  \vspace*{-0.4cm}
\end{figure}
%
%
% Terms of the languages or ``whole'' programs, are obtained by filling the hole
% of a context with a partial program, \EG~$\trg{\lceil} \texttt{v} :=
% \texttt{42} \trg{\rceil}$.
% %
% The two languages differ in the set of contexts, that for the source consists
% of the only identity context, \IE~\src{\hole{P}} = \src{P}, while in the
% target the context $\trg{\lceil \cdot \rceil}$ may allow to observe an event
% --- \bang --- that cannot be observed in the source. Source and target traces
% collect information specified by the rules of the operational semantics,
% partially given in~\autoref{fig:opsemshort}.

%\mnote{removed skip, added short explanation. feel free to add it back.}
%The operational semantics of \S and that of \T are rather standard.
% Source and target traces collect information specified by the rules of the
% operational semantics, partially given in~\autoref{fig:opsemshort}.
The semantics of \S and \T are partially given in~\autoref{fig:opsemshort}.
Rule~\src{asnL} is for assignments that do not involve high variables.
\src{asnH} is for assignments of high variables, and -- upon a change
in their value -- the internal trace $\mc{H}$ is emitted.
The \T counterparts, \trg{asnL} and \trg{asnH}, work similarly.
% \mnote{Rules $\src{id0}$ and $\src{id1}$ are for the identity context and
% simply mask any event coming from the execution of the partial program.}
Finally, the most interesting rule is $\trg{bang2}$, where we see how context
\trg{\lceil \cdot \rceil} reports a $\bang$ upon encountering an $\mc{H}$.
% \mnote{In the appendix we must add a \src{id2} to deal with $\src{\hole{p}}$ when $p$ terminates.
% Feel free to remove also \src{id0} and \src{id1}, I added them to clarify why we don't have $\mc{H}$ in the behavior.}
% Rule  deals with the context
% \trg{\lceil \cdot \rceil} that allows to report the observable $\bang$ in \T.
% Also, note that the rules for assignments in \T  can
% be trivially derived from their source counterparts.
%
% \begin{figure}[tbh]
\begin{figure}
  \vspace*{-0.6cm}
  \begin{mathpar}
    % \inference[\src{skip}]{}{\rets{s, \texttt{\src{skip}}}{s}} \\

    \inference[\src{asnL}]
    {\texttt{v} \in \varL & \qquad e \cap \varH = \emptyset}
    {\goes{s, \texttt{v}\ \src{:=}~e}{s_{[\texttt{v} \leftarrow [e]_{s}]}, \checkmark}}

    \inference[\src{asnH}]
    {\texttt{v} \in \varH \qquad s(\texttt{v}) \neq [e]_{s} }
    {\goesv{s, \texttt{v}\ \src{:=}\ e}{s_{[\texttt{v} \leftarrow [e]_{s}]}, \checkmark}{\mathcal{H}}}

    % \inference[\src{id0}]
    % {\goesv{s, \src{p}}{s\pr, \src{p\pr}}{o} & o \in \{ \bang, \checkmark, \mc{H} \} }
    % {\goes{s, \src{\hole{p}}}{s\pr, \src{\hole{p\pr}}}}

    % \inference[\src{id1}]
    % {\goes{s, \src{p}}{s\pr, \src{p\pr}} & }
    % {\goes{s, \src{\hole{p}}}{s\pr, \src{\hole{p\pr}}}}

    \inference[\trg{bang2}]
    {\goesv{s, \trg{p}}{s\pr, \trg{p\pr}}{\mathcal{H}} & }
    {\goesv{s, \trg{\lceil\;p\;\rceil}}{s\pr, \trg{p\pr}}{\bang}}
  \end{mathpar}
  \caption{Selected rules of \S and \T.}\label{fig:opsemshort}
  \vspace*{-0.6cm}
\end{figure}
%
%
%% $\piS{\trace}$ are infinite or finite sequences of elements of $S \cup \{ \mathcal{H} \} $ ending with the
%% symbol $\checkmark$ to denote termination.
%% %
%% %
%% $\piT{\trace}$ are infinite or finite sequences of elements of $S \cup \{ \mathcal{H}, \Cobs, ~\bang \}$, where $\Cobs,
%% \bang \not\in S$.
%% %
%% Behaviors of \emph{whole programs} consist of sets of $\piS{\trace}$ and
%% $\piT{\trace}$ described by operational semantics defined in
%% \autoref{fig:opsemsrc} and \autoref{fig:opsemtrg}, respectively.
%
% \snote{I have no issue with the following definition of traces but we could
%   remove the checkmark or even the $\mathcal{H}$ to make it more consistent with
% the definition of noninterference.}

For example, consider a high variable $\texttt{v} \in \varH$ and the \S program
$\src{P} \triangleq \texttt{v}\ \src{:=}\ \texttt{42}$.
When \src{P} is plugged in the identity context $\src{\hole{\cdot}}$, the resulting behavior is
%as follows set of traces
%
$\behS{\src{\hole{P}}} =
    \myset{ s\cdot \mathcal{H} \cdot s\pr \cdot \checkmark}{s \in S \wedge s\pr = s_{[\texttt{v} \leftarrow 42]}}.$
Intuitively, the traces in $\behS{\src{\hole{P}}}$ express that the execution starts in a state $s$, then a high
variable is updated ($\mc{H}$) leading to state $s\pr$ and then the program terminates $(\checkmark)$.
For the same $\texttt{v} \in \varH$, target program $\trg{P} \triangleq \texttt{v}\ \trg{:=}\ \texttt{42}$ in
$\trg{\lceil\;\cdot\;\rceil}$, we have that
%
% \mnote{I'm not expecting to observe $\mc{H}$ here, since we are reducing
% according to $\trg{bang1/2}$ (I expect $\trg{bang1}$ to be changed to a rule
% that emits $o$ for $\trg{\lceil P \rceil}$ if $\trg{P}$ emits $o \neq \mc{H}$).
% %
% Similarly, I don't expect $\mc{H}$ in the beh of the source either (o.w. the
% non-interference below needs to be changed), so the rule for the identity
% context should be that $\src{[P]}$ emits $o$ if $\src{P}$ emits $o \neq \mc{H}$,
% nothing otherwise (\EG~the neutral element of the monoid of traces with concatenation).
% %
% Also, I don't expect this to break things, right? First, this keeps the possibility of sandboxing, second what I wrote should reflect
% directly to $B$ and thus to $Z$.}
%
$
  \behT{\trg{ \lceil \trg{P} \rceil}} =
    \myset{ s\cdot \bang \cdot s\pr \cdot \checkmark }{ s \in S \wedge s\pr = s_{[\texttt{v} \leftarrow 42]}}.
$
Notice the additional $\bang$ w.r.t.\ the source, due to the fact that the context observed a change in a high variable.
Informally, we say that a program satisfies noninterference if, executing it in two low-equivalent initial states, it
transitions to two low-equivalent states.
More rigorously, noninterference can be defined for both \S and \T as the following hyperproperty $\mathit{NI} \in
\hprop$,
\[
  \mathit{NI} = \myset{\pi \in \prop}{\forall t_1, t_2 \in \pi.~t_1^0 =_L t_2^0 \Rightarrow t_1 =_L t_2}
\]
where $t_i^0$ stands for the first observable of the trace $t_i$ and $=_L$
denotes the fact that two states are low-equivalent (\IE~they coincide on all $x
\in \varL$).
Also, we lift the notation to traces and write $t_1 =_L t_2$ to denote that
$t_1$ and $t_2$ are pointwise low-equivalent.
More precisely, $=_L$ ignores all occurrences of $\mathcal{H}$ (as it is
internal) and compares traces observable-by-observable, relating $\checkmark$
and $\bang$ to themselves and comparing states with the above notion of
low-equivalence.

The identity compiler preserves trace equality (see~\refOrCite{lem:runningexLemma} for the proof), but does not preserve
the robust satisfaction of noninterference as the \T context $\trg{\lceil \cdot \rceil}$ can detect changes in high
variables and report a \bang.
\end{example}
On the one hand, \rhptau provides an explicit description of the target
hyperproperty $\tau(\src{H})$ that is guaranteed to be robustly satisfied after
compilation under the hypothesis that $\src{H}$ is robustly satisfied in the
source. However, \rhptau does not imply the preservation of contextual
equivalence (or trace equality) because hyperproperties cannot specify which
traces are allowed for every single context.
%% \cnote{is this clear enough or do we need a
%%   further remark?}
%
On the other hand, it is possible that \fac does not preserve (the robust
satisfaction of) hyperproperties, because contextual equivalence may not capture
some hyperproperty such as noninterference, as shown in~\autoref{ex:runningex}.
So, what kind of hyperproperties a \fac compiler is guaranteed to preserve?
%
%
%
%% When contextual equivalence can be defined at will, it's possible to capture also noninterference by a careful choice of
%% both $\src{\approx}$ and $\trg{\approx}$, and to preserve it though a fully abstract
%% translation~\cite{bowman2015noninterference, busi2020provably}, but the issue may arise for other hyperproperties.
%
%
%% What about all the others hyperproperties, in particular those that are not captured by (the just chosen) contextual
%% equivalence?
%
If \src{P} robustly satisfies \src{H} (possibly not captured by \src{\approx}),
what is the hyperproperty that is robustly satisfied by $\comp{P}$ for
$\comp{\cdot}$ being \fac?

%
%% This suggests that
%% %
%% %there is not a ``bad'' nor a ``good'' definition of secure compilation,
%% %
%% the meaningfulness of the criterion depends on the scenario at hand and, most importantly, on the guarantees it provides
%% on compiled programs.
%% %
%% Notice that the guarantees provided by \rhptau are explicitely described by the
%% map $\tau$ itself.  For a \rhptau compiler and a $\src{P}$ that robustly
%% satisfies a hyperproperty $\src{H}$ then it is guranteed that $\comp{P}$
%% robustly statisfies $\tau(\src{H})$.
%
%% With this regard, the guarantees provided by \fac are less clear than those given by \rhptau, for which the answer is
%% $\tau(\src{H})$.
%
% We fill this gap defining a map ${\tilde{\tau}: \hpropS \to \hpropT}$ so that
% \fac implies \rhptautildename.
We answer this question by defining a map ${\tilde{\tau}: \hpropS \to \hpropT}$ so that
\fac implies \rhptautildename.
The map $\tilde{\tau}$ enjoys an optimality condition making it the \emph{best
possible} description of the target guarantee for programs compiled by a \fac
compiler.
\begin{restatable}{theorem}{rteprhp}\label{thm:facrhp}
  If $\comp{\cdot}$ is \fac, then there exists a map $\tilde{\tau}$ such that
  $\comp{\cdot}$ is $\rhptautildename$.
  Moreover, $\tilde{\tau}$ is the smallest (pointwise) with this property.
\end{restatable}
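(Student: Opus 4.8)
The plan is to write down, by hand, the least map $\tilde\tau$ that could conceivably work, and then verify the two claims; the construction itself turns out not to require \fac, the hypothesis being kept only because \fac is the regime of interest. Unfolding \autoref{defn:rsat}, ``$\src{P}$ robustly satisfies $\src{H}$'' is exactly $\{\, \behS{\linkS{C_S}{P}} \mid \src{C_S}\,\} \subseteq \src{H}$, and the text just before the statement already suggests that the hyperproperty robustly enjoyed by $\comp{P}$ ought to be $\{\, \behT{\linkT{C_T}{P}} \mid \trg{C_T}\,\}$. I would therefore set
\[
  \tilde\tau(\src{H}) \;\triangleq\; \bigl\{\, \behT{\linkT{C_T}{P}} \;\bigm|\; \trg{C_T}\ \text{a target context and}\ \src{P}\ \text{robustly satisfies}\ \src{H}\,\bigr\}.
\]
By construction $\tilde\tau(\src{H})$ is a set of target properties, i.e.\ an element of $\hpropT$, so $\tilde\tau \colon \hpropS \to \hpropT$ is a total function (with $\tilde\tau(\src{H}) = \emptyset$ in the degenerate case where no source program robustly satisfies $\src{H}$), and it is monotone, since enlarging $\src{H}$ only weakens the side condition ``$\src{P}$ robustly satisfies $\src{H}$''.

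\textbf{Robust preservation.} I would chase \autoref{defn:rc}: fix $\src{P}$ and $\src{H}$ with $\forall \src{C_S}.~\linkS{C_S}{P}\satS\src{H}$, and fix an arbitrary $\trg{C_T}$. Then the pair $(\src{P}, \trg{C_T})$ witnesses membership, $\behT{\linkT{C_T}{P}} \in \tilde\tau(\src{H})$, that is $\linkT{C_T}{P}\satT\tilde\tau(\src{H})$; as $\trg{C_T}$ was arbitrary, this is $\rhptautildename$. (If one works instead with the property-free form of \autoref{lemma:rhptauchar}, the monotonicity of $\tilde\tau$ noted above is what reconciles the two readings.)

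\textbf{Minimality.} Let $\tau'$ be any map for which $\comp{\cdot}$ is also robust in the sense of \autoref{defn:rc}; I claim $\tilde\tau(\src{H}) \subseteq \tau'(\src{H})$ for every $\src{H}$. Take any $\pi \in \tilde\tau(\src{H})$; by construction $\pi = \behT{\linkT{C_T}{P}}$ for some $\trg{C_T}$ and some $\src{P}$ that robustly satisfies $\src{H}$, and instantiating \autoref{defn:rc} for $\tau'$ at this very $\src{P}$, $\src{H}$ and $\trg{C_T}$ gives $\pi = \behT{\linkT{C_T}{P}} \in \tau'(\src{H})$. Equivalently, the maps making $\comp{\cdot}$ robust are closed under arbitrary pointwise intersection, and $\tilde\tau$ is their least element.

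I do not expect a genuine technical obstacle here: each direction is a one-line quantifier chase. The substance is conceptual --- recognizing that $\rhptautildename$ pins $\tilde\tau(\src{H})$ from below to be exactly the displayed set (any smaller choice already fails on a source program that robustly satisfies $\src{H}$), while nothing in the bare notion of a translation $\tau \colon \hpropS \to \hpropT$ can push it higher, so the infimum is attained. The only care needed is the bookkeeping that $\tilde\tau$ is total and monotone and the empty-union corner case. \fac re-enters only downstream: it lets one rewrite $\{\, \behT{\linkT{C_T}{P}} \mid \trg{C_T}\,\}$ as a function of the contextual-equivalence class of $\src{P}$ (\autoref{defn:fac}), so that $\tilde\tau(\src{H})$ reads as ``the image under $\comp{\cdot}$ of the observational content shared by all robust satisfiers of $\src{H}$'', the object whose possible triviality is then exhibited around \autoref{ex:runningex}.
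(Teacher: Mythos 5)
Your proposal is correct and, once the paper's notation is unfolded, identical to its proof: the appendix defines $\tilde\tau(\src{H})$ as the set of all $\behT{\linkT{C_T}{P}}$ ranging over target contexts $\trg{C_T}$ and source programs $\src{P}$ that robustly satisfy $\src{H}$ (written there via the relation $\comprel$), and both the by-construction argument for $\rhptautildename$ and the pointwise-minimality argument are the same quantifier chases you give. Your side observation that \fac is not actually needed for existence or minimality is also accurate: the paper invokes \fac only (via \autoref{lem:partialfun}) to present $\comprel$ as a partial function so that its notation is well defined, which your direct set comprehension sidesteps.
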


%\paragraph*{What~\autoref{thm:facrhp} means and what does not.}
%
To avoid any misunderstanding, we stress the fact that, akin to
\cite[Theorem 1]{parrow2016general}, neither the existence, nor
the optimality of $\tilde{\tau}$ can be used to argue that a \fac compiler
\comp{\cdot} is reasonably robust.
The robustness of \comp{\cdot} depends on the image of $\tilde{\tau}$ on the
hyperproperties of interest: it should not be trivial, \EG~$\tilde{\tau}(\hS{NI})
= \trg{\top}$ like in~\autoref{ex:runningex} nor distort the intuitive meaning
of the hyperproperty itself, ~\EG~$\tilde{\tau}(\hS{NI}) = ``\texttt{never
output 42}''$.
In a setting in which observables are coarse enough to be common to source and
target traces, \IE~\piS{Trace} = \piT{Trace}, it is possible to establish
whether $\tilde{\tau}(H)$ has ``the same meaning'' as $H$:
\begin{restatable}{corollary}{tool}\label{cor:tool}
  If $\comp{\cdot}$ is \fac, then for every hyperproperty $H$, $\comp{\cdot}$
  preserves the robust satisfaction of $H$ iff $\tilde{\tau}(H)
  \subseteq H$, where $\tilde{\tau}$ is the map from~\Cref{thm:facrhp}.
\end{restatable}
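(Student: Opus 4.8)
The plan is to derive Corollary~\ref{cor:tool} directly from Theorem~\ref{thm:facrhp}, specialized to the common-trace setting where $\piS{Trace} = \piT{Trace}$ and hence source and target hyperproperties live in the same lattice $\hprop$. Under that identification, ``$\comp{\cdot}$ preserves the robust satisfaction of $H$'' should be read as $\rhptau$ instantiated with $\tau$ being the identity on $\hprop$: for every $\src{P}$ and $\trg{C_T}$, if $\src{P}$ robustly satisfies $H$ in the source then $\comp{P}$ robustly satisfies $H$ (the same $H$) in the target. So the corollary is really a statement comparing the identity translation with the optimal translation $\tilde\tau$ supplied by Theorem~\ref{thm:facrhp}.

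First I would record the two halves of the optimality of $\tilde\tau$ from Theorem~\ref{thm:facrhp}: (i) $\comp{\cdot}$ is $\rhptautildename$, i.e. $\tilde\tau$ works; and (ii) $\tilde\tau$ is pointwise smallest among translations that work, i.e. if $\comp{\cdot}$ is $\rhptau[\tau']$ then $\tilde\tau(H) \subseteq \tau'(H)$ for every $H$. For the ``if'' direction of the corollary, assume $\tilde\tau(H) \subseteq H$. By (i), for any $\src{P}$ robustly satisfying $H$ and any $\trg{C_T}$ we get $\behT{\linkT{C_T}{P}} \in \tilde\tau(H)$; since hyperproperties are downward-closed under $\subseteq$ as \emph{sets} (if $A \in \tilde\tau(H)$ and $\tilde\tau(H)\subseteq H$ then $A \in H$ — no closure property is even needed, just set inclusion), we conclude $\behT{\linkT{C_T}{P}} \in H$, which is exactly robust satisfaction of $H$ by $\comp{P}$. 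Hence $\comp{\cdot}$ preserves robust satisfaction of $H$.

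For the ``only if'' direction, suppose $\comp{\cdot}$ preserves robust satisfaction of $H$, which means $\comp{\cdot}$ is $\rhptau[\tau']$ for the constant-ish translation $\tau'$ that maps $H$ to $H$ (and, say, everything else to $\top$, or whatever makes $\tau'$ a total map — the value of $\tau'$ on other hyperproperties is irrelevant to the argument). Then optimality clause (ii) of Theorem~\ref{thm:facrhp} gives $\tilde\tau(H) \subseteq \tau'(H) = H$, which is the desired conclusion. I would phrase this carefully so that the single-hyperproperty preservation hypothesis is genuinely packaged as an instance of $\rhptau$ for \emph{some} $\tau'$; the cleanest way is to note that Theorem~\ref{thm:facrhp}'s minimality is pointwise, so it suffices to exhibit, for the fixed $H$, \emph{one} translation that works and sends $H$ inside $H$.

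The main obstacle I anticipate is not mathematical depth but getting the quantifier bookkeeping and the identification of source/target hyperproperties exactly right: one must make precise that, when $\piS{Trace}=\piT{Trace}$, the phrase ``preserves the robust satisfaction of $H$'' unambiguously denotes $\rhptau$ with the identity translation restricted to $H$, and that $\tilde\tau$ from Theorem~\ref{thm:facrhp} really is comparable (same codomain lattice) to that identity translation. A secondary subtlety is making sure the minimality in Theorem~\ref{thm:facrhp} is stated strongly enough to apply here — it needs to say ``for every $\tau'$ such that $\comp{\cdot}$ is $\rhptau[\tau']$, and every $H$, $\tilde\tau(H)\subseteq\tau'(H)$'', which is exactly the ``smallest (pointwise)'' phrasing in the theorem statement. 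Once these identifications are pinned down, the proof is a two-line application of the theorem in each direction.
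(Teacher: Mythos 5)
Your proposal is correct and matches the paper's own proof essentially step for step: the ``only if'' direction constructs the same auxiliary translation (mapping $H$ to $H$ and every other hyperproperty to $\top$) and invokes the pointwise minimality of $\tilde{\tau}$, while the ``if'' direction applies $\rhptautildename$ and the set inclusion $\tilde{\tau}(H) \subseteq H$. No meaningful differences.
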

%
%\noindent
The rigorous definition of $\tilde{\tau}$ and the proof of~\autoref{thm:facrhp}
and \autoref{cor:tool} can be found in~\refOrCite{sec:tech}.
Here, we only mention that the definition of $\tilde{\tau}$ requires information on the
compiler itself, thus it can be unfeasible to compute and assess the meaningfulness of $\tilde{\tau}(H)$.
%
% Moreover, the image under $\tilde{\tau}$ of an arbitrary hyperproperty
% $\src{H}$ is given by the traces (in arbitrary target contexts) of all
% $\comp{P}$ such that $\src{P}$ robustly satisfies \src{H}.
%
\autoref{cor:tool} partially mitigates this problem by allowing to approximate
$\tilde{\tau}(H)$ rather than computing it, \EG~by showing an intermediate $K$ such
that $\tilde{\tau}(H) \subseteq K \subseteq H$.
We leave as future work any approximation techniques for $\tilde{\tau}$ that
would make substantial use of \autoref{cor:tool}.
%
%% In this section we showed that every \fac compiler is also \rhptautildename for
%% a certain mapping of hyperproperties $\tilde{\tau}$.
%% Despite $\tilde{\tau}$ provides the best possible description of the target
%% guarantee given by \fac (pointwise minimality in~\autoref{thm:facrhp}), we
%% cannot expect to actually compute it.
%

To overcome the issues highlighted above, we extend the categorical approach to secure compilation
of~\citet{tsampas2020categorical} and propose an abstract criterion that implies both \fac and \rhptau for a $\tau$
defined via co-induction and therefore independent of the compiler. In~\cref{sec:modl} we shall summarize the underlying
theory before introducing our criterion in~\cref{sec:MoDLexample}.

%%% Local Variables:
%%% coding: utf-8
%%% mode: latex
%%% TeX-master: "../main_aplas"
%%% End:

%% %
\section{Secure compilation, categorically}\label{sec:modl}
The basis of our approach and that of~\citet{tsampas2020categorical} is the
framework of \emph{Mathematical Operational Semantics}
(MOS)~\cite{turi1997towards}.
Here, we briefly explain how MOS gives a mathematical description of programming
languages as well as (secure) compilers and
% adapt
show how our earlier
\autoref{ex:runningex}
fits such a framework.
We refer the interested reader to the seminal paper of~\citet{turi1997towards}
and the excellent introductory material of \citet{klin2011bialgebras} for more
details.
Further examples and applications can be found in the
literature~\cite{turi1997categorical, watanabe2002well, tsampas2020categorical}.

\subsection{Distributive laws and operational semantics}\label{sec:MoDLbackground}
%
%% In this section we provide an elementary introduction of MOS, then following~\citet{tsampas2020categorical}, we recall
%% how the definition of \emph{maps of distributive laws} (MoDL) gives compilers that are \fac,

%%  and then introduce our new
%% criterion called MMoDL, essentially a stronger, ``context-aware'' version of the coherence criterion
%% by~\citet{tsampas2020categorical}
%% % used in the definition of MoDL
%% that also ensures \rhptau~\autoref{defn:robustcriterion}.
%
% In this case, the map $\tau$ is defined algebraically by a universal property,~\IE~as the unique coinductive extension
% of a morphism in the definition of MoDL. \cnote{TODO: double-check my phrasing}
%
% Finally in~\autoref{sec:MoDLexample} we fix the compiler from~\autoref{ex:runningex} so that it
% fits~\autoref{defn:robustcriterion}, we explicit the resutling $\tau$ and show that it maps noninterference to
% \cnote{TODO}.
%
%% Finally, in~\autoref{sec:MoDLexample} we apply the theory developed hereafter to fix the compiler
%% from~\autoref{ex:runningex} so that it
%% % a the compiler, combined with a simple \emph{behavioral translation} $b : \src{B} \natt \trg{B}$,
%% constitutes a MMoDL from \S to \T. As a consequence the compiler from~\autoref{sec:MoDLexample} is both \fac and
%% \rhptau, an the robust satisfaction of noninterference is preserved.
%
The main idea of MOS is that the semantics of programming languages, or systems
in general, can be formally described through distributive laws (\IE natural
transformations of varying complexity) of a  \emph{syntax functor} $\Sigma$ over
a \emph{behavior functor} $B$ in a suitable category (in our case the category
$\Set$ of sets and total functions~\cite{klin2011bialgebras}).
The functor $\Sigma: \Set \to \Set$ represents the algebraic
signature of the language and thus acts as an abstract description of its
syntax.
Instead, the functor $B: \Set \to \Set$ describes the behavior
of the language in terms of its observable events (\EG~the behavior of a
non-deterministic labeled transition system can be modeled by the functor $BX =
\pow{X}^{\Delta}$, where $\Delta$ is a set of trace
labels~\cite{winskel1993models});
%
%
% To help the reader with this abstract approach we make explicit the two functors
% for languages \S and \T from~\autoref{ex:runningex}.

Recall now the languages \S and \T of~\autoref{ex:runningex}.
%
%%%%%%%%%%%%%%%%%%% MB take CA: +1
The syntax functor for \S for a set of terms $X$ builds terms $\src{\Sigma}\;X$
according to (the sum of all) the constructors of the language:
\[
  \src{\Sigma}\;X \triangleq
    \top \uplus (\nat \times E) \uplus (X \times X) \uplus (E \times X),
\]
where $E$ is the set of arithmetic expressions.
The behavior functor for \S is a map that for an arbitrary set $X$, updates a store $s \in S$, and either terminates
($\checkmark$) or returns another term in $X$, possibly recording that some high-variable has been modified
($\mathcal{H}$):
\[
  \src{B}\;X \triangleq
    (S \times (\mathrm{Maybe}~\mathcal{H}) \times (X \uplus \checkmark))^{S}.
\]
%
%
%
%% ~\footnote{The reader familiar with functional programming will recognize in \src{B}
%% the (underlying endofunctor of the) combination of the \emph{State} and the \emph{Maybe} monad } \cnote{is this still the case?}
%
%
In \T, the syntax functor is $\trg{\Sigma}\;X = \src{\Sigma}\;X \uplus X$,
%
% \[ \trg{\Sigma} \triangleq \top \uplus (\nat \times E) \uplus (\Id \times \Id) \uplus (E \times \Id) \uplus \Id, \]
%
where the extra occurrence of $X$ corresponds to the target context $\trg{\lceil \cdot \rceil}$, and
$\trg{B}\;X \triangleq (S \times (\mathrm{Maybe}~(\mathcal{H} \uplus \bang)) \times
(X \uplus \checkmark))^{S}$.
%
% CA: I changed a bit the following
%% \mchange{
%% From now onwards, in order to make syntactic ``holes'' more explicit, we
%% represent them with using the identity functor $\Id\;X = X$.
%% So, for example, the syntax functor for \S is equivalently rewritten as
%% $\src{\Sigma} \triangleq \top \uplus (\nat \times E) \uplus (\Id \times \Id) \uplus (E
%% \times \Id)$.
%% }
We explicitly notice that syntactic ``holes'' are represented by the
  identity functor $\Id\;X = X$ and, to make this connection clearer, the
  syntax functor for \S can be equivalently written as $\src{\Sigma} \triangleq \top \uplus (\nat \times E) \uplus (\Id
  \times \Id) \uplus (E \times \Id)$.

Next, we can define the operational semantics, a \emph{distributive law} of
$\Sigma$ over $B$, in the format of a \emph{GSOS law}(\cite[Section
6.3]{klin2011bialgebras}).
A GSOS law of $\Sigma$ over $B$ is a natural transformation $\rho : \Sigma (\Id \product B) \natt B\Sigma^{*}$, where
$\Sigma^{*}$ is the free monad over $\Sigma$.
%
% Interestingly, GSOS laws are in 1-1 correspondence with GSOS
% rules~\cite{plotkin2004structural}, which are roughly similar to the rules we used to
% define the operational semantics of \S and \T.
For instance, the rules of sequential composition in \S (see
\src{seq1} and \src{seq2} in~\refOrCite[Fig.~4]{fig:opsemsrc}) correspond to the following
component of the GSOS law $\src{\rho} : \src{\Sigma} (\Id \product \src{B}) \natt \src{B}\src{\Sigma}^{*}$:
  \begin{align*}
  (\src{p},\src{f})~\src{;}~(\src{q},\src{g}) \mapsto \lambda~s.
  \begin{cases}
    (s\pr, \delta, \src{p\pr~;~q}) & \mathrm{if} ~\src{f}(s) = (s\pr, \delta, \src{p\pr}) \\
    (s\pr, \delta, \src{q}) & \mathrm{if}~\src{f}(s) = (s\pr, \delta, \checkmark)
  \end{cases}
  \end{align*}
  Here, $\src{p},\src{q} \in X$ with $X$ a generic set of terms, \IE~\src{p} and
  \src{q} can be programs, contexts or programs within a context, and $\src{f,g} \in \src{B}X$.
  %
  %% We are looking to
  The image of $\src{\rho}$ is an element of $\src{B\Sigma}^{*}X = (S \times
  (\mathrm{Maybe}~\mathcal{H}) \times (\src{\Sigma}^{*}X \uplus \checkmark)))^{S}$, depending
  on whether $\src{p}$ transitions to a term $\src{p\pr}$ (thus
  involving \src{seq2}), or terminates with $\checkmark$ (\src{seq1}).

Lastly, we informally recall that when the formal semantics of a language is given through a GSOS law $\rho : \Sigma
(\Id \product B) \natt B\Sigma^{*}$, for $\Sigma, B : \Set \to \Set$, the set of programs is (isomorphic to) the initial
algebra $A = \Sigma^{*}\emptyset$, while the final coalgebra $Z = B^{\infty}\top$\footnote{$\Sigma^{*}$ is the
\emph{free monad} over $\Sigma$ and $B^{\infty}$ is the \emph{co-free comonad} over
B~\cite[Ch. 5]{DBLP:books/cu/J2016}.} describes the set of all possible behaviors.
\begin{remark}\label{rmk:fAZ}
A distributive law $\rho$ induces a map $f : A \to Z$ that assigns to every closed term or program its behaviors
as specified by the law $\rho$ itself.
\end{remark}
\noindent
For \S and \T from~\autoref{ex:runningex} $\src{f}$ and $\trg{f}$ are just another, equivalent representation of
$\behS{\cdot}$ and $\behT{\cdot}$,~\EG~for \texttt{v} private variable,
%
%% \cnote{Shall we write a lemma in appendix ``bijection between beh and f'' ?} \snote{Depending on what you mean, it could
%%   be that $f$ and $beh$ are either the same thing or $beh$ is the post-composition of $f$ with a ``coarsening''
%%   operation. Here's my take on the above paragraph.} \cnote{+1. For the moment we simply make the reader notice that for
%%   \S and\T they coincide. Then we define \rhptau in the abstract framework and go back to the traces only for the
%%   example again.}
%
\begin{flalign*}
  &\src{f}(\src{[}{\texttt{v} \src{:=} \texttt{42}}\src{]}) =
  \lambda s.~\langle s_{[x <- 42]}, \langle \mathcal{H}, \checkmark \rangle \rangle
  \\
  &\trg{f}(\trg{\lceil}{\texttt{v} \trg{:=} \texttt{42}}\trg{\rceil}) =
  \lambda s.~\langle s_{[x <- 42]}, \langle \bang, \checkmark \rangle \rangle
\end{flalign*}
In other words, map $f: A \to Z$ is the abstract counterpart of map $\mathit{beh}(\cdot)$
that assigns to every program the set of all its possible execution traces.

\vspace*{-0.3cm}
\subsection{Maps of distributive laws as fully abstract compilers}\label{sec:MoDLsecomp}
\noindent
\citet{watanabe2002well} first introduced maps of distributive laws (MoDL) as \emph{well-behaved
  translations} between two GSOS languages.
\citet{tsampas2020categorical} showed how MoDL can also be used as a formal, abstract criterion for secure
compilation. Let us recall the definition of MoDL for two GSOS laws in the same category.

\begin{restatable}[MoDL]{definition}{modldef}\label{def:MoDL}
  A map of distributive law between $\src{\rho} : \src{\Sigma} (\Id \product \src{B}) \natt \src{B}\SigmastarS$ and
  $\trg{\rho} : \trg{\Sigma} (\Id \product \trg{B}) \natt \trg{B}\SigmastarT$ is a pair of natural transformations
  $s: \src{\Sigma} \natt \SigmastarT$ and $b: \src{B} \natt \trg{B}$ such that the following diagram commutes,
  \[
   \begin{tikzcd}[ampersand replacement=\&]
     {\src{\Sigma} (\Id \product \src{B})} \arrow[r, "\src{\rho}"]
                                           \arrow{d}[left] {s^{*}\circ\src{\Sigma}(\mathit{id} \times b)}
                                           \& {\src{B}\SigmastarS} \arrow[d, "b
                                           \circ \src{B}s^{*}"] \\
     {\trg{\Sigma} (\Id \product \trg{B})} \arrow[r, "\trg{\rho}" ]
     \& {\trg{B} \SigmastarT}
   \end{tikzcd}
   \]
   where $s^{*} : \SigmastarS \natt \SigmastarT$ extends $s: \src{\Sigma} \natt \SigmastarT$ to a morphism of free
   monads,~\IE to terms of arbitrary depth via structural induction.
\end{restatable}
\noindent
The diagram in~\Cref{def:MoDL} expresses a form of \emph{compatibility} of the source and the target
semantics. Considering any source term, executing it w.r.t. the source semantics $\src{\rho}$ and then translating the
behavior (together with the resulting source term) is equivalent to first compiling the source term (and translating the
behavior of its subterms) and then executing it w.r.t. the target semantics \trg{\rho}.

We recall that the set of source (resp. target) programs is $\src{A} \triangleq \SigmastarS \emptyset$ ($\trg{A}
\triangleq \SigmastarT \emptyset$ resp.), and that $ \comp{\cdot} \triangleq s^{*}_{\emptyset} : \src{A} \to \trg{A}$ is
the \emph{compiler} induced by $s$.
On the behaviors side, the natural transformation $b : \src{B} \natt \trg{B}$ induces a translation of behaviors $ d :=
b^{\infty}_{\top} : \src{Z} \to \trg{Z}$ where $Z \triangleq B^{\infty}\top$.
The compiler $\comp{\cdot} = s^{*}_{\emptyset}$ preserves (and also reflects when all the components of $b$ are
injective) \emph{bisimilarity} (see \cite{tsampas2020categorical}, Section 4.3).
%
%% \snote{Why not just say in the case of nondeterministic systems? The rest sounds a bit unnecessary at this point.}
%% %
%% \cnote{I think you mis-typed and meant `deterministic`. Sometimes (i.e. Catalin, Deepak at least) people use
%%   deterministic system to mean that there is no source of nondeterminism at all, while say ``determined'' if there is
%%   some source of non determinism that is external to the program,\EG~ inputs are given by an external environment. }
%
%
Whenever bisimilarity coincides with trace equality (see~\autoref{defn:traceeq}), for example under the assumption of
\emph{determinacy}\footnote{
  % no \emph{internal
  % } nondeterminism is possible~\cite{engelfriet1985determinacy}, still
  % external inputs can be a source of nondeterminism.
  It is possible to eliminate the hypothesis of determinacy when $B$ is
an endofunctor over categories richer that \textbf{Set}, \EG \textbf{Rel} the category of sets and relations.}, the
following holds (\cite{tsampas2020categorical}).
%
% \snote{Again, for imperative languages with side-effects, we have to make a
%   small concession.}
%
\begin{corollary}\label{cor:coherencefac}
  In absence of internal non-determinism, MoDL implies \fac.
\end{corollary}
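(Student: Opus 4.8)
The plan is to reduce contextual equivalence to bisimilarity on both sides by means of the determinacy hypothesis, and then to combine two standard facts about GSOS laws: that a MoDL induces a compiler that \emph{preserves} bisimilarity, and that bisimilarity is a \emph{congruence} for every operation definable by a GSOS law, hence for arbitrary one-holed contexts.

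First I would record the consequence of determinacy already mentioned before the statement: in both \S and \T, equality of $\beh{\cdot}$ as in \cref{defn:traceeq} coincides with bisimilarity, equivalently with being identified by the behaviour map into the final coalgebra --- the maps $\src{f} : \src{A} \to \src{Z}$ and $\trg{f} : \trg{A} \to \trg{Z}$ of \cref{rmk:fAZ}, which are the abstract counterparts of $\behS{\cdot}$ and $\behT{\cdot}$ (cf.\ \cref{sec:modl} and \cite{engelfriet1985determinacy}). Next I would invoke the fact recalled after \cref{def:MoDL}, that a MoDL $(s,b)$ makes the square relating the two behaviour maps commute, $d \circ \src{f} = \trg{f} \circ \comp{\cdot}$ with $d = b^{\infty}_{\top}$ and $\comp{\cdot} = s^{*}_{\emptyset}$; this is precisely what underlies the preservation of bisimilarity \cite{watanabe2002well,tsampas2020categorical}. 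Now, given $\src{P_1}, \src{P_2}$ satisfying the antecedent of \fac, instantiating it with the identity context (the bare hole, which is always a context in this framework) yields $\behS{P_1} = \behS{P_2}$, hence $\src{f}(P_1) = \src{f}(P_2)$ by the first step; applying $d$ and the commuting square then gives $\trg{f}(\comp{P_1}) = \trg{f}(\comp{P_2})$. Finally, since $\trg{f} : \trg{A} \to \trg{Z}$ is a morphism of $\trg{\Sigma}$-algebras --- being the unique bialgebra morphism from the initial to the final model of the target GSOS law --- its kernel is a $\trg{\Sigma}$-congruence, so $\trg{f}(\linkT{C_T}{\comp{P_1}}) = \trg{f}(\linkT{C_T}{\comp{P_2}})$ for every target context $\trg{C_T}$, each such context being built from the operations of $\trg{\Sigma}$ and a hole. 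Reading this back through the first step gives $\behT{\linkT{C_T}{\comp{P_1}}} = \behT{\linkT{C_T}{\comp{P_2}}}$ for all $\trg{C_T}$, i.e.\ \fac.

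The main obstacle is this congruence step, which is also where the asymmetry between source and target contexts is dealt with: a target context $\trg{C_T}$ need not be the compilation of any source context --- witness $\trg{\lceil \cdot \rceil}$ in \cref{ex:runningex} --- so one cannot transport it back along $\comp{\cdot}$ and must instead rely on behavioural equivalence on $\trg{A}$ being stable under the target operations. This is an instance of the classical result that bisimilarity is a congruence for GSOS \cite{turi1997towards,klin2011bialgebras}, and its proof is a routine induction on the depth of $\trg{C_T}$ once one knows that $\trg{f}$ is a $\trg{\Sigma}$-algebra morphism. One minor point should be stated explicitly: since the components of $b$ need not be injective, $d$ need not be injective either, so we obtain only \emph{preservation} of bisimilarity --- which is enough, as \fac in \cref{defn:fac} asks only for the preservation of contextual equivalence, its reflection being subsumed by compiler correctness.
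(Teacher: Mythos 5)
Your proof is correct and takes essentially the same route as the paper, which obtains \fac by combining the preservation of bisimilarity induced by the MoDL (deferred to Section~4.3 of \cite{tsampas2020categorical}) with the determinacy assumption identifying bisimilarity with equality of $\beh{\cdot}$, the congruence of bisimilarity for GSOS supplying the quantification over arbitrary target contexts. Your explicit handling of the congruence step, which the paper leaves implicit in the citation, correctly fills in the one nontrivial detail (a target context need not arise as the compilation of a source one), and your closing remark that only preservation, not reflection, is needed is likewise consistent with \cref{defn:fac}.
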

Similarly to \fac, the definition of MoDL does not ensure that $\comp{\cdot} =
s^{*}_{\emptyset}$ is robust. Indeed, the
obvious embedding compiler from~\autoref{ex:runningex} is a MoDL (let $s = i_{1}$ and $b = (S \times
(\mathrm{Maybe}~i_{1}) \times (\id \uplus \checkmark))^{S}$).
%
% but it does not preserve robust satisfaction ofnoninterference.
%
Intuitively, MoDL adequately captures the fact that compilation preserves the behavior of terms, but fails to capture
the observations target contexts can make on compiled terms.
%

%%% Local Variables:
%%% coding: utf-8
%%% mode: latex
%%% TeX-master: "../main_aplas"
%%% End:

%% %
\section{Reconciling fully abstract and robust compilation}\label{sec:MoDLexample}
  % !TeX root = ../main_aplas.tex
%
\label{subsec:many}
To account for the shortcoming of MoDL, we introduce a new, complementary definition that allows reasoning explicitly
on the semantic power of contexts in some target language relative to contexts in a source language.  This definition
acts (in conjunction with MoDL) as an abstract criterion of robust compilers.

For the new definition, we elect to qualify some constructors in $\Sigma$ as
\emph{contexts} constructors so that $\Sigma \triangleq \Cfun \uplus \Pfun$
where $\Cfun$ defines the constructors for contexts and $\Pfun$ for all the
rest.
We also assume that the GSOS law $\rho: \Sigma(\Id \times B) \natt \Sigma^{*}B$ respects this ``logical
partition'' of $\Sigma$ in that $\rho = [ B\;i_1 \circ \rho_1, ~\rho_2 ]$ where $\rho_1: \Cfun(\Id \times B) \natt B
\Cfun^{*}$ and $\rho_2: \Pfun(\Id \times B) \natt B \Sigma^{*}$.
\begin{restatable}[MMoDL]{definition}{mmodldef}\label{defn:robustcriterion}
A many layers map of distributive laws (MMoDL) between $\src{\rho} :
\src{\Sigma} (\Id \product \src{B}) \natt \src{B}\SigmastarS$ and $\trg{\rho} :
\trg{\Sigma} (\Id \product \trg{B}) \natt \trg{B}\SigmastarT$ is given by
natural transformations $b: \src{B} \natt
\trg{B}$ and $t: \trg{\Cfun} \natt \src{\Cfun}^{*}$ making the following commute:
\begin{center}
  \begin{tikzcd}[ampersand replacement=\&]
    \trg{\Cfun }\src{\Sigma}(\Id \times \src{B})
    \arrow{rr}{\trg{\Cfun }^{*}(\src{\Sigma}\pi_{1}, \src{\rho})}
    \arrow[d, "{\trg{\Cfun }^{*}(\src{\Sigma}\pi_{1}, \src{\rho})}"]
    \&\& \trg{\Cfun }(\Id \times \src{B})\src{\Sigma}^{*}
    \arrow{r}{t}
    \& \src{\Cfun }^{*}(\Id \times \src{B})\src{\Sigma}^{*}
    \arrow[r, "\src{\rho}_{1}"]
    \& \src{B}\src{\Cfun }^{*}\SigmastarS
    \arrow[d, "b"]
    \\
    \trg{\Cfun }(\Id \times \src{B})\src{\Sigma}^{*}
    \arrow{rr}{\trg{\Cfun }(\Id \times b)}
    \&\& \trg{\Cfun }(\Id \times \trg{B})\src{\Sigma}^{*}
    \arrow{r}{\trg{\rho}_{1}}
    \&  \trg{B}\trg{\Cfun }^{*}\SigmastarS
    \arrow[r, "{\trg{B}t^{*}}"]
    \& \trg{B}\src{\Cfun }^{*}\SigmastarS
  \end{tikzcd}
\end{center}
% \begin{center}
%   \begin{tikzcd}
%     %
%     {\trg{H}\src{\Sigma}(\Id \times \src{B})}
%     \arrow{d}{\trg{H}^{*}(\src{\Sigma}\pi_{1}, \src{\rho})}
%     \arrow["{\trg{H}^{*}(\src{\Sigma}\pi_{1}, \src{\rho})}", from=1-1, to=1-4]
%     %
%     &&& {\trg{H}(\Id \times \src{B})\src{\Sigma}^{*}} \arrow{d}[right] {\trg{H}(\Id
%       \times b)} \\
%     %
%     {\trg{H}(\Id \times \src{B})\src{\Sigma}^{*}} \arrow{d}{t}
%     %p
%     &&& {\trg{H}(\Id \times \trg{B})\src{\Sigma}^{*}}
%     \arrow{d}{\trg{\rho}_{1}} \\
%     %
%     {\src{H}^{*}(\Id \times \src{B})\src{\Sigma}^{*}}
%     \arrow[d, "\src{\rho}_{1}"]
%     %
%     &&& {\trg{B}\trg{H}^{*}\SigmastarS}
%     \arrow[d, "{\trg{B}t^{*}}"] \\
%     {\src{B}\src{H}^{*}\SigmastarS} \arrow[rrr, "b"]
%     %
%     &&& {\trg{B}\src{H}^{*}\SigmastarS}
%     %
%   \end{tikzcd}
% \end{center}
\end{restatable}
The top-left object, $\trg{\Cfun}\src{\Sigma}(\Id \times \src{B})$, represents a target context which is filled with
some source term, whose subterms exhibit some source behavior.  In both paths, the plugged source terms are initially
evaluated w.r.t. the source semantics.  On the upper path, we first \emph{back-translate}~\cite{devriese2016fully} the
target context using $t$, then we run the resulting program w.r.t.~the source semantics ($\src{\rho}_{1}$), and finally
we translate the resulting behavior back to the target via $b$.
Instead, in the lower path we first translate the resulting behavior through
$\trg{\Cfun}(\Id \times b)$, then we let the target context observe
% i.e. go through the \T semantics
%
($\trg{\rho}_{1}$), and finally we back-translate \emph{the behavior} via $\trg{B}t^{*}$.

\noindent
To relate MMoDL with \rhptau, we formulate the latter in the framework of MOS.
Recall (see~\autoref{rmk:rhpbk}) that \rhptau holds if there exists a \emph{back-translation} map $\bk$ that for every
target context $\trg{C_T}$ and program $\src{P}$, produces a source context $\bk(\trg{C_T}, \src{P}) = \src{C_s}$ such
that $ \behT{\linkT{C_T}{P}} = \tau(\behS{\linkS{C_s}{P}})$.

\begin{remark}[(Abstract) \rhptau]\label{rmk:rhpabstract}
  For $\tau:\src{Z} \to \trg{Z}$, a compiler $\comp{\cdot}$ is $\rhptau$ iff there exists $\bk$ such that
  \[ \tau \circ \src{f} \circ \src{plug} \circ \bk =
  \trg{f} \circ \trg{plug} \circ \mathit{id} \times \comp{\cdot}, \]
where $f : A \to Z$ associates to every program its behaviors as specified by $\rho$ (see~\autoref{rmk:fAZ}) and
$\mathit{plug}$ is the operation of plugging a term into a context.
%% \mnote{Shall we clarify that $bk$ is the ``concrete counterpart'' of $t$ and not $t$ itself?}
%% \cnote{Gave defn of $bk$ in the ``sketch''. It is $t^{*}_{\emptyset}$}
\end{remark}
We are now ready to state our second contribution, namely that the pairing of a
MoDL $(s,b)$ and a MMoDL $(t,b)$ gives an (abstract) \rhptau compiler.
% It is now time to state our second contribution, that consists in showing that
% for a pairing of a MoDL $(s,b)$ and a MMoDL
% $(t,b)$, the compiler induced by $s$,\IE~$\comp{\cdot} = s^{*}_{\emptyset}$ is (abstract) \rhptau.
%
%
%% In~\autoref{ex:runningex} $b$, and therefore $d := b^{\infty}_{\top}$, is just an
%% embedding.
%
%
%
%% %
%% Given a pairing of a MoDL $(s,b)$ and a MMoDL $(t,b)$, the following theorem
%% asserts that compiler $\comp{\cdot} = s^{*}_{\emptyset}$ is \rhptau, where
%% $\tau$ is the coinductively defined translation of behaviors.
%Notice that since MMoDL extends MoDL, we can consider the $\tau$ induced by the $b$ component of a MMoDL, in this case
%it's possible to show that the compiler $\comp{\cdot} = s^{*}_{\emptyset}$ is not only \fac but also \rhptau.

\begin{restatable}[MMoDL imply \rhptau]{theorem}{mmodlrhp}\label{thm:robustSat}
 Let $s: \src{\Sigma} \natt \SigmastarT$, $b: \src{B} \natt \trg{B}$ and $t: \trg{\Cfun} \natt \src{\Cfun}$ such that
 $(s,b)$ and $(t,b)$ are (respectively) a MoDL and a MMoDL from $\src{\rho} : \src{\Sigma} (\Id \product \src{B}) \natt
 \src{B}\SigmastarS$ to $\trg{\rho} : \trg{\Sigma} (\Id \product \trg{B}) \natt \trg{B}\SigmastarT$.
 The compiler $\comp{\cdot} = s^{*}_{\emptyset}$ is (abstract) \rhptau
 for $\tau = b^{\infty}_{\top}$ coinductively induced by $b$.
\end{restatable}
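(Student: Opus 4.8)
The plan is to instantiate the back-translation characterization of (abstract) \rhptau from~\autoref{rmk:rhpabstract}: I must exhibit a back-translation $\bk$ with $\tau\circ\src{f}\circ\src{plug}\circ\bk = \trg{f}\circ\trg{plug}\circ(\mathit{id}\times\comp{\cdot})$, where $\tau = b^{\infty}_{\top}$. The natural candidate is the back-translation induced by $t$: on a pair consisting of a target context $\trg{C_T}$ and a source program $\src{P}$, leave $\src{P}$ untouched and send $\trg{C_T}$ to the source context $t^{*}(\trg{C_T})$ obtained by extending $t:\trg{\Cfun}\natt\src{\Cfun}^{*}$ to context-trees; in particular $\bk$ does not depend on $\src{P}$. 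Both sides of the equation are then maps from the set $D$ of such pairs into the final $\trg{B}$-coalgebra $\trg{Z} = \trg{B}^{\infty}\top$, so I would discharge the equality by a finality (coinduction) argument.

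First I would equip $D$ with two one-step dynamics sharing this carrier. Let $\sigma: D \to \src{B}D$ be obtained by executing, for one step and according to $\src{\rho}$ --- using the logical partition $\src{\rho} = [\src{B}i_{1}\circ\src{\rho}_{1},\src{\rho}_{2}]$ so that $\src{\rho}_{1}$ drives the context layer --- the source program $\src{P}$ inside the back-translated context $\bk(\trg{C_T})$; and let $\theta : D \to \trg{B}D$ be the analogous target dynamics, executing $\comp{\src{P}} = s^{*}_{\emptyset}(\src{P})$ inside $\trg{C_T}$ according to $\trg{\rho}$. Here MoDL is already needed to see that $\theta$ is well defined, \IE that a one-step successor of a target context filled with a compiled source term is again of that shape (with the program part tracked through $s^{*}$). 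One checks that $\src{plug}\circ\bk$ and $\trg{plug}\circ(\mathit{id}\times\comp{\cdot})$ are coalgebra morphisms from $(D,\sigma)$ and $(D,\theta)$ into the operational models of $\src{\rho}$ and $\trg{\rho}$; composing with $\src{f}$ and $\trg{f}$, the map $\src{f}\circ\src{plug}\circ\bk$ is the unique $\src{B}$-coalgebra morphism $(D,\sigma)\to\src{Z}$ and $\trg{f}\circ\trg{plug}\circ(\mathit{id}\times\comp{\cdot})$ the unique $\trg{B}$-coalgebra morphism $(D,\theta)\to\trg{Z}$. Since $b^{\infty}_{\top}$ turns the morphism induced by a $\src{B}$-coalgebra $\xi$ into the one induced by the $\trg{B}$-coalgebra $b\circ\xi$ (this is exactly the ``translation of behaviors'' $d = b^{\infty}_{\top}$), the composite $\tau\circ\src{f}\circ\src{plug}\circ\bk$ is the unique $\trg{B}$-coalgebra morphism $(D, b_{D}\circ\sigma)\to\trg{Z}$. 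Hence, by finality of $\trg{Z}$, the theorem reduces to the single equation $b_{D}\circ\sigma = \theta$ of $\trg{B}$-coalgebra structures on $D$.

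This is precisely where MoDL and MMoDL combine. Unfolding $\theta$: one step of $\comp{\src{P}}$ inside $\trg{C_T}$ amounts to taking the one-step behavior of $\comp{\src{P}}$ under $\trg{\rho}$, letting $\trg{C_T}$ observe it via $\trg{\rho}_{1}$, and repackaging the successor (a target context around a compiled term) via $\trg{B}t^{*}$. The MoDL square for $(s,b)$ says that the one-step behavior of $\comp{\src{P}} = s^{*}_{\emptyset}(\src{P})$ under $\trg{\rho}$ is the $b$-image of the one-step behavior of $\src{P}$ under $\src{\rho}$; substituting this turns the $\theta$-path into ``translate the one-step source behavior of $\src{P}$ via $\trg{\Cfun}(\Id\times b)$, let $\trg{C_T}$ observe through $\trg{\rho}_{1}$, then repackage via $\trg{B}t^{*}$''. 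Unfolding $b_{D}\circ\sigma$ on the other side gives ``let $\bk(\trg{C_T}) = t^{*}(\trg{C_T})$ observe the one-step source behavior of $\src{P}$ through $\src{\rho}_{1}$, then translate via $b$''. These are exactly the lower and upper composites in the square of~\autoref{defn:robustcriterion}, instantiated at the component $\trg{\Cfun}\src{\Sigma}(\Id\times\src{B})$ that is reached when the plugged source term is unfolded once; so MMoDL yields $b_{D}\circ\sigma = \theta$ and the theorem follows.

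I expect the main obstacle to be bookkeeping rather than ideas: pinning down $\bk$, $\mathit{plug}$, $\sigma$ and $\theta$ at the categorical level, and --- the genuinely delicate point --- showing that the \emph{single-layer} square of~\autoref{defn:robustcriterion} suffices for the coinduction even though the successor configurations are arbitrarily nested context-trees filled with (compiled) source terms. This is the familiar step of lifting a GSOS-style one-step law to its operational model: it requires induction on the syntax captured by $\SigmastarS$ (and $\src{\Cfun}^{*}$), together with naturality of $b$, $s^{*}$ and $t^{*}$, so that the one-layer compatibility propagates and the coinduction closes. The $b$-preservation of behavior coming from MoDL (cf.~\autoref{cor:coherencefac}) is what makes the ``program side'' of every step match, leaving MMoDL to account for the ``context side''.
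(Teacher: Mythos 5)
Your overall strategy matches the paper's: the back-translation is $t^{*}_{\emptyset}\times\mathit{id}$, the equation of~\autoref{rmk:rhpabstract} is discharged by finality of $\trg{Z}$, MoDL accounts for the program side and MMoDL for the context side, and the single-layer MMoDL square must be lifted to arbitrarily nested configurations by induction on the syntax --- which is exactly the content of the paper's key lemma (\autoref{th:bisim}).

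The step that does not typecheck as written is your choice of intermediate carrier. You place two coalgebra structures $\sigma: D\to\src{B}D$ and $\theta: D\to\trg{B}D$ on $D=\trg{C}\times\src{A}$ and reduce the theorem to $b_{D}\circ\sigma=\theta$. But a one-step successor of a term plugged into a context is a nested context-tree around a term --- an element of $\src{\Cfun}^{*}\src{A}$ (resp.\ $\trg{\Cfun}^{*}\trg{A}$), since $\rho_{1}$ has codomain $B\Cfun^{*}$ --- and there is in general no way to re-package it as a pair (target context, source program): in particular the evolved source context need not lie in the image of $t^{*}$, so $\sigma$ is not a coalgebra on $D$. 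The paper instead passes from $\trg{C}\times\src{A}$ to $\trg{\Cfun}^{*}\src{A}$ via the cross-language plug $m$ (\autoref{defn:crossplug}) and equips that carrier with a \emph{single hybrid} $\trg{B}$-coalgebra $\trg{\rho}_{1}\circ\trg{\Cfun}^{*}(\id\times b)\circ\trg{\Cfun}^{*}(\id,\src{h})$ (\autoref{rmk:Bcoalgebra}): run the source program, translate its behaviour by $b$, let the target contexts observe. Your equality of dynamics is then replaced by two homomorphism statements out of this one coalgebra: $\src{a}_{1}^{*}\circ t^{*}$ (back-translate and collapse) is a $\trg{B}$-coalgebra homomorphism onto $b\circ\src{h}$ --- this is \autoref{th:bisim}, proved by the induction on context nesting you anticipate, with the MMoDL square of~\autoref{defn:robustcriterion} supplying the one-layer step --- and $\trg{a}_{1}^{*}\circ\trg{\Cfun}^{*}\comp{\cdot}$ (compile and collapse) is a homomorphism onto the target operational model, which follows from $(s,b)$ being a MoDL. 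Finality of $\trg{Z}$ then closes the bottom of the diagram, and two structural lemmas (\autoref{lem:blue}, \autoref{lem:purple}) relate $m$ to the two plugging operations. So the ideas in your proposal are the right ones; the repair is to replace your two dynamics on $\trg{C}\times\src{A}$ by the single $\trg{B}$-dynamics on $\trg{\Cfun}^{*}\src{A}$ and to phrase the MoDL/MMoDL inputs as homomorphism conditions rather than as an equality of coalgebra structures.
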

\begin{proof}[Proof (Sketch)]
  The back-translation $\bk := t^{*}_{\emptyset} \times \mathit{id}$ satisfies the equation
  in~\autoref{rmk:rhpabstract} (details in~\refOrCite{sec:mmodlrhp}).
\end{proof}
Before fixing the compiler from~\autoref{ex:runningex} to make it satisfy both MoDL (\autoref{def:MoDL}) and MMoDL
(\autoref{defn:robustcriterion}), let us see why the back-translation mapping both target contexts to the identity
source context \src{\hole{\cdot}} is not a MMoDL.
Let $\texttt{v} \in \varH$ be a private variable, on the upper path of~\autoref{defn:robustcriterion} we have\\
\begin{tikzcd}
  {\trg{\lceil} \texttt{v}\src{:=} 42 \trg{\rceil}}
  \arrow{rr}{\trg{\Cfun }^{*}(\src{\Sigma}\pi_{1}, \src{\rho})}
  &&
  {\trg{\lceil} \checkmark \trg{\rceil}, \lambda s. \langle s_{[v <- 42]}, \mathcal{H } \rangle}
  \arrow{rr}{t}
  &&
  {\src{[ \com{\checkmark} ]}, \ldots \mathcal{H }}
  \arrow[r, "\src{\rho}_{1}"]
  &
  {\checkmark, \ldots \mathcal{H }}
  \arrow[r, "b"]
  &
  {\checkmark, \ldots \mathcal{H }}
\end{tikzcd}
\noindent
Note how the identity context fails to report \bang.
On the lower path, we have instead\\
\begin{tikzcd}
  {\trg{\lceil} \texttt{v}\src{:=} 42 \trg{\rceil}}
  \arrow{rr}{\trg{\Cfun }^{*}(\src{\Sigma}\pi_{1}, \src{\rho})}
  &&
  {\trg{\lceil} \checkmark \trg{\rceil}, \lambda s. \langle s_{[v <- 42]}, \mathcal{H } \rangle}
  \arrow{rr}{\trg{\Cfun }(\Id \times b)}
  &&
  {\trg{\lceil} \checkmark \trg{\rceil}, \ldots\mathcal{H }}
  \arrow{r}{\trg{\rho}_{1}}
  &
  {\checkmark, \ldots\bang}
  \arrow[r, "{\trg{B}t^{*}}"]
  &
  {\checkmark, \ldots\bang}
\end{tikzcd}
Here, it is evident that the context $\trg{\lceil}\cdot\trg{\rceil}$ ``picks
up'' $\mathcal{H}$ and reports \bang, unlike \src{\hole{\cdot}}.
% This diagram shows what happens when \T context $\trg{\lceil}-\trg{\rceil}$
% observes a high-assignment as opposed to the identity context in the \S. On the
% upper path, we notice that the \T context exposes the $\bang$. On the lower
% path, the identity context fails to do so.
%
%
\begin{example}[\cref{ex:runningex}, revisited]\label{sec:secureex} We now show
how to fix the compiler from~\cref{ex:runningex} by making it \rhptau for a
suitable $\tau$.
% We shall now modify language \T from~\cref{ex:runningex},
% define a new $\comp{\cdot}$ and map $\tau$ so that $\comp{\cdot}$ is \rhptau.
For that, we first need to slightly modify the language \T.
The idea is that variable assignments in \T should now be \emph{sandboxed}, so that the interactions with the context
\trg{\lceil\;\cdot\;\rceil} do not expose sensitive information.
Formally, we extend the algebraic signature of $\T$ with a constructor for sandboxing assignments, \IE~$\trg{\Sigma}
\uplus (E \times \Id)$, so that \T terms are generated by grammar
\begin{grammar}
  <\trg{P}> ::= \trg{skip} | \texttt{v} \trg{:=} <expr> | \trg{\textcolor{black}{\langle} P \textcolor{black}{\rangle} ;
    \textcolor{black}{\langle} P \textcolor{black}{\rangle}} | \trg{while} <expr> \trg{\textcolor{black}{\langle} P
    \textcolor{black}{\rangle}} | \trg{\lfloor} \texttt{v} \trg{:=} <expr> \trg{\rfloor}
\end{grammar}
where the semantics of \trg{\lfloor\;\cdot\;\rfloor} is described in~\cref{fig:trgsemsecure}.
\begin{figure}
  \vspace*{-0.5cm}
  \begin{gather*}
    \inference[\trg{sb1}]{\retsv{s, \trg{p}}{s\pr}{\mathcal{H}}}{\rets{s, \trg{\sandboxT{p}}}{s\pr}{}} \quad
    \inference[\trg{sb2}]{\goesv{s, \trg{p}}{s\pr, \trg{p\pr}}{\mathcal{H}}}{\goesv{s, \trg{\sandboxT{p}}}{s\pr ,
        \trg{p\pr}}{}}
  \end{gather*}
  \caption{Rules extending the semantics of \T.}\label{fig:trgsemsecure}
  \vspace*{-0.3cm}
\end{figure}
We can now define the new (\IE~fixed) compiler $\comp{\cdot}$ and the
appropriate map $\tau$, so that $\comp{\cdot}$ is \rhptau.
Both $\comp{\cdot}$ and $\tau$ are determined by the natural transformations $s$,
$t$, and $b$, such that $(s,b)$ is a MoDL and $(t,b)$ is a MMoDL.
The natural transformation $s: \src{\Sigma} \natt (\trg{\Sigma}\uplus (E \times
\Id))^{*}$, and therefore the inductively defined compiler $\comp{\cdot}
\triangleq s^{*}_{\emptyset}$, wraps assignments in the sandbox
$\trg{\lfloor\;\cdot\;\rfloor}$,
\IE~$\comp{\textcolor{black}{\texttt{v}}\;\src{:=}\;\textcolor{black}{\texttt{e}}}
= \sandboxT{\texttt{v}\;\trg{:=}\;\texttt{e}}$ and acts as the identity on other
terms.
%
%% \begin{flalign*}
%%   & \comp{skip} = \trg{skip} \\
%%   & \comp{\textcolor{black}{\texttt{v}}\;\src{:=}\;\textcolor{black}{\texttt{e}}} = \sandboxT{\texttt{v}\;\trg{:=}\;\texttt{e}} \\
%%   & \comp{p;q} = \comp{p}\trg{;}\comp{q} \\
%%   & \comp{while~\textcolor{black}{\texttt{e}}~p} = \trg{while}~\texttt{e}~\comp{p}
%% \end{flalign*}
%
%
The natural transformation $t: \trg{\Cfun} \natt \src{\Cfun}^{*}$ maps every \T
context to the identity context $\src{\hole{\cdot}}$.
Finally, the translation of behaviors $b: \src{B} \natt \trg{B}$ erases the occurrences of $\mathcal{H}$, implying that
the compiled terms are not expected to report changes in high variables.

Recall that the diagram from~\autoref{defn:robustcriterion} failed to commute
for~\autoref{ex:runningex}, because $(s, b)$ being a MoDL imposed $b$ to not
erase any occurrences of $\mathcal{H}$.
The same diagram for the new \T language and natural transformations $s$, $b$, and
$t$ now commutes.
More specifically, in the upper path we have\\
\begin{tikzcd}
  {\trg{\lceil} \texttt{v}\src{:=} 42 \trg{\rceil}}
  \arrow{rr}{\trg{\Cfun }^{*}(\src{\Sigma}\pi_{1}, \src{\rho})}
  &&
  {\trg{\lceil} \checkmark \trg{\rceil}, \lambda s. \langle s_{[v <- 42]}, \mathcal{H} \rangle}
  \arrow{r}{t}
  &
  {\src{[ \com{\checkmark} ]}, \ldots\mathcal{H}}
  \arrow[r, "\src{\rho}_{1}"]
  &
  {\checkmark, \ldots\mathcal{H}}
  \arrow[r, "b"]
  &
  {\checkmark, \ldots}
\end{tikzcd}\\
while in the lower path we get\\
\begin{tikzcd}
  {\trg{\lceil} \texttt{v}\src{:=} 42 \trg{\rceil}}
  \arrow{rr}{\trg{\Cfun }^{*}(\src{\Sigma}\pi_{1}, \src{\rho})}
  &&
  {\trg{\lceil} \checkmark \trg{\rceil}, \lambda s. \langle s_{[v <- 42]},
  \mathcal{H } \rangle} \arrow{r}{\trg{\Cfun }(\Id \times b)}
  &
  {\trg{\lceil} \checkmark \trg{\rceil},\ldots}
  \arrow{r}{\trg{\rho}_{1}}
  &
  {\checkmark, \ldots}
  \arrow[r, "{\trg{B}t^{*}}"]
  &
  {\checkmark, \ldots}
\end{tikzcd}\\
% \begin{center}
% \begin{tikzcd}
% %
%             {\trg{\lceil} \texttt{v}\src{:=} 42 \trg{\rceil}, \lambda s.~t(s)}
%             \arrow[r]{above}{}
%             \arrow[d]{left}{} &
%             %
%             {\trg{\lceil} \checkmark \trg{\rceil}, \lambda s. \langle s_{[v <- 42]}, \checkmark \rangle}
%             \arrow[d]{right}{}  \\
%             %
% 	    {\trg{\lceil} \checkmark \trg{\rceil}, \lambda s. \langle s_{[v <- 42]}, \mathcal{H} \rangle}
%             \arrow[d]{left}{} &
%             %
%             {\trg{\lceil} \checkmark \trg{\rceil}, \lambda s. \langle s_{[v <- 42]}, \checkmark \rangle}
%             \arrow[d]{right}{} \\
%             %
%             {\src{[} \checkmark \src{]}, \lambda s. \langle s_{[v <- 42]}, \mathcal{H} \rangle}
%             \arrow[d]{right}{} &
%             %
%             {\checkmark, \lambda s. \langle s_{[v <- 42]}, \checkmark \rangle}
%             \arrow[d]{right}{} \\
%             %
%             {\checkmark, \lambda s. \langle s_{[v <- 42]}, \checkmark \rangle}
%             \arrow[r]{above}{} &
%             {\checkmark, \lambda s. \langle s_{[v <- 42]}, \checkmark \rangle}
%         %
% \end{tikzcd}
% \end{center}
%
We point the reader interested to~\refOrCite{sec:detailsecureex} for more details
in showing that the above $(s, b)$ is a MoDL and that $(t, b)$ is a MMoDL.

Hereafter, we discuss one of the benefits of the abstract definitions presented
so far, namely that we can easily compute $\tau$, and immediately establish if
programs that robustly satisfy \hS{NI} (noninterference in \S) are compiled to
programs that robustly satisfy \hT{NI}.
In order to do so, we need to connect $\src{Z}$ and $\trg{Z}$ to traces and
hyperproperties of \S and \T.
Elements of $Z$ are functions that assign to every $s \in S$ a new state $s'$
and \emph{maybe} an extra symbol like $\bang$ or $\mathcal{H}$, and a
continuation, \IE~another function of the same type.
Traces are instead sequences of stores possibly exhibiting the extra symbols
$\mathcal{H}$ and $\bang$.
It is easy to show (see~\refOrCite{lem:hpropZ}) that every trace corresponds to an element of $Z$ -- the function that
returns the head of the trace and continues as the tail of the same trace -- and that every function in $Z$ corresponds
to a set of traces -- one trace for every fixed $s \in S$.  Thus, we can prove that $\tau$ maps (the set of functions in
$\src{Z}$ corresponding to) \hS{NI} to a subset of (the set of functions corresponding to) $\hT{NI}$, \IE~ the compiler
\comp{\cdot} preserves robust satisfaction.
\end{example}
%
%%% Local Variables:
%%% coding: utf-8
%%% mode: latex
%%% TeX-master: "../main_aplas"
%%% End:

%% %
\section{Related work}\label{sec:relatedwork}
  % !TeX root = ../main_aplas.tex

In this section, we discuss related work regarding origins and applications of full abstraction, trace based
criteria, MoDL and relevant proof techniques.

\emph{Full abstraction}
was introduced to relate the operational and the denotational semantics of programming languages~\cite{plotkin1977lcf}.
A denotational semantics of a language is said to be fully abstract w.r.t. an operational one for the same language
\emph{iff} the same denotation is given to contextually equivalent terms, \IE~those terms that result the \emph{same}
when evaluated according to the operational semantics.
%
% % Two terms are contextually equivalent if, when plugged in the same context, their evaluation according to the %
% operational semantics leads to the \emph{``same result''}.
%
Common choices to establish when the result of the evaluation is \emph{the same}, and hence to define contextual
equivalence, are \emph{equi-convergence} and \emph{equi-divergence} (\EG~in~\cite{mitchell1993abstraction,bowman2015noninterference,
patrignani2021semantic,busi2020provably,jacobs2021fully}).
Notice that there is no loss of generality with these choices, if (and only if!) contexts are powerful
enough~\cite{mitchell1993abstraction}, \EG when all inputs can be thought as part of the context, and the context itself
may select one final value as the result of the execution or diverge.

Fully abstract translations as in~\autoref{defn:originalfac} have been adopted for comparing expressiveness of languages
(see, \EG~the works by~\citet{mitchell1993abstraction} and~\citet{patrignani2021semantic}), but~\citet{gorla2016full}
showed that they may lead to false positive results.
The interested reader can find out more in~\refOrCite{sec:rhpexpr}, where we also
sketch how to use \rhptau for expressiveness comparisons.

\emph{Full abstraction and secure compilation}
\citet{abadi1999protection} originally proposed to use full abstraction to preserve security properties in translations
from a source language $L_1$ to a target one $L_2$.  A fully abstract translation or compiler preserves and reflects
equivalences, and can therefore be a way to preserve security properties when these are expressed as equivalences.
%
%
%% This approach implicitly assumes to be able to define contextual equivalence so that it captures the security properties
%% of interest.
%
Remarkable examples from the literature are given by~\citet{bowman2015noninterference,busi2020provably}
and~\citet{skorstengaard2019stktokens}.
In the first two works the authors model contexts so that contextual equivalence captures (forms of) noninterference and
preserve it through a fully abstract translation.
\citet{skorstengaard2019stktokens} consider a source language with well-bracketed control flow (WBCF) and local state
encapsulation (LSE), then model target contexts so that these two properties are captured by contextual equivalence and,
they exhibit a fully abstract translation so that both WBCF and LSE are guaranteed also in the target.
We stress the fact that, all security properties that are not captured by contextual equivalence are not necessarily
preserved by a fully abstract compiler, thus allowing for counterexamples similar to~\autoref{ex:introex}.
Finally, it is worth noting that fully abstract compilation does not prevent source programs to be insecure, nor
suggests how to fix them, quoting~\citet{abadi1999protection}:
\begin{quote}
  An expression of the source language $L_1$ may be written in a silly,
  incompetent, or even malicious way.
  For example, the expression may be a program that broadcasts some sensitive
  information—so this expression is insecure on its own, even before any
  translation to $L_2$.
  Thus, full abstraction is clearly not sufficient for security [\ldots]
\end{quote}

\noindent
\emph{Beyond full abstraction}
Several definitions of ``well-behaved translations'' exist, depending both on the scenario and on the properties one
aims to preserve during the translation.
For example, if the goal is to preserve functional correctness, then it is natural to require the compiled program to
\emph{simulate} the source one~\cite{morris1973advice}.
This can be expressed both as a relation between the operational semantics of the source and the target (see for
example~\cite{sabry1997reflection, melton1986galois, watanabe2002well}), or extrinsically as a relation between the
execution traces of programs before and after compilation~\cite{besson2019verified, tan2019verified, abate2019trace}.
%
%% Here, we describe some related work regarding both a trace based
%% approach for correct and secure compilation and the more abstract, categorical
%% approach based on maps between distributive laws.
%
\emph{Trace based criteria for compiler correctness}
The CompCert~\cite{leroy2006formal,besson2019verified} and CakeML~\cite{tan2019verified} projects are milestones in the
formal verification of compilers.
Preservation of functional correctness can be expressed in both cases in terms of execution
traces~\cite{abate2019trace}.
For the CompCert compiler, executing \comp{P} w.r.t. the target semantics yields the same observable events as
executing \src{P} w.r.t.\ the source semantics, as long as \src{P} does not encounter an undefined behavior.  Similarly,
CakeML ensures that executing \comp{P} w.r.t.\ the target semantics yields the same observable events as
executing \src{P} w.r.t.\ the source semantics, as long as there is enough space in target memory.  In both cases,
correctness is proven by exhibiting a simulation between \comp{P} and \src{P}.

\emph{Trace based criteria for secure compilation}
Similarly to what happens for functional correctness, relations between the execution traces of a program and of its
compiled version, can be used to express preservation of noninterference through
compilation~\cite{murray2016compositional,barthe2018secure,barthe2019formal}.
The simulation-based techniques introduced in CompCert sometimes suffice also to show the preservation of
noninterference, \EG~when the source and the target semantics are equipped with a notion of
leakage~\cite[Sections~5.2-5.4]{barthe2019formal}.
However, in more general cases a stronger, \emph{cube-shaped simulation} is needed (see~\cite[Section~5.5]{barthe2019formal},
and~\cite{murray2016compositional, barthe2018secure}).
\citet{stewart2015compositional} propose a variant of CompCert that also gives some guarantees w.r.t. source contexts,
and their compilation in the target.
%
%% , as long as they have the same power in the
%% source and in the target languages.
%
Still, this does not guarantee security against \emph{arbitrary} target contexts, that can be strictly stronger than
source ones. \citet{abate2019journey,abate2019trace} propose a family of criteria with the goal of preserving
satisfaction of (classes of) security properties against arbitrary contexts.
Also, they show that their criteria can be formulated in at least two equivalent ways.  The first one explicitly
describes the target guarantees ensured for compiled programs,
%
%% thus making explicit the obligations that one needs to solve in the
%% source to get a certain guarantee in the target.
%
%This allows to give an explicit description
%
for example which safety properties are guaranteed for programs written in unsafe languages and compiled according to
the criterion proposed by~\citet{abate2018good} (see their Appendix A).
The second way is instead more amenable to proofs, \EG~by enabling proofs by back-translation~\citet[\figurename~4]{abate2018good}.

\noindent
\emph{Maps of Distributive Laws (MoDL)}
Mathematical Operational Semantics (MOS) and distributive laws ensure \emph{well-behavedness} of the operational
semantics of a language while also providing a formal description for it.
Such semantics have been given for languages with algebraic effects~\cite{DBLP:journals/entcs/Abou-SalehP11} and for
stochastic calculi~\cite{klin2008structural}.
In their biggest generality distributive laws are defined between monads and comonads~\cite{klin2011bialgebras}, but
it is often convenient to consider the slightly less general GSOS laws that correspond bijectively to GSOS
rules~\cite{aceto2001structural,plotkin1981structural,klin2011bialgebras}.
%
%% or distributive laws between endofunctors of
%% the same category~\cite{watanabe2002well,turi1997towards,klin2011bialgebras}.
%
%% Maps of distributive laws MoDL allow to relate two semantics, thus allowing to reason about
%% translations~\cite{watanabe2002well, tsampas2020categorical}. In particular~\citet{tsampas2020categorical} first noticed
%% that MoDL can be used also for secure translations.

\noindent
\emph{Proof techniques} for fully abstract compilation include both cross-language logical relations between source and compiled
programs~\cite{bowman2015noninterference,patrignani2019formal, skorstengaard2019stktokens} and back-translation of
target contexts into source ones~\cite{patrignani2019formal,devriese2016fully,busi2020provably}.
The latter technique sometimes exploits information from execution traces~\cite{devriese2016fully}, and can be adapted
also to some of the robust criteria of~\citet{abate2019journey}. Ongoing work is aiming to formalize the
back-translation technique needed to prove some of the robust preservation of safety (hyper)properties in the Coq proof
assistant~\cite{abate2018good, nanopass}.
The best results in mechanization of secure compilation criteria have been achieved for the criteria that can be proven
via simulations, especially when extending the CompCert proof scripts, \EG~\cite{barthe2019formal}.
The complexity of many proofs is relatively contained as they show a \emph{forward} simulation --- the source program
simulates the one in the target --- and ``flip'' it into a \emph{backward} one --- the compiled program simulates the
source one --- with a general argument.
We are not aware of mechanized proofs for MoDL, but we believe it would be convenient to first express maps between GSOS
laws as relations between GSOS rules (see also~\autoref{sec:conclusion}).
\section{Conclusions and future work}\label{sec:conclusion}
  \noindent
The scope of this work has been to clarify the guarantees provided by criteria for secure compilation, make them
explicit and immediately accessible to users and developers of (provably) secure compilers.
We investigated the relation between fully abstract and robust compilation,
%
% providing a better understanding of the security guarantees they ensure on compiled programs. In particular we
%
provided an explicit description of the hyperproperties robustly preserved by a fully abstract compiler,
%
%% We proved that for a fully abstract compiler $\comp{\cdot}$ and a program $\src{P}$ that robustly satisfies a
%% hyperproperty $\src{H}$, $\comp{P}$ robustly satisfies $\tilde{\tau}(\src{H})$, where $\tilde{\tau}$ maps source
%% hyperproperties into target ones.
%
%% In doing that, we have also given a sufficient and necessary condition for $\tilde{\tau}(\src{H})$ to be still
%% meaningful.
%
% We noticed that such $\tilde{\tau}$ is compiler dependent and may require too much information to be of practical use.
%
and noticed that these are not always meaningful, nor of practical utility.
We have therefore introduced a novel criterion that ensures both fully abstract and robust compilation, and such that
the meaningfulness of the hyperproperty guaranteed to hold after compilation can be easily established.
The proposed example shows that our criterion is achievable.

% \vspace*{-0.3cm}
% \paragraph*{Future work}
%
Future work will focus on proof techniques for MoDL and MMoDL that are amenable to formalization in a proof
assistant.  For that we can either build on existing formalizations of polynomial functors as containers, or exploit the
correspondence between GSOS laws and GSOS rules, and characterize MoDL and MMoDL as relations between source and target
rules.
%
%% At that point we expect to be able to develop a proof technique that ignores the category-theoretical
%% requirements, and possibly extend proof techniques existing for other secure compilation criteria
%% (see \autoref{sec:PTrelated}).
%
%
%% Indeed, we would like to --- using our results --- to adapt existing proof
%% techniques used in fully-abstract compilation (such
%% as~\cite{bowman2015noninterference,skorstengaard2019stktokens,devriese2016fully})
%% for proving a compiler robust.
%
Another interesting line of work consists in devising over (under) approximation for the map $\tilde{\tau}$
from~\autoref{thm:facrhp}, and use our~\autoref{cor:tool} to establish whether existing fully abstract compilers
preserve (violate) a given hyperproperty.

  \subsubsection*{Acknowledgements}
  We are grateful to
  Pierpaolo Degano,
  Letterio Galletta,
  Catalin Hritcu,
  Marco Patrignani,
  Frank Piessens, and
  Jeremy Thibault
  for their feedback on early versions of this paper.
  We would also like to thank the anonymous reviewers for their insightful comments and
  suggestions that helped to improve our presentation.

  % This work was partially supported by
  %
  %% Grant that pays Carmine
  Carmine Abate is supported by the European Research Council \url{https://erc.europa.eu/} under ERC Starting Grant SECOMP
  (715753).
  %
  %%% Grant that pays MB
  Matteo Busi is partially supported by the research grant on \emph{Formal Methods and Techniques for Secure Compilation}
  from the Department of Computer Science of the University of Pisa.

% % Bibliography
{\bibliography{biblio}}

% % Appendix
\ifbool{isExtended}
{
  \appendix
  % !TEX root = ../main_aplas.tex
%
% \onecolumn
%
\section{Supplement to~\autoref{sec:comparing}}\label{sec:tech}
\noindent
In this appendix \S and \T denote source and target languages and $\comp{\cdot}$ a compiler from \S to \T.  Let
$\mi{Trace}$ denote the sets of traces; $\prop$ be the set of trace properties; and $\hprop$ of hyperproperties. Most of
our results hold also when source and target traces differ, so that we use $\piS{Trace}$ and $\piT{Trace}$ to denote
them respectively and similarly (hyper)properties. When $\piS{Trace} = \piT{Trace}$ (like in \autoref{cor:tool}) we omit
colors for traces and (hyper)properties.
%
%% As said above, in our setting we can think of \fac as defined as
%% \emph{preservation} (``$\Rightarrow$'') and
%% \emph{reflection} (``$\Leftarrow$'') of equality of trace in arbitrary contexts:
%
\smallskip
Hereafter we consider program equivalence to be equality of traces in arbitrary context and \fac
from~\autoref{sec:comparing} looks as follows:
\begin{align*}
  \fac \equiv \forall \src{P_1} \src{P_2}.~&
       (\forall \src{C_S}. ~\beh{\linkS{C_S}{P_1}} = \beh{\linkS{C_S}{P_2}}) \Rightarrow
                                              \\
      &(\forall \trg{C_T}. ~\beh{\linkT{C_T}{P_1}} = \beh{\linkT{C_T}{P_2}})
\end{align*}
where $\beh{\link{C}{P}} = \myset{~t}{\link{C}{P} \rightsquigarrow t~}$.
When talking about \emph{reflection} of contextual equivalence we refer to the following

\begin{definition}[Reflection of contextual equivalence] \label{defn:reflection}
\comp{\cdot} reflects contextual equivalence if and only if
\begin{align*}
  \forall \src{P_1} \src{P_2}. ~&
       (\forall \trg{C_T}. ~\beh{\linkT{C_T}{P_1}} = \beh{\linkT{C_T}{P_2}}) \Rightarrow
       \\
       &(\forall \src{C_S}. ~\beh{\linkS{C_S}{P_1}} = \beh{\linkS{C_S}{P_2}})
\end{align*}
\end{definition}
Denote by $\hS{Ctx}$ and $\hT{Ctx}$ the classes of source and target contexts and define the relation
\[
  \comprel \subseteq (\hS{Ctx} \to \propS) \mathrel{\times} (\hT{Ctx} \to \propT)
\]
as
\begin{align*}
  \piS{f} \comprel \piT{g} \iff \exists \src{P}.
  ~& (\piS{f} = \lambda \src{C_S}. ~\beh{\linkS{C_S}{P}}) ~\wedge \\
   &  (\piT{g} = \lambda \trg{C_T}. ~\beh{\linkT{C_T}{P}}).
\end{align*}
In accordance with the result by~\citet{parrow2016general}, stating that for any \fac compiler there exists an injective
mapping between the equivalence classes of $\src{\approx}$ and those of $\trg{\approx}$, we prove the following lemma:
\begin{lemma}\label{lem:partialfun}
  If $\comp{\cdot}$ is \fac, then $\comprel$ is a partial function defined over all functions of the form
$\lambda \src{C_S}.~\beh{\linkS{C_S}{P}}$ for some partial source program
$\src{P}$.
The function is injective if \comp{\cdot} also reflects contextual equivalence.
\end{lemma}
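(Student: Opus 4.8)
The plan is to unfold the definition of $\comprel$ and verify the two functionality/injectivity claims directly from \fac (\autoref{defn:fac}) and, for the injectivity part, from reflection of contextual equivalence (\autoref{defn:reflection}). The key observation is that a function of the form $\piS{f} = \lambda\src{C_S}.\,\beh{\linkS{C_S}{P}}$ encodes exactly the source-contextual-equivalence class of $\src{P}$: two programs $\src{P_1},\src{P_2}$ induce the same such function iff $\forall\src{C_S}.\,\beh{\linkS{C_S}{P_1}} = \beh{\linkS{C_S}{P_2}}$, i.e. iff $\src{P_1}\mathrel{\src{\approx}}\src{P_2}$. Dually, $\piT{g} = \lambda\trg{C_T}.\,\beh{\linkT{C_T}{P}}$ encodes the target class of $\comp{P}$.

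First I would argue that $\comprel$ is a partial function, i.e. that $\piS{f}\comprel\piT{g_1}$ and $\piS{f}\comprel\piT{g_2}$ imply $\piT{g_1}=\piT{g_2}$. From the definition, there are source programs $\src{P_1},\src{P_2}$ with $\piS{f} = \lambda\src{C_S}.\,\beh{\linkS{C_S}{P_1}} = \lambda\src{C_S}.\,\beh{\linkS{C_S}{P_2}}$ and $\piT{g_i} = \lambda\trg{C_T}.\,\beh{\linkT{C_T}{P_i}}$. Since $\piS{f}$ agrees on both, we have $\forall\src{C_S}.\,\beh{\linkS{C_S}{P_1}} = \beh{\linkS{C_S}{P_2}}$, so the antecedent of \fac is met; the consequent gives $\forall\trg{C_T}.\,\beh{\linkT{C_T}{P_1}} = \beh{\linkT{C_T}{P_2}}$, i.e. $\piT{g_1}=\piT{g_2}$. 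The domain statement is immediate: by definition $\piS{f}$ is in the domain of $\comprel$ precisely when $\piS{f} = \lambda\src{C_S}.\,\beh{\linkS{C_S}{P}}$ for some $\src{P}$, and such an $\piS{f}$ is related (to $\lambda\trg{C_T}.\,\beh{\linkT{C_T}{P}}$), so $\comprel$ is total on exactly this set of functions.

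Then I would prove injectivity under the extra hypothesis of reflection. Suppose $\piS{f_1}\comprel\piT{g}$ and $\piS{f_2}\comprel\piT{g}$; I must show $\piS{f_1}=\piS{f_2}$. Pick witnessing programs $\src{P_1},\src{P_2}$ with $\piS{f_i} = \lambda\src{C_S}.\,\beh{\linkS{C_S}{P_i}}$ and $\piT{g} = \lambda\trg{C_T}.\,\beh{\linkT{C_T}{P_1}} = \lambda\trg{C_T}.\,\beh{\linkT{C_T}{P_2}}$. The latter equality says $\forall\trg{C_T}.\,\beh{\linkT{C_T}{P_1}} = \beh{\linkT{C_T}{P_2}}$, which is the antecedent of \autoref{defn:reflection}; its consequent yields $\forall\src{C_S}.\,\beh{\linkS{C_S}{P_1}} = \beh{\linkS{C_S}{P_2}}$, i.e. $\piS{f_1}=\piS{f_2}$, as required.

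This argument is essentially a direct unfolding of definitions, so I do not anticipate a serious obstacle; the only subtle point is being careful about what ``$\piS{f}$ determines'' — namely that equality of the functions $\lambda\src{C_S}.\,\beh{\linkS{C_S}{P}}$ is literally the statement $\forall\src{C_S}.\,\beh{\linkS{C_S}{P_1}} = \beh{\linkS{C_S}{P_2}}$ that appears (uncolored, by the convention in this appendix) in the hypotheses and conclusions of \fac and of reflection. Once that identification is made explicit, the functionality claim is exactly one application of \fac and the injectivity claim exactly one application of reflection, which also makes transparent the connection to \cite{parrow2016general}: the partial function $\comprel$ restricts to a well-defined (and, under reflection, injective) map on equivalence classes of $\src{\approx}$ into equivalence classes of $\trg{\approx}$.
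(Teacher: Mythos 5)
Your proposal is correct and follows essentially the same route as the paper's proof: the functionality claim is one application of the preservation direction of \fac to the two witnessing programs, and injectivity is one application of reflection (the paper phrases this contrapositively --- distinct $\piS{f}$'s yield distinct images --- while you argue the direct form, which is logically the same). Your explicit remark that the domain of $\comprel$ is exactly the set of functions $\lambda\src{C_S}.\,\beh{\linkS{C_S}{P}}$ is a small addition the paper leaves implicit, and is immediate from the definition as you say.
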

\begin{proof}
  We first show $\comprel$ is a partial function \IE~that every element of $\hS{Ctx} \to \propS$ is related at most with
  one element of $\hT{Ctx} \to \propT$.  Assume $\piS{f} \comprel \piT{g^1}$ and $\piS{f} \comprel \piT{g^2}$.  By
  definition,
  \begin{flalign*}
    \exists \src{P^1}.~&(\piS{f} = \lambda \src{C_S}. ~\beh{\linkS{C_S}{P^1}}) \\
                       &\wedge (\piT{g^1} = \lambda \trg{C_T}. ~\beh{\linkT{C_T}{P^1}})
  \end{flalign*}
  and
  \begin{flalign*}
    \exists \src{P^2}.~&(\piS{f} = \lambda \src{C_S}. ~\beh{\linkS{C_S}{P^2}}) \\
                       &\wedge \nonumber (\piT{g^2} = \lambda \trg{C_T}. ~\beh{\linkT{C_T}{P^2}}).
  \end{flalign*}
  Notice now that the same $\piS{f}$ appears in the two equations above, thus for an arbitrary
  $\src{C_S}$, $\piS{f} = \beh{\linkS{C_S}{P_1}} = \beh{\linkS{C_S}{P_2}}$.
  So that \[ \forall \src{C_S}. ~\beh{\linkS{C_S}{P_1}} = \beh{\linkS{C_S}{P_2}} \]
  and by (preservation direction of) \fac
  \[ \forall \trg{C_T}.  ~\beh{\linkT{C_T}{P_1}} =  \beh{\linkT{C_T}{P_2}} \]
  meaning that $\piT{g^1}$ and $\piT{g^2}$ are point-wise equal.

  \smallskip
  \noindent
  We now assume that \comp{\cdot} also reflects contextual equivalence (see~\autoref{defn:reflection}), and show
  injectivity of $\comprel$.
  Let
  \[ \piS{f^1} = \lambda \src{C_S}.\ \beh{\linkS{C_S}{P_1}} \quad \piS{f^2} = \lambda \src{C_S}.\ \beh{\linkS{C_S}{P_2}} \]
  with $\piS{f^1} \neq \piS{f^1}$ then, there exists some $\src{C_S}$ such that
  $\beh{\linkS{C_S}{P_1}} \neq \beh{\linkS{C_S}{P_2}}$. By the (contrapositive of) reflection in the definition of \fac
  we deduce there exists $\trg{C_T}$ such that $\beh{\linkT{C_T}{P_1}} \neq \beh{\linkT{C_T}{P_2}}$ which implies
  $\comprel(\piS{f^1}) \neq\ \comprel(\piS{f^2})$ and $\comprel$ is injective. \qedhere
\end{proof}
\citet{abate2019trace} define $\tilde{\tau}: \propS \to \propT$ and then lift it to $\hpropS \to \hpropT$.
Instead, we define $\tilde{\tau}: \hpropS \to \hpropT$ directly.
Moreover, if one is willing to specialize it to trace properties this can be
done by $\tilde{\tau}(\piS{\pi}) = \tilde{\tau}(\{ \piS{\pi} \})$.

\medskip
We can now prove the following
\rteprhp*
\begin{proof}
  By definition, if $\linkS{C_S}{P} ~\satS \hS{H}$ there exists $\pi_{\src{C_S}} \in \hS{H}$ such that
  $ \beh{\linkS{C_S}{P}} = ~\pi_{\src{C_S}} $.
  By~\autoref{lem:partialfun}, $\comprel$ is defined on the following function:
  %
  % \footnote{The existence of the function itself assumes some
  %     facts from classical logic, \EG indefinite description
  %     (\url{https://coq.github.io/doc/master/stdlib/Coq.Logic.IndefiniteDescription.html})}
  %
  %     \cnote{commented this as we are implicitely assuming indefinite
  %     description an axiom of choice already in the definition of $\comprel$}
  \begin{align*}
    & f_{\hS{H}}^{\src{P}} : \hS{Ctx} \to \propS
    \\
    & f_{\hS{H}}^{\src{P}} = \lambda \src{C_S}.\ \pi_{\src{C_S}}
  \end{align*}
  and by definition of $\comprel$, for every target context $\trg{C_T}$
  \[ \beh{\linkT{C_T}{P}} = ((\comprel f_{\hS{H}}^{\src{P}}) ~\trg{C_T}). \]
  Thus, $\comp{P}$ satisfies the following target hyperproperty
  \[ \myset{((\comprel f_{\hS{H}}^{\src{P}}) ~\trg{C_T})}{ \trg{C_T} \in \hT{Ctx}} \]
  in an arbitrary target context $\trg{C_T}$.

  We define
  \begin{flalign*}
    \tautilde(\hS{H}) = \{ ((&\comprel (f_{\hS{H}}^{\src{P}})) \trg{C_T}) ~|~ \\
    & \trg{C_T} \in \hT{Ctx} \wedge
      \src{P} \in \src{Prg_S} \wedge (~\forall \src{C_S}. ~\linkS{C_S}{P} ~\satS \hS{H}) \}
  \end{flalign*}
  %
  %% \[
  %% \tilde{\tau}(\hS{H}) =
  %% \begin{cases}
  %%   \myset{((\comprel (f_{\hS{H}}^{\src{P}})) ~\trg{C_T})}{ \trg{C_T} \in \hT{Ctx}}
  %%   &\text{if } \exists \src{P}. ~\forall \src{C_S}. ~\linkS{C_S}{P} ~\satS \hS{H}\\
  %%   \emptyset &\text{o.w.}
  %% \end{cases}
  %% \]
  %
  %
  Notice that $\comp{\cdot}$ is $\rhptautildename$ for such a $\tilde{\tau}$ by construction.

  \bigskip
  We now show minimality of $\tilde{\tau}$ \IE~that for every $\alpha$ such that $\comp{\cdot}$ is
  $\formatCompilers{RHP^{\alpha}}$, $ \forall \hS{H} \in \hpropS.\ \tilde{\tau}(\hS{H}) \subseteq \alpha(\hS{H})$.
  If $\tilde{\tau}(\hS{H}) = \emptyset$ then $\emptyset \subseteq \alpha(\hS{H})$.
  If $\piT{\pi} \in \tilde{\tau}(\hS{H})$, then by construction $\piT{\pi} = \beh{\linkT{C_T'}{P}}$ for some
  $\trg{C_T'}$ and some $\src{P}$ such that $\forall \src{C_S}. ~\linkS{C_S}{P} ~\satS \hS{H}$.
  By $\formatCompilers{RHP^{\alpha}}$ deduce $\forall \trg{C_T}. ~\linkT{C_T}{P} ~\satT \alpha(\hS{H})$ and hence in
  particular $~\linkT{C_T'}{P} ~\satT \alpha(\hS{H})$ that means $ \piT{\pi} = \beh{\linkT{C_T'}{P}} \in \alpha(\hS{H})$
  and concludes the proof.
\end{proof}

\noindent
The next corollary shows how to use~\autoref{thm:facrhp} to establish if a compiler preserves or not the robust
satisfaction of a single hyperproperty. \autoref{cor:tool} assumes that source and target traces are the same, and hence
the set \hprop, that we don't color any longer.

\tool*

%\cnote{Iff H is a prefixed point! (Zunino \& Degano would probably say)}

\begin{proof}
        For the implication from left to right assume $\comp{\cdot}$ preserves robust satisfaction of $H$, meaning that
        $\forall \src{P}.~(\forall \src{C}. ~\linkS{C}{P} \satS H) \Rightarrow (\forall \trg{C}. ~\linkT{C}{P} \satT H)$.
        It follows that the compiler is $\formatCompilers{RHP}^h$ where $h$ is defined as
        \begin{equation*}
                h(H') =
                \begin{cases}
                 H & \mi{if} ~H = H' \\
                 \top & o.w.
                 \end{cases}
        \end{equation*}
        and $\top = \myset{\pi}{\pi \in \prop}$ is trivially robustly satisfied by any program.  By minimality
        of $\tautilde$ it must be $\tautilde(H) \subseteq h(H) = H$.

        \smallskip
        \noindent
        For the implication from right to left we need to show that
        $\forall \src{P}.~(\forall \src{C}. ~\linkS{C}{P} \satS H) \Rightarrow (\forall \trg{C}. ~\linkT{C}{P} \satT
        H)$.

        \noindent
        Assume that $\src{P}$ robustly satisfies $H$ in the source,~\IE $(\forall \src{C}. ~\linkS{C}{P} \satS
        H)$, since the compiler is $\rhptautildename$ we deduce $(\forall \trg{C}. ~\linkT{C}{P} \satT \tautilde(H))$.
        Unfolding the definition of $\satT$, we have $\forall \trg{C}. ~\beh{\linkT{C}{P}} \in \tautilde(H)$ with
        $\tautilde(H) \subseteq H$, so that $\forall \trg{C}. ~\beh{\linkT{C}{P}} \in H$.

\end{proof}

\subsection{Details of~\autoref{ex:runningex}}\label{sec:runningexdetail}

\begin{grammar}
  <expr> ::= $n \in \mathbb{N}$ | $\texttt{v} \in \mathit{Var}$  | <expr> <bin>
  <expr> | <un> <expr>
\end{grammar}

\begin{grammar}
  <\src{P}> ::= \src{skip} | \texttt{v} \src{:=} <expr> |
  \src{\textcolor{black}{\langle} P \textcolor{black}{\rangle} ;
  \textcolor{black}{\langle} P \textcolor{black}{\rangle}} | \src{while} <expr>
  \src{\textcolor{black}{\langle} P \textcolor{black}{\rangle}}
\end{grammar}

\begin{grammar}
  <\src{C_S}> ::= \src{\hole{\cdot}}
\end{grammar}

\begin{grammar}
<\trg{P}> ::= \trg{skip} | \texttt{v} \trg{:=} <expr>
|\trg{\textcolor{black}{\langle} P \textcolor{black}{\rangle} ;
\textcolor{black}{\langle} P \textcolor{black}{\rangle}} | \trg{while} <expr>
\trg{\textcolor{black}{\langle} P \textcolor{black}{\rangle}}
\end{grammar}

\begin{grammar}
<\trg{C_T}> ::= \trg{\hole{\cdot}} | \trg{\lceil \cdot \rceil}
\end{grammar}

\begin{figure}
  \begin{gather*}
    \inference[\src{skip}]{}{\rets{s, \texttt{\src{skip}}}{s}} \quad
    \inference[\src{asnH}]{\texttt{v} \in \varH \qquad s(\texttt{v}) \neq [e]_{s}}
    {\goesv{s, \texttt{v}\ \src{:=}\ e}{s_{[\texttt{v} \leftarrow [e]_{s}]}, \checkmark}{\mathcal{H}}}
    \\
    \inference[\src{asnL}]{\texttt{v} \in \varL & \qquad e \cap \varH = \emptyset}
    {\goes{s, \texttt{v}\ \src{:=}\ e}{s_{[\texttt{v} \leftarrow [e]_{s}]}, \checkmark}}
    \\
    \inference[\src{seq1}]{\rets{s, \src{p}}{s\pr}}{\goes{s, \src{p ; q}}{s\pr , \src{q}}} \qquad
    \inference[\src{seq2}]{\goes{s, \src{p}}{s\pr, \src{p\pr}}}{\goes{s, \src{p;q}}{s\pr , \src{p\pr;q}}} \\
    \inference[\src{while1}]{[e]_{s} = 0}{\goes{s, \texttt{\src{while}}~ e~\src{p}}{s ,\src{skip}}} \\
    \inference[\src{while2}]{[e]_{s} \neq 0} %% & \src{q} \triangleq \texttt{\src{while}}~e~p} \cnote{unfolded below}
    {\goes{s, \src{while}~e~\src{p}}{s , \src{p; while}~e~\src{p}}} \end{gather*}
\caption{Operational semantics of \src{Source}.}  \label{fig:opsemsrc}
\end{figure}

\begin{figure}
  \begin{gather*}
    %
    % \inference[\trg{obsNoLabel}]{\goesv{s_{L},s_{H}, \trg{p}}{s_{L}\pr, s_{H}\pr,\trg{p\pr}}{} & s_{L} \neq s_{L}\pr}{\goesv{s,
    %     \trg{\lceil\;p\;\rceil}}{s\pr , \trg{skip}}{}} \\
    % \inference[\trg{sb1}]{\goesv{s, \trg{p}}{s\pr, \trg{p\pr}}{}}{\goesv{s,
    %     \trg{\lfloor\;p\;\rfloor}}{???? , \trg{\lfloor\;p\pr\;\rfloor}}{}}
    %     \\
    %%%%%%%%%%%%%%%%%%%%%%%%%%%%%%%%%%%%%%%%%%%%%%%%%%%%%%%%%%%%%%%%%%%%%%%%%%
    %% This goes in Section 5
    %% \inference[\trg{sb1}]{\retsv{s, \trg{p}}{s\pr}{!}}{\rets{s,
    %%     \trg{\lfloor\;p\;\rfloor}}{s\pr}{}} \quad
    %% \inference[\trg{sb2}]{\goesv{s, \trg{p}}{s\pr, \trg{p\pr}}{!}}{\goesv{s,
    %%     \trg{\lfloor\;p\;\rfloor}}{s\pr , \trg{\lfloor\;p\pr\;\rfloor}}{}}
    %%%%%%%%%%%%%%%%%%%%%%%%%%%%%%%%%%%%%%%%%%%%%%%%%%%%%%%%%%%%%%%%%%%%%%%%%%
    % \inference[\trg{obs1Bad}]{\goesv{s, \trg{p}}{s\pr, \trg{p\pr}}{} & s_{H} \neq s\pr_{H}}{\goesv{s,
    %     \trg{\lceil\;p\;\rceil}}{s\pr , \trg{\lceil\;p\pr\;\rceil}}{}}
    %     \\
    %
    \inference[\trg{asnH}]
    {
      \texttt{v} \in \varH
      \qquad
      s(\texttt{v}) \neq [e]_{s}
    }
    {
      \goesv{s, \texttt{v}\ \trg{:=}\ e}{s_{[\texttt{v} \leftarrow [e]_{s}]}, \checkmark}{\mathcal{H}}
    }
    \\
    \inference[\trg{asnL}]{\texttt{v} \in \varL & & \qquad e \cap \varH = \emptyset}
    {\goes{s, \texttt{v}\ \trg{:=}\ e}{s_{[\texttt{v} \leftarrow [e]_{s}]},
        \checkmark}}
    \\
    %% \inference[\trg{obs1}]{\retsv{s, \trg{p}}{s\pr}{\mathcal{H}} &}{\goesv{s, \trg{\lceil\;p\;\rceil}}{s\pr ,
    %%      \checkmark }{\Cobs}} \\
    %% %
    %% \inference[\trg{obs2}]{\goesv{s, \trg{p}}{s\pr, \trg{p\pr}}{\mathcal{H}} & }{\goesv{s,
    %%     \trg{\lceil\;p\;\rceil}}{s\pr , \trg{p\pr}}{\Cobs}} \\
    %
     \inference[\trg{bang1}]{\goesv{s, \trg{p}}{s\pr, \checkmark}{\mathcal{H}} & }{\goesv{s,
         \trg{\lceil\;p\;\rceil}}{s\pr , \trg{p\pr}}{\bang}} \\
    \inference[\trg{bang2}]{\goesv{s, \trg{p}}{s\pr, \trg{p\pr}}{\mathcal{H}} & }{\goesv{s,
        \trg{\lceil\;p\;\rceil}}{s\pr , \trg{p\pr}}{\bang}}
  \end{gather*}
  \caption{
    The semantics of \trg{Target} includes the rules of the semantics of \src{Source} where the ones for
    assignments are changed as depicted here.
    The semantics also includes two new rules for the ``stronger'' context $\trg{\lceil \cdot \rceil}$
    that allow to observe $\bang$.
  }
  %% \snote{In obs1 and obs2 bad contexts can \emph{observe}
  %%   * and \emph{report} !. Alternatively, they could report it back by looping
  %%   forever etc. Also look at \trg{asnH}!!}} \cnote{fixed}
 \label{fig:opsemtrg}
\end{figure}

\noindent
For $S$ being the set of stores and $\mathcal{H}, \bang, \checkmark \not\in S$, we define the abstract traces to be
\begin{flalign*}
  \piS{Trace} \triangleq&
    \myset{t \cdot \checkmark}{ t \in (S \uplus \{ \mathcal{H} \} )^{*}} \cup (S \uplus \{ \mathcal{H} \})^{\omega}\\
  \piT{Trace} \triangleq&
    \myset{t \cdot \checkmark}{ t \in (S \uplus \{ \mathcal{H}, \bang \} )^{*}} \cup (S \uplus \{ \mathcal{H}, \bang \})^{\omega} \\
\end{flalign*}
where for a set $X$, $X^{*}$ denotes the set of finite sequences of elements of $X$ and $X^{\omega}$ the infinite
ones.

\begin{definition}[Observable and Internal events]
  We call observable events states, termination or $\bang$, \IE~elements of $S \cup \{ \bang, \checkmark \}$, and call
  $\mathcal{H}$ internal event.
\end{definition}

\begin{definition}[Low-equivalence]~\label{defn:loweq}
  We define low-equivalence as the following equivalence relation $=_L$ over observable events.
  \begin{itemize}
  \item[$\bullet$] $\forall s_1, s_2 \in S= \varL \cup \varH \to \nat$, $s_1 =_L s_2$ iff $\forall v \in \varL.~s_1(v) = s_2(v)$
  \item[$\bullet$] $\bang =_L \bang,\quad \checkmark =_L \checkmark$
  \end{itemize}
  \noindent
  Two traces $t_1, t_2 \in \piS{Trace}$ (or $\in$ \piT{Trace}) are low-equivalent iff the sequences of $s, \bang,
  \checkmark$ appearing on the two traces are pointwise low-equivalent,\IE~
  \[
     t_1 =_L t_2 \iff \forall i. t_1^{i} =_L t_2^{i},
  \]
  where $t^{i}$ is the $i-th$ observable event appearing in $t$.
\end{definition}

\begin{definition}[Source and target noninterference]
  Source and target noninterference, $\hS{NI} \in \hpropS$ and $\hT{NI} \in \hpropT$ resp., are defined as following.
  \begin{flalign*}
    \hS{NI} =& \myset{\src{\pi} \in \propS}
       {\forall \src{t_1},\src{t_2} \in \src{\pi}. \src{t_1}^0 =_L \src{t_2}^0\Rightarrow \src{t_1} =_L \src{t_2} }
       \\
    \hT{NI} =& \myset{\pi \in \propT}
          {\forall \trg{t_1},\trg{t_2} \in \trg{\pi}. \trg{t_1}^0 =_L \trg{t_2}^0\Rightarrow \trg{t_1} =_L \trg{t_2} }
  \end{flalign*}

\end{definition}

\begin{lemma}\label{lem:runningexLemma}
  The identity compiler between \S and \T preserves trace equality, but does not preserve robust satisfaction of
  noninterference.
\end{lemma}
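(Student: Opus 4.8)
The plan is to prove the two halves separately. For \emph{preservation of trace equality}, I would first use that \S has a single context, the identity $\src{\hole{\cdot}}$, so the premise $\forall \src{C_S}.\,\behS{\linkS{C_S}{P_1}} = \behS{\linkS{C_S}{P_2}}$ reduces to $\behS{\src{P_1}} = \behS{\src{P_2}}$. On the target side there are exactly two contexts. For $\trg{\hole{\cdot}}$ I would note that \S and \T share both the grammar and the operational rules for partial programs (the assignment rules coincide, and all other partial-program rules are shared verbatim), and that a partial program inside $\trg{\hole{\cdot}}$ can never emit $\bang$; hence $\behT{\trg{\hole{P_i}}} = \behS{\src{P_i}}$, so these two target behaviours coincide. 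For $\trg{\lceil\cdot\rceil}$ I would argue that, \S and \T being deterministic, the unique run of $\trg{\lceil P\rceil}$ from a store $s$ is obtained from the unique run of $P$ from $s$ by a single, store-independent relabelling of traces: rules $\trg{bang1},\trg{bang2}$ rename the label of the first $\mathcal{H}$-transition into $\bang$, after which the context is discarded. Thus $\behT{\trg{\lceil P\rceil}}$ is the image of $\behS{\src{P}}$ under a fixed map on traces, and $\behS{\src{P_1}} = \behS{\src{P_2}}$ then forces $\behT{\trg{\lceil P_1\rceil}} = \behT{\trg{\lceil P_2\rceil}}$. (Equivalently, this half is an instance of \autoref{cor:coherencefac}, since the identity compiler is a MoDL, as observed in \autoref{sec:modl}, and neither language has internal non-determinism.)

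For \emph{non-preservation of robust noninterference}, I would reuse the witness from \autoref{ex:runningex}: the source program $\src{P} \triangleq \texttt{v}\ \src{:=}\ \texttt{42}$ with $\texttt{v}\in\varH$, whose compilation $\comp{\src{P}}$ is the syntactically identical target program $\trg{P} \triangleq \texttt{v}\ \trg{:=}\ \texttt{42}$. Since $\src{\hole{\cdot}}$ is the only source context, robust satisfaction of $\hS{NI}$ is just $\behS{\src{P}}\in\hS{NI}$, which I would check directly: for $s_1 =_L s_2$ the two runs produce traces $t_i = s_i\cdot s_i'\cdot\checkmark$ (with an $\mathcal{H}$ inserted before $s_i'$ exactly when $s_i(\texttt{v})\neq 42$, the side condition of $\src{asnH}$), where $s_i'$ updates $s_i$ at $\texttt{v}$; since low-equivalence ignores $\mathcal{H}$, $\checkmark =_L \checkmark$, and $s_i' =_L s_i$ (they differ only on the high variable $\texttt{v}$), we get $s_1' =_L s_1 =_L s_2 =_L s_2'$ and hence $t_1 =_L t_2$. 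For the target I would use the context $\trg{\lceil\cdot\rceil}$ and choose $s_1 =_L s_2$ with $s_1(\texttt{v}) = 0$ and $s_2(\texttt{v}) = 42$ (possible because $\texttt{v}$ is high): the run from $s_1$ changes $\texttt{v}$, so $\trg{\lceil\cdot\rceil}$ reports $\bang$ ($\trg{bang1}$) and the trace is $t_1 = s_1\cdot\bang\cdot s_1'\cdot\checkmark$, while the run from $s_2$ leaves $\texttt{v}$ unchanged, emits no $\mathcal{H}$, and $\trg{\lceil\cdot\rceil}$ stays silent, giving $t_2 = s_2\cdot s_2\cdot\checkmark$. Now $t_1^0 = s_1 =_L s_2 = t_2^0$, but the second observable of $t_1$ is $\bang$ whereas that of $t_2$ is a store, and low-equivalence relates $\bang$ only to itself; so $t_1 =_L t_2$ fails and $\behT{\trg{\lceil \trg{P}\rceil}}\notin\hT{NI}$. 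Therefore $\comp{\src{P}}$ does not robustly satisfy $\hT{NI}$, which contradicts $\rhptau$ for any $\tau$ with $\tau(\hS{NI})\subseteq\hT{NI}$.

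The low-equivalence checks and the bookkeeping about which rules fire are routine; the genuinely delicate point is in the first half, namely the claim that the observations $\trg{\lceil\cdot\rceil}$ makes on a term depend only on that term's trace set and not on finer intensional data --- this is exactly where determinacy (or, equivalently, the MoDL / \autoref{cor:coherencefac} route) is needed, and I expect it to be the main thing to get right. A second point worth stating explicitly is that the $\bang$ signal is emitted \emph{only when the high variable actually changes value}, which is forced by the side condition $s(\texttt{v})\neq[e]_s$ in $\src{asnH}$/$\trg{asnH}$; without it the context's report would be independent of the initial high state and the counterexample would collapse.
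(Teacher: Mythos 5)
Your proposal is correct and follows essentially the same route as the paper's own proof sketch: for preservation of trace equality it argues context-by-context that the $\bang$ insertions of $\trg{\lceil\cdot\rceil}$ are determined by the (equal) source trace sets, and for the failure of robust noninterference it uses the same witness $\texttt{v}\ \src{:=}\ \texttt{42}$ with two initial stores differing only on the high variable so that one run triggers $\bang$ and the other does not. Your version is in fact slightly more careful than the paper's sketch (making the trace-relabelling map and the role of the side condition $s(\texttt{v})\neq[e]_s$ explicit, and noting the alternative via \autoref{cor:coherencefac}), but it is the same argument.
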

\begin{proof}[Proof (Sketch)]

  \noindent
  \emph{($\mathit{id}$ preserves trace equality).}
  Consider source programs $\src{P_1}, ~\src{P_2}$ such that $\behS{\src{\hole{P_1}}} = \behS{\src{\hole{P_2}}}$, then
  in the target, for the identity context $\trg{\hole{\cdot}}$, $\behT{\hole{\comp{P_1}}} = \behS{\hole{\src{P_1}}}$ and
  the same holds for $\src{P_2}$, so that $\behT{\hole{\comp{P_1}}} = \behT{\hole{\comp{P_2}}}$.
  The interaction with the context $\trg{\lceil\;\cdot\;\rceil}$ may expose also the extra symbol $\bang$, this however
  -- because of equivalence in the source -- happens at the same point of the execution for both
  $\trg{\lceil\comp{P_1}\rceil}$ and $\trg{\lceil\comp{P_2}\rceil}$, so that $\behT{\trg{\lceil\comp{P_1}\rceil}} =
  \behT{\trg{\lceil\comp{P_2}\rceil}}$.

  \medskip
  \noindent
  \emph{($\mathit{id}$ does not preserve robust satisfaction of noninterference).}
  Let $\src{P} ::= \texttt{v} \src{:=} 42$ and $\texttt{v} \in \varH$. $\src{P}$ trivially robustly satisfies $\hS{NI}$
  in the source, but $\comp{P} ::= \texttt{v} \trg{:=} 42$ violates target noninterference $\hT{NI}$ in the context
  $\trg{\lceil \cdot \rceil}$. Indeed, let $s, \bar{s}$ be two states such that $s(\texttt{v}) = 1$ and
  $\bar{s}(\texttt{v}) = 42$. When executing $\trg{\lceil\comp{P}\rceil}$ in the initial state $s$, the trace observed is
  $s\cdot \bang \cdot s_{[\texttt{v} <- 42]} \cdot \checkmark$, while when executing in $\bar{s}$, the state
  is not be updated and $\bang$ never exposed. \qedhere
\end{proof}

\subsection{An insertion for reconciling observations}\label{sec:insertion}
This section is inspired to the hierarchy of semantics of transition systems by~\citet{cousot2002constructive}. The high
level intuition is to consider (for source and target languages) abstract traces $\mathit{Trace^A}$ collecting both the
externally observable events and internal states to be finite or infinite sequences of states and concrete
$\mathit{Trace^E}$ that instead collect only externally observable events, and that intuitively are obtained from the
abstract ones by simply hiding or abstracting the internal states. Rigorously we assume a Galois
insertion~\cite{cousot1977abstract} between their powersets.

\begin{remark}[Trace insertion]\label{rmk:traceinsertion}
  We assume a Galois insertion,
  \[
  \alpha : (\pow{\mathit{Trace^A}}, \subseteq) \leftrightarrows (\pow{\mathit{Trace^E}}, \subseteq) : \gamma
  \]
   with $\alpha$ abstraction map and $\gamma$ concretization.
\end{remark}

\begin{remark}[Lifting of an insertion]\label{rmk:liftinginsertion}
  Let $A,C \in \set$. Every $\alpha: (\pow{C}, \subseteq) \leftrightarrows (\pow{A}, \subseteq): \gamma$ Galois
  insertion lifts to a Galois insertion $ \bar{\alpha}: (\pow{\pow{C}}, \subseteq) \leftrightarrows (\pow{\pow{A}},
  \subseteq) : \bar{\gamma}$, by $\bar{\alpha}(X) = \myset{\alpha(x)}{ x \in X}$ and $\bar{\gamma}(Y) =
  \myset{\gamma(y)}{y \in Y}$.
\end{remark}

\begin{remark}[Hyperproperty insertion]\label{rmk:hinsertion}
  For every $H^E \in \pow{\pow{\mathit{Trace^E}}}$, we have $\bar{\alpha}(\bar{\gamma}(H^{E})) = H^E$, where
  $\bar{\alpha}, ~\bar{\gamma}$ are given in~\autoref{rmk:liftinginsertion}.
\end{remark}

\noindent
Thanks to~\autoref{rmk:hinsertion} we can define hyperproperties on the traces of events only, like we did with
noninterference, but regard it as elements of the abstract $\pow{\pow{\mathit{Trace^A}}}$, \IE~sets of sets of abstract
traces.

\section{Detail on~\autoref{sec:secureex}}\label{sec:detailsecureex}

\begin{lemma}\label{lem:secexMoDL}
  $(s,b)$ is a MoDL and $(b,t)$ is a MMoDL.
\end{lemma}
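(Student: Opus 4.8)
The plan is to verify the two diagrams separately by a direct, rule-by-rule computation, exploiting the fact that the natural transformations $s$, $t$, $b$ are all extremely simple: $s$ wraps assignments in $\trg{\lfloor\;\cdot\;\rfloor}$ and is the identity elsewhere, $t$ collapses every target context to $\src{\hole{\cdot}}$, and $b$ erases occurrences of $\mathcal{H}$. Since all four functors involved ($\src{\Sigma}$, $\trg{\Sigma}$, $\src{B}$, $\trg{B}$) are polynomial, naturality is automatic once the components are well-defined, so the real content is checking commutation on each summand of the (poly)functor domains.

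For the MoDL diagram (\autoref{def:MoDL}), I would proceed by case analysis on the top-level constructor of a source term $\src{\Sigma}(\Id \product \src{B})$. For $\src{skip}$, sequential composition, and $\src{while}$, the compiler $\comp{\cdot} = s^{*}_{\emptyset}$ acts homomorphically, so both paths reduce to the corresponding $\trg{\rho}$ component applied to the translated arguments — these cases are routine because $b$ commutes with the pure ``store update / continuation'' structure that sequencing and looping merely thread through. The only genuinely new case is the assignment constructor $(\nat \times E)$: on the source side, rule $\src{asnH}$ emits $\mathcal{H}$ on a change of a high variable, and then $b$ erases it; on the target side, $s$ produces $\trg{\lfloor \texttt{v} := e \rfloor}$, whose semantics (rules $\trg{sb1}$, $\trg{sb2}$ in~\autoref{fig:trgsemsecure}) performs the same store update but emits \emph{no} label. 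So both paths yield the same element of $\trg{B}\SigmastarT X$. I would write out this single commuting square explicitly (store $s \mapsto \langle s_{[\texttt{v} \leftarrow [e]_s]}, \text{no label}, \checkmark\rangle$ in both cases) and note the low-assignment case $\src{asnL}$ is trivial since neither semantics emits anything.

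For the MMoDL diagram (\autoref{defn:robustcriterion}), the domain is $\trg{\Cfun}\src{\Sigma}(\Id \times \src{B})$, i.e.\ a target context containing a source term whose subterms carry source behaviour; since $\trg{\Cfun}$ has exactly two constructors ($\trg{\hole{\cdot}}$ and $\trg{\lceil\;\cdot\;\rceil}$), I would do a two-case analysis. For $\trg{\hole{\cdot}}$ the diagram is essentially the MoDL diagram restricted to the ``plug into identity'' case and commutes for the same reasons. The interesting case is $\trg{\lceil\;\cdot\;\rceil}$: the upper path back-translates it to $\src{\hole{\cdot}}$ via $t$, runs $\src{\rho}_1$ (which just re-exposes whatever the plugged compiled-then-back-translated term does), and then applies $b$ to erase $\mathcal{H}$; the lower path first applies $\trg{\Cfun}(\Id \times b)$ to erase $\mathcal{H}$ from the subterm's behaviour, then lets $\trg{\lceil\;\cdot\;\rceil}$ observe via $\trg{\rho}_1$ — but there is now no $\mathcal{H}$ left to pick up, so by $\trg{bang2}$ nothing is reported, matching the upper path. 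This is exactly the mirror image of the earlier failed computation: because $b$ erases $\mathcal{H}$ \emph{before} $\trg{\lceil\;\cdot\;\rceil}$ can react, the spurious $\bang$ never appears on either side. I would present these as two small $\mathtt{tikzcd}$-style traversals analogous to the two already displayed in~\autoref{sec:secureex}.

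The main obstacle I anticipate is purely bureaucratic rather than conceptual: correctly unfolding the $\trg{\Cfun}^{*}(\src{\Sigma}\pi_1, \src{\rho})$ and $t^{*}$ legs of the MMoDL diagram for \emph{nested} contexts and deeper terms, i.e.\ checking that the inductive extensions $s^{*}$ and $t^{*}$ really do commute past the free-monad structure — naturality plus an induction on term depth handles this, but it requires care to state the induction hypothesis at the right functor ($\src{\Cfun}^{*}\SigmastarS$) and to invoke the fact that $\src{\rho}$ respects the logical partition $\rho = [\,B\,i_1 \circ \rho_1,\ \rho_2\,]$ assumed in~\autoref{subsec:many}. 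Everything else is a finite bookkeeping check over the handful of constructors of this toy \texttt{WHILE} language.
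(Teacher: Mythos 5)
Your proposal is correct and follows essentially the same route as the paper's own proof sketch: a constructor-by-constructor check in which the only non-trivial MoDL case is the high-variable assignment (source emits $\mathcal{H}$, erased by $b$; sandboxed target assignment emits nothing) and the only non-trivial MMoDL case is $\trg{\lceil\,\cdot\,\rceil}$ (both paths end with no $\bang$ because $b$ erases $\mathcal{H}$ either after $\src{\rho}_1$ or before $\trg{\rho}_1$ can react). If anything, you are slightly more explicit than the paper about the induction needed to push $s^{*}$ and $t^{*}$ past the free-monad structure, which the paper dismisses as ``immediate.''
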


\begin{proof}[Proof (Sketch)]
  We just show that $(s,b)$ and $(b,t)$ satisfy the definition of MoDL and MMoDL resp. for the case of assignments of a high
  variable \texttt{v} with an expression $e$. The other cases are immediate.
 \begin{itemize}
 \item To prove that $(s,b)$ satisfies the definition of MoDL, it suffices to show how to build the following diagram:
    \begin{center}
      \begin{tikzcd}
          {\texttt{v}\src{:=} e , \lambda s.~t(s)} \arrow[r, ""] \arrow{d}[left] {} &
          { \checkmark, \lambda s. \langle s_{[v <- e]}, \mathcal{H}\rangle} \arrow[d] \\
          {\trg{\lfloor} \texttt{v}\trg{:=} e \trg{\rfloor}, \lambda s.~t(s)} \arrow[r, "" ] &
          {\checkmark, \lambda s. \langle s_{[v <- e]}, \checkmark \rangle}
      \end{tikzcd}
    \end{center}
    \begin{description}
      \item [Right-down path:]
        First execute according to $\src{\rho}$ and update the state with the new expression $e$ for the variable \texttt{v}.
        Here, $\mathcal{H}$ is exposed because the variable is private. Finally, apply $b$ that erases the $\mathcal{H}$ thus
        reaching $\checkmark, \lambda s. \langle s_{[v <- e]}, \checkmark \rangle$.
      \item [Down-right path:]
        First ``compile'' to the sandboxed target assignment and then execute in the target reaching again $\checkmark, \lambda
        s. \langle s_{[v <- e]}, \checkmark \rangle$.
        (Here $\mathcal{H}$ is not exposed thanks to the sandbox).
    \end{description}
  \item Similarly to the above, we now prove that $(b, t)$ satisfy the definition of MMoDL by building the following diagram:
    \begin{center}
    \begin{tikzcd}
     {\trg{\lceil} \texttt{v}\src{:=} e \trg{\rceil}, \lambda s.~t(s)}
     \arrow[r]{above}{}
     \arrow[d]{left}{} &
           {\trg{\lceil} \checkmark \trg{\rceil}, \lambda s. \langle s_{[v <- e]}, \checkmark \rangle}
           \arrow[d]{right}{}  \\
                 {\trg{\lceil} \checkmark \trg{\rceil}, \lambda s. \langle s_{[v <- e]}, \mathcal{H} \rangle}
                 \arrow[d]{left}{} &
                       {\trg{\lceil} \checkmark \trg{\rceil}, \lambda s. \langle s_{[v <- e]}, \checkmark \rangle}
                       \arrow[d]{right}{} \\
                             {\src{[ \com{\checkmark} ]}, \lambda s. \langle s_{[v <- e]}, \mathcal{H} \rangle}
                             \arrow[d]{right}{} &
                                   {\checkmark, \lambda s. \langle s_{[v <- e]}, \checkmark \rangle}
                                   \arrow[d]{right}{} \\
                                         {\checkmark, \lambda s. \langle s_{[v <- e]}, \checkmark \rangle}
                                         \arrow[r]{above}{} &
                                               {\checkmark, \lambda s. \langle s_{[v <- e]}, \checkmark \rangle}
      \end{tikzcd}
    \end{center}
    \begin{description}
      \item [Right-down path:] The inner subterm reduces to $\checkmark$ (application of \src{\rho}). Translate its
        behavior in the target by $b$, then execute the context $\rho_1$ and back-translate the (consumed) context (\IE~no
        changes).
      \item [Down-right path:] The inner subterm reduces to $\checkmark$ (application of \src{\rho}).
        Then apply $t$ that back-translates the target context to the identity source one and execute it.
        Finally, translate the behavior in the target.
    \end{description}
 \end{itemize}
\end{proof}

\begin{lemma}[From Traces to $Z$ and back]\label{lem:hpropZ}
  There exist injective maps $\src{\varphi}: \piS{Trace} \to \src{Z}$ and $\src{\Psi}: \src{Z} \to \propS$, and similarly
  $\trg{\varphi}: \piT{Trace} \to \trg{Z}$ and $\trg{\Psi}: \trg{Z} \to \propT$.
\end{lemma}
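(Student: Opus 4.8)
The plan is to exploit the explicit form of the final coalgebra $Z = B^{\infty}\top$ for the behavior functors at hand. Recall that for \S we have $\src{B}\,X = (S \times (\mathrm{Maybe}~\mathcal{H}) \times (X \uplus \checkmark))^{S}$, so an element of $\src{Z}$ is a function that, given an input store $s\in S$, returns a new store $s'$, possibly the symbol $\mathcal{H}$, and either the termination token $\checkmark$ or a continuation which is again an element of $\src{Z}$. An abstract trace in $\piS{Trace}$ is, by definition, either a finite sequence of stores (possibly interleaved with $\mathcal{H}$) ending in $\checkmark$, or an infinite such sequence. The key observation is that a single trace $t$ carries just enough information to drive this coalgebraic process along one branch, so I would first define $\src{\varphi}$ by coinduction: $\src{\varphi}(t)$ is the function that, on \emph{any} input store, reads off the head observable of $t$ (the first store $s_0$, together with the flag $\mathcal{H}$ if the next symbol is $\mathcal{H}$), outputs that store/flag, and continues as $\src{\varphi}$ applied to the tail of $t$; when the remaining trace is just $\checkmark$, it returns $\checkmark$ instead of a continuation. (Operationally this ``ignores'' the input store and simply replays the recorded trace; that is harmless for injectivity.) This is well-defined by the universal property of the final coalgebra $\src{Z}$, and it is injective because from $\src{\varphi}(t)$ one recovers $t$ by iterating the coalgebra structure once and concatenating the emitted observables — two distinct traces must differ at some finite position and hence produce different elements of $\src{Z}$.

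Next I would define $\src{\Psi}: \src{Z} \to \propS$. Given $z \in \src{Z}$, let $\src{\Psi}(z)$ be the set of all traces obtained by running $z$ from each fixed initial store $s \in S$: starting from $z$ and $s$, apply the coalgebra map to get $(s', \delta, r)$ with $\delta \in \mathrm{Maybe}~\mathcal{H}$; emit $s'$ (preceded by $s$ at the very start), emit $\mathcal{H}$ if $\delta = \mathcal{H}$, and if $r = \checkmark$ stop with $\checkmark$, otherwise continue with $r$ on the new store $s'$. Formally this is again a coinductive/unfolding definition, producing one trace per $s \in S$, so $\src{\Psi}(z) = \{\, \mathrm{unfold}(z,s) \mid s \in S \,\}$. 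Injectivity of $\src{\Psi}$ follows because the behavior functor is ``deterministic'' (it is of the form $(\,\cdot\,)^{S}$ with no powerset), so the full family of unfoldings indexed by $S$ determines $z$ uniquely: if $z_1 \neq z_2$, they differ on some input store $s$ at some finite depth, and that difference is visible in $\mathrm{unfold}(z_1,s)$ versus $\mathrm{unfold}(z_2,s)$. The constructions for \T are identical, replacing $\mathrm{Maybe}~\mathcal{H}$ by $\mathrm{Maybe}~(\mathcal{H}\uplus\bang)$ and $\piS{Trace}$ by $\piT{Trace}$, which is why the lemma states the \S and \T cases in parallel.

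I expect the main obstacle to be purely a matter of bookkeeping rather than of mathematical depth: one must be careful about the ``off-by-one'' between traces (which begin with the initial store $s_0$) and coalgebra unfoldings (where the first step already consumes an input and emits the \emph{next} store), and about the fact that elements of $\src{Z}$ are genuinely functions of the input store while traces record a single run — hence $\src{\varphi}$ is ``constant in disguise'' and $\src{\Psi}$ ranges over all stores. Making these two conventions consistent so that the claimed injectivity statements type-check is the delicate step; once the unfolding and folding maps are pinned down, injectivity is immediate from determinacy of $B$. I would therefore structure the proof as: (1) give the coinductive definition of $\src{\varphi}$ and check injectivity by a one-step fold argument; (2) give the $S$-indexed unfolding definition of $\src{\Psi}$ and check injectivity using that $\src{B}$ has no internal nondeterminism; (3) remark that the \T case is verbatim the same. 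No appeal to full abstraction or to the MoDL/MMoDL machinery is needed — this is a self-contained statement about the shape of the chosen functors.
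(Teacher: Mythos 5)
Your proposal is correct and matches the paper's own proof essentially step for step: $\varphi$ replays the recorded trace while ignoring the input store (the paper's $\lambda s.\langle \mathit{head}(t), \varphi(\mathit{tail}(t))\rangle$), $\Psi$ collects one unfolding per initial store $s_0 \in S$, and both injectivity arguments (traces differing at a finite position, functions differing on some state or continuation, using that $B$ is deterministic) are the ones the paper uses. The bookkeeping concern you flag about heads, tails and the $\mathrm{Maybe}$ component is exactly where the paper's definition does its case split, so nothing further is needed.
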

\begin{proof}

  The maps $\src{\varphi}$ and $\trg{\varphi}$ are similar, so we give only one definition, $\varphi: \mathit{Trace} \to
  Z$ that associates to every trace $t$ the function $f \in Z$ such that $f(s)$ is the pair made of the head of $t$ and
  the tail of $t$ as continuation,

  \begin{align*}
    \varphi : \mathit{Trace} &\to~Z \\
    t &\mapsto
    \begin{cases}
      \lambda s \in S. \langle \mathit{head}(t), \varphi(\mathit{tail}(t)) \rangle
      &~\mathit{if}~\mathit{head} \circ \mathit{tail}(t) \in S \cup \{ \checkmark \}
      \\
      \lambda s \in S. \langle \mathit{head}(t), \langle \texttt{Some}~x, \varphi(\mathit{tail}(t)) \rangle \rangle
      &~\mathit{if}~\mathit{head} \circ \mathit{tail}(t)= x \in \{ \bang, \mathcal{H} \}
    \end{cases}
%
    %
                           %% t &\mapsto~\lambda s\in S. \langle \mathit{head}(t), \varphi(\mathit{tail}(t)) \rangle
  \end{align*}
  \noindent
  $\varphi$ is clearly injective as if $t_1 \neq t_2$ then they differ at some finite point, or equivalently
  $\mathit{head}\circ \mathit{tail}^{n}(t_1) \neq \mathit{head}\circ \mathit{tail}^{n}(t_2)$ for some $n \in \nat$, so
  that the two functions $\varphi(t_1)$ and $\varphi(t_2)$ are different.

  \smallskip
  \noindent
  Also, $\src{\Psi}$ and $\trg{\Psi}$ are similar, so we define $\Psi$ as follows. To every element $f \in Z$ we
  associate a set $\Psi(f)$ consisting of all traces $t$ such that $t^{i+1}$ is the evaluation of the $i^{th}$
  continuation of $f$ on a fixed $s_0 \in S$,
  \begin{align*}
    \Psi : Z &\to \prop \\
           l &\mapsto \{ t ~|~t^0 = \pi_1^2 (l(s_0))~\wedge \\
             &\hspace{1cm}~\forall i.~t^{i+1} = \pi_1 (k(s_0, i + 1, l) ( t^{i})) \\
             &\hspace{1cm}~\wedge s_0 \in S \}
  \end{align*}

  where $k(s_0, i, l)$ is defined, for every $s_0 \in S$ by induction on $i$,

  \begin{align*}
    & k(s_0, 1, l) = \pi_2 (l(s_0)) \\
    & k(s_0, i + 1, l) = k(s_0, i, l) (\pi_1^2 (k(s_0,i,l) (s_0))) %% the argument is obtained by evaluating f on s_0
                                                                  %% and keep evaluating i times on the states obtained at the eval before
  \end{align*}

  \noindent
  $\Psi$ is injective because if $f_1 \neq f_2$ then they will differ when evaluated on some state or on their
  continuation.
\end{proof}

\begin{remark}[$\mathit{beh} = \Psi \circ f$ ]\label{rmk:Psif}
  For every $C, P$, for $f : A \to Z$ and for $\varphi$ and $\Psi$ as in~\autoref{lem:hpropZ} $\mathit{beh}(\link{C}{P})
  = \Psi (f (\link{C}{P}))$.
  Vice versa due to injectivity of $\Psi$, $f(\link{C}{P}) = \Psi^{-1} (\mathit{beh}(\link{C}{P}))$.
\end{remark}

\begin{lemma}\label{lem:Hproppres}
  For every hyperproperty $\src{H} \in \hpropS$ and source program \src{P}, if \src{P} robustly satisfies \src{H}, then \comp{P}
  robustly satisfies $\trg{H} := \trg{\Psi} \circ \tau \circ \src{\Psi}^{-1}(\src{H})$.
\end{lemma}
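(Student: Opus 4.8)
The plan is to carry the hypothesis ``\src{P} robustly satisfies \src{H}'' across two bridges. The first is the abstract characterisation of \rhptau from \autoref{rmk:rhpabstract}: it applies here because $(s,b)$ is a MoDL and $(t,b)$ is a MMoDL (\autoref{lem:secexMoDL}), so \autoref{thm:robustSat} supplies a back-translation $\bk = t^{*}_{\emptyset} \times \mathit{id}$ satisfying $\tau \circ \src{f} \circ \src{plug} \circ \bk = \trg{f} \circ \trg{plug} \circ \mathit{id} \times \comp{\cdot}$, for $\tau = b^{\infty}_{\top}$. The second is the identification $\mathit{beh} = \Psi \circ f$ of \autoref{rmk:Psif}, which lets us pass between behaviours (sets of traces) and their representatives in $Z$. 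Throughout, $\trg{H} = \trg{\Psi} \circ \tau \circ \src{\Psi}^{-1}(\src{H})$ is to be read by applying $\src{\Psi}^{-1}$, $\tau$ and $\trg{\Psi}$ elementwise to the members of the hyperproperty \src{H}; this is well-defined because every behaviour we will feed to $\src{\Psi}^{-1}$ lies in the image of the injective map \src{\Psi}.

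First I would unfold the goal: \comp{P} robustly satisfies \trg{H} means $\behT{\linkT{C_T}{P}} \in \trg{H}$ for every target context \trg{C_T}. Fix such a \trg{C_T} and let $\src{C_S} \triangleq t^{*}_{\emptyset}(\trg{C_T})$ be the back-translated source context. Instantiating the equation of \autoref{rmk:rhpabstract} at the pair $(\trg{C_T}, \src{P})$ yields $\trg{f}(\linkT{C_T}{P}) = \tau\bigl(\src{f}(\linkS{C_S}{P})\bigr)$. Applying \trg{\Psi} to both sides, the left-hand side becomes $\trg{\Psi}\bigl(\trg{f}(\linkT{C_T}{P})\bigr) = \behT{\linkT{C_T}{P}}$ by \autoref{rmk:Psif}, while on the right-hand side the ``vice versa'' part of \autoref{rmk:Psif} gives $\src{f}(\linkS{C_S}{P}) = \src{\Psi}^{-1}\bigl(\behS{\linkS{C_S}{P}}\bigr)$. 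Hence $\behT{\linkT{C_T}{P}} = \bigl(\trg{\Psi} \circ \tau \circ \src{\Psi}^{-1}\bigr)\bigl(\behS{\linkS{C_S}{P}}\bigr)$.

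To conclude, I would invoke the hypothesis: since \src{P} robustly satisfies \src{H}, we have $\behS{\linkS{C_S}{P}} \in \src{H}$. Because $\src{\Psi}^{-1}$, $\tau$ and $\trg{\Psi}$ are applied elementwise in the definition of \trg{H}, membership is transported stage by stage, so $\bigl(\trg{\Psi} \circ \tau \circ \src{\Psi}^{-1}\bigr)\bigl(\behS{\linkS{C_S}{P}}\bigr) \in \trg{H}$, i.e.\ $\behT{\linkT{C_T}{P}} \in \trg{H}$. As \trg{C_T} was arbitrary, \comp{P} robustly satisfies \trg{H}.

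I expect the work to be essentially bookkeeping rather than conceptually hard, and the delicate point is consistency of notation: one must check that the $\tau : \src{Z} \to \trg{Z}$ coinductively induced by $b$ and used in \autoref{thm:robustSat} is literally the map whose elementwise lift defines \trg{H} here, and that $\src{\Psi}^{-1}$ is only ever evaluated on behaviours of the form $\behS{\linkS{C_S}{P}}$, which by \autoref{rmk:Psif} lie in the image of the injective maps \src{\Psi} and \trg{\Psi} (\autoref{lem:hpropZ}), so the partial inverses behave as honest functions wherever we use them.
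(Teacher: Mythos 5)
Your proposal is correct and follows essentially the same route as the paper's own proof: fix an arbitrary target context, use the abstract $\rhptau$ equation from \autoref{rmk:rhpabstract} (available because $(s,b)$ is a MoDL and $(t,b)$ a MMoDL) to equate $\trg{f}(\linkT{C_T}{P})$ with $\tau(\src{f}(\linkS{C_S}{P}))$ for the back-translated context, then transport membership through $\src{\Psi}^{-1}$, $\tau$ and $\trg{\Psi}$ via \autoref{rmk:Psif}. Your extra remarks about the elementwise reading of $\trg{H}$ and the injectivity of $\Psi$ are sound bookkeeping that the paper leaves implicit.
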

\begin{proof}
  Let \src{P} be a source program that robustly satisfies \src{H} and let \trg{C_T} be an arbitrary target context.
  Notice that since \src{P} robustly satisfies \src{H}, then for any $\src{C_S}$, $\behS{\linkS{C_S}{P}} \in \src{H}$,
  and from~\autoref{rmk:Psif}, $\src{f}(\linkS{C_S}{P}) \in \src{\Psi}^{-1}(\src{H}) = \myset{\src{\Psi}^{-1}
    (\src{\pi})}{\src{\pi} \in \src{H}}$.

  \noindent
  Recall that the compiler is (abstract) $\rhptau$, \IE
  \begin{equation}\label{eq:rhptau}
     \tau \circ \src{f} \circ \src{plug} \circ \bk =
     \trg{f} \circ \trg{plug} \circ \mathit{id} \times \comp{\cdot}
  \end{equation}
  and let $\src{C_S}$ be the source context obtained by back-translation, from~\autoref{eq:rhptau} we have,
  \[
     \tau (\src{f}( \linkS{C_S}{P})) = \trg{f}(\linkT{C_T}{P})
  \]
  so that $\trg{f}(\linkT{C_T}{P}) \in \myset{\tau(\src{F})}{ \src{F} \in \src{\Psi}^{-1}(\src{H})}$, where
  $\tau(\src{F}) = \myset{\tau(\src{x})}{\src{x} \in \src{F}}$.

  \noindent
  From~\autoref{rmk:Psif} it follows that $\behT{\linkT{C_T}{P}} = \trg{\Psi}(\trg{f}(\linkT{C_T}{P})) \in
  \trg{\Psi}\circ\tau\circ\src{\Psi}^{-1} (\src{H})$.
\end{proof}

\begin{lemma}\label{lem:NIpres}
  For every \src{P}, if \src{P} robustly satisfies \hS{NI}, \comp{P} robustly satisfies \hT{NI}.
\end{lemma}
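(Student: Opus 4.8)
The plan is to obtain the statement as a consequence of \autoref{lem:Hproppres} together with a purely combinatorial fact about traces, using that robust satisfaction is monotone: if $\comp{P}$ robustly satisfies $\trg{K}$ and $\trg{K} \subseteq \trg{K'}$, then $\comp{P}$ robustly satisfies $\trg{K'}$, since $\behT{\linkT{C_T}{P}} \in \trg{K} \subseteq \trg{K'}$ for every $\trg{C_T}$. By \autoref{lem:Hproppres}, if $\src{P}$ robustly satisfies $\hS{NI}$ then $\comp{P}$ robustly satisfies $\trg{\Psi} \circ \tau \circ \src{\Psi}^{-1}(\hS{NI})$, so it suffices to prove the inclusion
\[
  \trg{\Psi}\bigl(\tau\bigl(\src{\Psi}^{-1}(\hS{NI})\bigr)\bigr) \subseteq \hT{NI}.
\]

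Next I would make the action of $\tau = b^{\infty}_{\top}$ explicit on traces. Since $b : \src{B} \natt \trg{B}$ merely erases occurrences of $\mathcal{H}$ and the source behaviour functor $\src{B}$ produces no $\bang$, the coinductive extension $\tau : \src{Z} \to \trg{Z}$ sends a source behaviour to the target behaviour obtained, on each fixed initial store $s_{0} \in S$, by deleting every $\mathcal{H}$ from the corresponding source trace. Writing $\mathrm{erase}_{\mathcal H} \colon \piS{Trace} \to \piT{Trace}$ for the map that removes all $\mathcal{H}$ symbols, the key bookkeeping step is the identity
\[
  \trg{\Psi}\bigl(\tau(f)\bigr) = \myset{\mathrm{erase}_{\mathcal H}(t)}{t \in \src{\Psi}(f)}, \qquad f \in \src{Z},
\]
which follows from the definitions of $\src{\varphi},\src{\Psi},\trg{\Psi}$ in \autoref{lem:hpropZ} and an unfolding of the coinductive definition of $b^{\infty}_{\top}$. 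I expect this commutation lemma to be the main obstacle: it asks to match the inductive definition of $\Psi$ (and its auxiliary $k$) against the coinductive definition of $b^{\infty}_{\top}$, and is the only step that touches the machinery of \autoref{sec:modl}; everything else is elementary.

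Finally, granting the commutation, the inclusion follows from the shape of the two noninterference hyperproperties. Fix $f \in \src{\Psi}^{-1}(\hS{NI})$ and set $\src{\pi} := \src{\Psi}(f) \in \hS{NI}$, so that $\trg{\Psi}(\tau(f)) = \myset{\mathrm{erase}_{\mathcal H}(t)}{t \in \src{\pi}}$. Because $\mathcal{H}$ is an internal event, $\mathrm{erase}_{\mathcal H}$ leaves the sequence of \emph{observable} events of a trace unchanged, so $\mathrm{erase}_{\mathcal H}(t)^{0} = t^{0}$ and, by \autoref{defn:loweq} (where $=_L$ already ignores $\mathcal{H}$), $t =_L t'$ iff $\mathrm{erase}_{\mathcal H}(t) =_L \mathrm{erase}_{\mathcal H}(t')$. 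Hence, given $u_{1}, u_{2} \in \trg{\Psi}(\tau(f))$ with $u_{1}^{0} =_L u_{2}^{0}$, choosing $t_{1}, t_{2} \in \src{\pi}$ with $u_{i} = \mathrm{erase}_{\mathcal H}(t_{i})$ yields $t_{1}^{0} =_L t_{2}^{0}$, so $t_{1} =_L t_{2}$ by $\src{\pi} \in \hS{NI}$, whence $u_{1} =_L u_{2}$. Thus $\trg{\Psi}(\tau(f)) \in \hT{NI}$, and letting $f$ range over $\src{\Psi}^{-1}(\hS{NI})$ proves the required inclusion and therefore the lemma.
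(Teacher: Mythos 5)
Your proposal follows essentially the same route as the paper's proof: reduce via \autoref{lem:Hproppres} to the inclusion $\trg{\Psi} \circ \tau \circ \src{\Psi}^{-1}(\hS{NI}) \subseteq \hT{NI}$ and then use that $\tau$ merely erases $\mathcal{H}$ while $\src{\Psi}^{-1}$ and $\trg{\Psi}$ do not alter the observable events, which the paper declares ``immediate'' and you spell out in more detail (the commutation of erasure with $\Psi$ and the preservation of $=_L$). The argument is correct and matches the paper's.
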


\begin{proof}
  From~\autoref{lem:Hproppres} we have to show that the set $\trg{\Psi} \circ \tau \circ \src{\Psi}^{-1}(\hS{NI})$ is included in
  $\hT{NI}$. This is immediate because $\tau$ that simply erases the occurrences of $\mathcal{H}$ (and source traces do
  not expose $\bang$), and $\trg{\Psi}, ~\src{\Psi}^{-1}$ do not modify the events on the traces.
\end{proof}

\section{Full abstraction and trace based criteria for expressiveness}\label{sec:rhpexpr}
\citet{mitchell1993abstraction} suggested to compare the expressiveness of languages by providing a fully abstract
translation between two languages that also respects a certain \emph{homomorphism condition}.
The homomorphism condition ($R1$ from Section 3 of~\cite{mitchell1993abstraction}) requires that the (denotational)
semantics of $\comp{\linkS{C}{P}}$ coincides with the one of $\comp{C}{\trg{\hole{\comp{P}}}}$, and is naturally
satisfied by \emph{canonical} compilers like the one recently proposed by~\citet{patrignani2021semantic}.
Ignoring the homomorphism condition of~\citet{mitchell1993abstraction} may lead to ``false positive'' results, as
highlighted by~\citet{gorla2016full}.%
For example, it is possible to define a fully abstract compiler, that does not respect the homomorphic condition,
between the class of Turing machines and that of (deterministic) finite-state automata, despite the fact that they are
clearly \emph{not} equi-expressive~\cite{gorla2016full,beauxis2008asynchronous}.
%
%% \mnote{Do we really want this examples, or the pointer above suffices?
%% To see why this is possible, consider Turing machines and finite automata
%% quotiented w.r.t. their standard equivalences and obtain two sets of the same
%% cardinality (the cardinality of \nat).
%% Any injective mapping between the quotient spaces is a fully abstract
%% translation, even those that are clearly non homomorphic.
%% } \cnote{+1, commented out}
%
Hereafter we sketch how \rhptau rules out this false positive result, and therefore can be thought as a criterion to
compare expressiveness of languages.  To support this claim, we consider as source programs Turing machines (\src{TM})
and as target ones finite automata (\trg{DFA}). Source and target contexts simply pass an arbitrary string on $\{ 0, 1
\}$ to the \src{TM} or to the \trg{DFA} which accept or reject.
Traces are arbitrary strings over $\{ 0, 1 \}$ and the traces that can be observed during execution of a \src{TM} or of
a \trg{DFA} in a context, are those that are accepted by it.
Let $\comp{\cdot}: \src{TM} \to \trg{DFA}$ any translation between the two languages, for example the fully abstract
translation from~\cite{gorla2016full, beauxis2008asynchronous}, $\comp{\cdot}$ cannot be \rhptau with $\tau$ the
identity map.
For a proof, assume $\comp{\cdot}$ is \rhptau and \src{M} be the \src{TM} that accepts all and only the strings in
$\myset{0^n 1^n}{n \in \mathbb{N}}$, the \trg{DFA} $\comp{M}$ would accept exactly the same strings, a contradiction
(\cite{hopcroft2001introduction} Exercise 4.1.1).

  \section{MoDL and MMoDL }
\label{sec:bg}
%
%
%% \begin{definition}[RHP]
%% \end{definition}
% We consider the abstract variant of $\rhptau$ as in~\autoref{rmk:rhpabstract}.
%

\subsection{Background}
\begin{definition}[GSOS laws]\label{def:gsos}
  A GSOS law of $\Sigma$ over $B$ is a natural transformation $\rho : \Sigma
  (\Id \product B) \natt B\Sigma^{*}$, where $\Sigma^{*}$ is the free monad over
  $\Sigma$.
\end{definition}

\begin{remark}[$\rho$ extends to $\lambda$]\label{rmk:rholambda}
  We will make use of a result from~\cite{lenisa-power-watanabe} stating that GSOS
  laws $\rho : \Sigma (\Id \product B) \natt B\Sigma^{*}$ are
  \emph{equivalent} to natural transformations $\lambda : \Sigma^{*} (\Id
  \product B) \natt B\Sigma^{*}$ respecting the structure of $\Sigma^{*}$ as follows:
    \begin{center}
      \begin{tikzcd}
        \Sigma^{*}\Sigma^{*} (\Id \times B) \arrow[rr, "\Sigma^{*} (
        \Sigma^{*}\pi_{1} {,} \lambda)",
        Rightarrow] \arrow[d, "\mu_{(\Id \times B)}", Rightarrow]
        & & \Sigma^{*} (\Id \times B) \Sigma^{*} \arrow[r, "\lambda_{\Sigma^{*}}", Rightarrow]
        & B \Sigma^{*}\Sigma^{*} \arrow[d, "B \mu", Rightarrow]
        \\
        \Sigma^{*} (\Id \times B) \arrow[rrr, "\lambda", Rightarrow]
        &
        & & B\Sigma^{*}
      \end{tikzcd}
      \hfil
      \begin{tikzcd}
        \Id \times B \arrow[d, "\eta_{(\Id \times B)}"', Rightarrow] \arrow[dr,
        "B\eta \circ \pi_{2}", Rightarrow]
        \\
        \Sigma^{*} (\Id \times B) \arrow[r, "\lambda", Rightarrow]
        & B \Sigma^{*}
      \end{tikzcd}
    \end{center}
    The respective $\lambda$ can be constructed via structural induction on $\Sigma^{*}$.
    %
    %\snote{dono if I have to spell it out}\cnote{I think the reader can go and read the reference(but please fix
    %it)}.
    %
    The reader may assume that
    given a GSOS law $\rho$, $\lambda$ is its extended version and vice versa.
\end{remark}

\begin{remark}[$f$ as the bialgebra morphism induced by $\rho$~\cite{klin2011bialgebras}] \label{rmk:hgf}
  Let $a: \Sigma A \to A $ ($A = \Sigma^{*} \emptyset$) the initial $\Sigma$-algebra and $ z: Z \to B Z $ final
  $B$-coalgebra ($Z = B^{\infty} \top$). A GSOS law $\rho : \Sigma (\Id \product B) \natt B\Sigma^{*}$ induces:
  \begin{itemize}
  \item   a $\Sigma$-algebra $g: \Sigma Z \to Z$
  \item   a $B$-coalgebra    $h: B A \to A$
  \item   a unique morphism of bialgebras $f : A \to Z$ such that the following diagram commutes
  \end{itemize}

  \begin{center}
    \begin{tikzcd}
      \Sigma A \arrow[r, "a"'] \arrow[d, "\Sigma f"']
      &  A \arrow[r, "h"'] \arrow[d, "!f"']
      &  BA \arrow[d, "Bf"']
      \\
      \Sigma Z \arrow[r, "g"']
      &  Z \arrow[r, "z"']
      &  BZ
    \end{tikzcd}
  \end{center}
\end{remark}

%\subsection{Contexts and terms}
\label{subsec:con}
\noindent
Following the concrete definitions, we qualify certain constructors as being (execution) contexts and the rest as being
terms. In the categorical setting of GSOS laws, this means that the syntax $\Sigma : \set \to \set$ is $\Cfun \uplus
\Pfun$, with $\Cfun : \set \to \set$ being the context constructors and $\Pfun : \set \to \set$ standing for
terms. Furthermore, we acknowledge that a term may also consists of contexts (but not vice versa), as we want terms to
be able to interact themselves with contexts.
\begin{definition}[plug~\cite{tsampas2020categorical}]\label{defn:plug}
  The plugging operation, $\Plug$, is defined via (strong) initiality.
  For $a : \Sigma A \to A$ and $q : (\top \uplus \Cfun)C \to C$ being the initial algebras of $\Sigma$ and $\top \uplus
  \Cfun$ respectively and where $C =  (\top + \Cfun)^{*} \emptyset$, we have
\begin{center}
  \begin{tikzcd}
    (\top \uplus \Cfun)C \times A
    \arrow[d, "(st{,}\pi_{2})"']
    \arrow[rrrrr, "q \times \id", "\cong"']
    & & & & & C \times A
    \arrow[dd, "\Plug"]
    \\
    (\top \uplus \Cfun)(C \times A) \times A
    \arrow[d, "(\top \uplus \Cfun)\Plug \times \id"']
    \\
    (\top \uplus \Cfun)A \times A
    \arrow[r, "\cong"]
    & (\top \times A) \uplus (\Cfun A \times A)
    \arrow[rrr, "{[i_{1} \circ \pi_{2},i_{1} \circ i_{1} \circ \pi_{1}]}"]
    & & & A \uplus \Sigma A
    \arrow[r, "{[\id,a]}"]
    & A
  \end{tikzcd}
\end{center}
\end{definition}
Intuitively, in $\top \uplus \Cfun$, the $\top$ part denotes an actual hole to be filled in. Thus, the initial algebra
of $\top \uplus \Cfun$ are all finite combinations of context constructors that have one or more holes in them. The
plugging operation simply returns the term-to-be-plugged upon encountering a hole. We may write $\link{c}{p}$ to denote
$\Plug (c, p)$.

\begin{remark}[Assumption on $\rho$]
  \label{rem:gsos}
  For any language we introduce, we want the contexts to be closed under the semantics, therefore from now on we
  consider GSOS laws $\rho : \Sigma(\Id \product B) \natt B\Sigma^{*}$ consisting of $\rho_{1} : \Cfun (\Id \product B)
  \natt B\Cfun^{*}$ and $\rho_{2} : \Pfun(\Id \product B) \natt B\Sigma^{*}$ with $\rho = [Bi_{1} \circ \rho_{1},
    \rho_{2}]$.
    This is true for \emph{any} GSOS law introduced later on.
\end{remark}
 %% \cnote{So, this is a (reasonable) assumption rather than a remark. }
 %% \snote{Yeah!}

% \subsection{MoDL and MMoDL}
%

We recall the definitions of MoDL ad MMoDL.
\modldef*
%
%
%% \snote{I think the second version is cleaner. Looks too weak but it isn't,
%%   because we nonetheless need MoDL.} \cnote{keeping the less clean version in the comments}
%
\mmodldef*
%% \begin{definition}[MMoDL~\autoref{defn:robustcriterion}]
%% %
%% % \begin{center}
%% % \begin{tikzcd}
%% %   %
%% %   {\trg{H}\src{\Sigma}(\Id \times \src{B})}
%% %   \arrow{d}[left]{t} \arrow["{\trg{H}(s \circ \src{\Sigma}(id \times b))}", from=1-1, to=1-4]
%% %   %
%% %   &&& {\trg{H}\trg{\Sigma}^{*}(\Id \times \trg{B})} \arrow{d}[right] {\trg{\rho}_{1} \circ
%% %     \trg{H}(\trg{\Sigma}\pi_{1},\trg{\rho})} \\
%% %   %
%% %   {\src{H}^{*}\src{\Sigma}(\Id \times \src{B})} \arrow{d}[left] {\src{\rho}_{1} \circ \src{H}^{*}(\src{\Sigma}\pi_{1}, \src{\rho})}
%% %   %
%% %   &&& {\trg{B}\trg{H}^{*}\SigmastarT} \arrow{d}[right]{\trg{B}t^{*}} \\
%% %   %
%% %   {\src{B}\src{H}^{*}\SigmastarS} \arrow["{b \circ \src{B}\src{H}^{*}s^{*}}", from=3-1, to=3-4]
%% %   %
%% %   &&& {\trg{B}\src{H}^{*}\SigmastarT}
%% %   %
%% % \end{tikzcd}
%% % \end{center}
%%   % %
%% \end{definition}
%
\subsection{MMoDL and \rhptau}\label{sec:mmodlrhp}
In order to prove~\Cref{thm:robustSat} we consider the result of plugging a \S term $\src{p} : \src{A}$ in a \T context
$\trg{c} : \trg{C}$ as an element of $\trg{\Cfun}^{*}\src{A}$, and obtain a $\trg{B}$-coalgebra structure on
$\trg{\Cfun}^{*}\src{A}$.

\begin{definition}[cross-language plug]\label{defn:crossplug}
We write $m$ to denote the operation of plugging \src{p} in \trg{c} and define it via (strong) initiality:
\begin{center}
  \begin{tikzcd}
    (\top \uplus \trg{\Cfun })\trg{C} \times \src{A}
    \arrow[d, "(st{,}\pi_{2})"']
    \arrow[rrrr, "q \times \id", "\cong"']
    & & & & \trg{C} \times \src{A}
    \arrow[dd, "m"]
    \\
    (\top \uplus \trg{\Cfun })(\trg{C} \times \src{A}) \times \src{A}
    \arrow[d, "(\top \uplus \trg{\Cfun })m \times \id"']
    \\
    (\top \uplus \trg{\Cfun })\trg{\Cfun }^{*}\src{A} \times \src{A}
    \arrow[r, "\cong"]
    & (\top \times \src{A}) \uplus (\trg{\Cfun }\trg{\Cfun }^{*}\src{A} \times \src{A})
    \arrow[rr, "{[i_{1} \circ \pi_{2},i_{1} \circ \pi_{1}]}"]
    & & \src{A} \uplus \trg{\Cfun }\trg{\Cfun }^{*}\src{A}
    \arrow[r, "\cong"]
    & \trg{\Cfun }^{*} \src{A}
  \end{tikzcd}
\end{center}
\end{definition}

\begin{remark}[a useful $\trg{B}$-coalgebra]\label{rmk:Bcoalgebra}
 $\trg{\rho}_{1}$ induces the following $\trg{B}$-coalgebra structure on $\trg{H}^{*}\src{A}$,
\begin{center}
  \begin{tikzcd}
    \trg{\Cfun }^{*}\src{A}
    \arrow[rr, "\trg{\Cfun }^{*}(\id{,}\src{h})"]
    &
    & \trg{\Cfun }^{*}(\Id \times \src{B})\src{A}
    \arrow[rr, "\trg{\Cfun }^{*}(\id \times b)"]
    &
    & \trg{\Cfun }^{*}(\Id \times \trg{B})\src{A}
    \arrow[r, "\trg{\rho}_{1}"]
    & \trg{B}\trg{\Cfun }^{*}\src{A}
  \end{tikzcd}
\end{center}
\end{remark}

The crucial fact about this $\trg{B}$-coalgebra is that it is in a (functional) bisimulation with $b \circ \src{h} :
\src{A} \to \trg{B}\src{A}$.
More specifically:

%\snote{Is a proposition better suited? Is ModL needed here?}
\begin{lemma} \label{th:bisim}
  Let GSOS laws $\src{\rho} : \src{\Sigma} (\Id \product \src{B}) \natt \src{B}\src{\Sigma}^{*}$ and $\trg{\rho} :
  \trg{\Sigma} (\Id \product \trg{B}) \natt \trg{B}\trg{\Sigma}^{*}$, so that $\src{\Sigma} \triangleq \src{\Cfun }
  \uplus \src{\Pfun}$, $\trg{\Sigma} \triangleq \trg{\Cfun } \uplus \trg{\Pfun}$ and $\src{\rho}, \trg{\rho}$ respect
  the assumption in~\cref{rem:gsos}.
  If $(s,b) : \src{\rho} \to \trg{\rho}$ is a MoDL and $(t,b) : \src{\rho} \to
  \trg{\rho}$ is an MMoDL, then $\src{a}_{1}^{*} \circ t^{*} : \trg{\Cfun }^{*}\src{A} \to \src{A}$ is a
  $\trg{B}$-coalgebra homomorphism from $\trg{\rho}_{1} \circ \trg{\Cfun }^{*}(\id \times b) \circ \trg{\Cfun
  }^{*}(\id{,}\src{h}) : \trg{\Cfun }^{*}\src{A} \to \trg{B}\trg{\Cfun }^{*}\src{A}$ to $b \circ \src{h} : \src{A} \to
  \trg{B}\src{A}$.
\end{lemma}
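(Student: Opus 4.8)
The plan is to reduce the claimed coalgebra‑homomorphism property to the single‑layer compatibility expressed by the MMoDL square, propagated to arbitrary context depth. Write $\gamma := \trg{\rho}_{1}\circ\trg{\Cfun}^{*}(\id\times b)\circ\trg{\Cfun}^{*}(\id,\src{h})$ for the $\trg{B}$‑coalgebra on $\trg{\Cfun}^{*}\src{A}$ of \Cref{rmk:Bcoalgebra}, where — as licensed by \Cref{rmk:rholambda} — each GSOS law is silently identified with its $\lambda$‑extension, so that the $\src{\rho}_{1}$ and $\trg{\rho}_{1}$ appearing here are read as the natural transformations $\Cfun^{*}(\Id\product B)\natt B\Cfun^{*}$ obtained from $\rho_{1}$, evaluated at the object of programs. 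The goal is then $b\circ\src{h}\circ\src{a}_{1}^{*}\circ t^{*} = \trg{B}(\src{a}_{1}^{*}\circ t^{*})\circ\gamma$, where $\src{a}_{1}\colon\src{\Cfun}\src{A}\to\src{A}$ is the context‑component of the initial algebra $\src{a}$ and $\src{a}_{1}^{*}\colon\src{\Cfun}^{*}\src{A}\to\src{A}$ the $\src{\Cfun}$‑algebra map it freely induces, which flattens a tree of source‑context constructors over $\src{A}$.

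First I would pin down what $\src{h}$ does on terms built from context constructors. Since $\src{\rho}$ respects the logical partition of \Cref{rem:gsos}, i.e.\ $\src{\rho}=[\src{B}i_{1}\circ\src{\rho}_{1},\src{\rho}_{2}]$, precomposing the defining square of the operational model $\src{h}$ (\Cref{rmk:hgf}) with the coprojection $i_{1}\colon\src{\Cfun}\natt\src{\Sigma}$ and collapsing the resulting $\src{\Sigma}^{*}$‑tree of context constructors via $\src{a}_{1}^{*}$ gives $\src{h}\circ\src{a}_{1}=\src{B}\src{a}_{1}^{*}\circ\src{\rho}_{1}\circ\src{\Cfun}(\id,\src{h})$; propagating this identity along the free monad $\src{\Cfun}^{*}$, exactly as a GSOS law extends to its $\lambda$‑version using the $\eta$‑ and $\mu$‑laws of \Cref{rmk:rholambda}, upgrades it to
\[
  \src{h}\circ\src{a}_{1}^{*} \;=\; \src{B}\src{a}_{1}^{*}\circ\src{\rho}_{1}\circ\src{\Cfun}^{*}(\id,\src{h}).
\]
Next I would feed this into the goal. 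Using naturality of $t^{*}\colon\trg{\Cfun}^{*}\natt\src{\Cfun}^{*}$ at $(\id,\src{h})$ to slide $t^{*}$ past $\src{\Cfun}^{*}(\id,\src{h})$, and naturality of $b$ at $\src{a}_{1}^{*}$ to move $b$ past $\src{B}\src{a}_{1}^{*}$, one checks that it suffices to establish the single identity
\[
  b\circ\src{\rho}_{1}\circ t^{*} \;=\; \trg{B}t^{*}\circ\trg{\rho}_{1}\circ\trg{\Cfun}^{*}(\id\times b),
\]
after which the remaining factors $\trg{B}\src{a}_{1}^{*}$ and $\trg{\Cfun}^{*}(\id,\src{h})$ on either side recompose $\trg{B}(\src{a}_{1}^{*}\circ t^{*})$ and $\gamma$.

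The hard part will be this last identity: it is a ``many layers'' version of the MMoDL square, which I would prove by structural induction on $\trg{\Cfun}^{*}$. Its single‑layer instance ($\trg{\Cfun}$ in place of $\trg{\Cfun}^{*}$, $t$ in place of $t^{*}$) is exactly the right‑hand portion of the diagram of \Cref{defn:robustcriterion} — ``translate the behaviour, let the target context observe, back‑translate the behaviour'' versus ``back‑translate the context, run it in the source, translate the behaviour'' — the preceding source‑term reduction in that diagram having already been absorbed into $\src{h}$ here. The base case of the induction is the $\eta$‑laws for the two $\lambda$‑extensions together with naturality of $b$ and $t$; the inductive step applies the single‑layer diagram to the outermost target‑context constructor, the inductive hypothesis to its sub‑trees, and glues the two with the $\mu$‑law and naturality of $b$ and $t$. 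This is the MMoDL counterpart of the standard fact, used for MoDL (\Cref{def:MoDL}) in~\cite{watanabe2002well}, that a well‑behaved translation makes the corresponding square commute with $s^{*}$ and the $\lambda$‑extensions in place of $s$ and the $\rho$'s; the only delicate point is the bookkeeping of the pairing maps occurring in the MMoDL diagram. I note that only the MMoDL hypothesis $(t,b)$ and \Cref{rem:gsos} enter this argument — the MoDL hypothesis $(s,b)$ is not used in this particular lemma — and that reassembling the reductions above finally yields that $\src{a}_{1}^{*}\circ t^{*}$ is a $\trg{B}$‑coalgebra homomorphism from $\gamma$ to $b\circ\src{h}$, as required.
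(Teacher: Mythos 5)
Your proposal follows essentially the same route as the paper's proof: the paper likewise proceeds by structural induction on $\trg{\Cfun}^{*}\src{A} \cong \src{A} \uplus \trg{\Cfun}\trg{\Cfun}^{*}\src{A}$, shows that $t^{*}$ is a $\trg{B}$-coalgebra homomorphism (bisimulation) onto the corresponding source-context coalgebra by invoking the single-layer MMoDL square at the outermost constructor and the inductive hypothesis underneath, and finishes by observing that $\src{a}_{1}^{*}$ is a $\src{B}$- (hence $\trg{B}$-) coalgebra homomorphism; your observation that the MoDL hypothesis $(s,b)$ plays no role here is consistent with the paper's argument as well.

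One point deserves more care than your sketch gives it. The intermediate identity $b\circ\src{\rho}_{1}\circ t^{*}=\trg{B}t^{*}\circ\trg{\rho}_{1}\circ\trg{\Cfun}^{*}(\id\times b)$, read as an equation of natural transformations on all of $\trg{\Cfun}^{*}(\Id\times\src{B})$, does not follow from MMoDL: the MMoDL square only constrains the two paths after the common prefix $\trg{\Cfun}^{*}(\src{\Sigma}\pi_{1},\src{\rho})$, i.e.\ on pairs whose $\src{B}$-component actually arises from running the source semantics on the $\Id$-component. You therefore have to carry the precomposition with $\trg{\Cfun}^{*}(\id,\src{h})$ through the induction, and at each layer use initiality of $\src{A}$ to re-expose a $\src{\Sigma}$-layer so that the leaf pairs $(\src{p},\src{h}(\src{p}))$ factor through $(\src{\Sigma}\pi_{1},\src{\rho})$ --- this is precisely the role of the auxiliary map $r:\trg{\Cfun}\src{\Sigma}^{*}\src{A}\to\trg{\Cfun}\src{\Sigma}\src{A}$ in the paper's big diagram. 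Your remark that the source-term reduction is ``absorbed into $\src{h}$'' shows you see the issue, but as written the single-layer base of your induction is not literally an instance of the MMoDL diagram; either weaken the key identity to hold only after precomposition with $\trg{\Cfun}^{*}(\id,\src{h})$, or make the factoring step explicit.
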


\begin{proof}
  We first show that $t^{*} : \trg{\Cfun }^{*}\src{A} \natt \src{\Cfun }^{*}\src{A}$ is a bisimulation by induction on
  the structure of $\trg{\Cfun }^{*}\src{A} \cong \src{A} \uplus \trg{\Cfun \Cfun }^{*}\src{A}$. By naturality of
  $\trg{\rho}$ we see that
  \begin{center}
    \begin{tikzcd}
      \src{A} \uplus \trg{\Cfun \Cfun }^{*}\src{A}
      \arrow[d, "\cong"]
      \arrow[rrr, "(b \circ \src{h}) \uplus \trg{\Cfun \Cfun }^{*}(1{,}b \circ \src{h})"]
      & & & \trg{B}\src{A} \uplus \trg{\Cfun \Cfun }^{*}(\Id \times \trg{B})\src{A}
      \arrow[d, "\cong"]
      \arrow[rrr, "{[\trg{B}i_{1}, \trg{B}i_{2} \circ \trg{\rho}_{1} \circ \trg{\Cfun }\trg{\rho}_{1}]}"]
      & & & \trg{B}(\src{A} \uplus \trg{\Cfun \Cfun }^{*}\src{A})
      \arrow[d, "\cong"]
      \\
      \trg{\Cfun }^{*} \src{A}
      \arrow[rrr, "\trg{\Cfun }^{*}(1{,}b \circ \src{h})"]
      & & & \trg{\Cfun }^{*} (\Id \times \trg{B}) \src{A}
      \arrow[rrr, "\trg{\rho}_{1}"]
      & & & \trg{B} \trg{\Cfun }^{*} \src{A}
    \end{tikzcd}
  \end{center}
  Assuming that the theorem holds for the inner $\trg{\Cfun }^{*}\src{A}$ (the inductive hypothesis), we only need to
  show that $\id \uplus (t \circ \trg{\Cfun }t^{*})$ is a $\trg{B}$-coalgebra homomorphism from the above coalgebra
  structure on $\src{A} \uplus \trg{\Cfun \Cfun }^{*}\src{A}$ to the one on $\src{A} \uplus \src{\Cfun \Cfun
  }^{*}\src{A}$. We will do so by proving that the \emph{graph} of $\id \uplus (t \circ \trg{\Cfun }t^{*})$ is a
  bisimulation. We have the following commutative diagram:
  \begin{center}
    \begin{tikzcd}
      \trg{\Cfun \Cfun }^{*}\src{A}
      \arrow[d, "\trg{\Cfun }t^{*}"]
      \arrow[rrr, "\trg{\Cfun \Cfun }^{*}(1{,}b \circ \src{h})"]
      & & & \trg{\Cfun }\trg{\Cfun }^{*}(\Id \times \trg{B})\src{A}
      \arrow[rr, "\trg{\rho}_{1} \circ \trg{\Cfun }\trg{\rho}_{1}"]
      & & \trg{B} \trg{\Cfun \Cfun }^{*}\src{A}
      \arrow[d, "\trg{B}\trg{\Cfun }t^{*}"]
      \\
      \trg{\Cfun } \src{\Cfun }^{*} \src{A}
      \arrow[d, "\trg{\Cfun }i_{1}"]
      \arrow[r, "\trg{\Cfun }\src{\Cfun }^{*}(\id{,} \src{h})"]
      & \trg{\Cfun } \src{\Cfun }^{*} (\Id \times \src{B}) \src{A}
      \arrow[rr, "\trg{\Cfun }(\src{\Cfun }^{*}\pi_{1}{,} \src{\rho})"]
      & & \trg{\Cfun } (\Id \times \src{B}) \src{\Cfun }^{*} \src{A}
      \arrow[r, "\trg{\Cfun }(\Id \times b)"]
      & \trg{\Cfun } (\Id \times \trg{B}) \src{\Cfun }^{*} \src{A}
      \arrow[r, "\trg{\rho}_{1}"]
      &  \trg{B} \trg{\Cfun } \src{\Cfun }^{*} \src{A}
      \arrow[d, "\trg{B}\trg{\Cfun }i_{1}"]
      \\
      \trg{\Cfun } \src{\Sigma}^{*} \src{A}
      \arrow[d, "\trg{\Cfun }r"]
      \arrow[r, "\trg{\Cfun }\src{\Sigma}^{*}(\id{,} \src{h})"]
      & \trg{\Cfun } \src{\Sigma}^{*} (\Id \times \src{B}) \src{A}
      \arrow[rr, "\trg{\Cfun }(\src{\Sigma}^{*}\pi_{1}{,} \src{\rho})"]
      & & \trg{\Cfun } (\Id \times \src{B}) \src{\Sigma}^{*} \src{A}
      \arrow[r, "\trg{\Cfun }(\Id \times b)"]
      & \trg{\Cfun } (\Id \times \trg{B}) \src{\Sigma}^{*} \src{A}
      \arrow[r, "\trg{\rho}_{1}"]
      &  \trg{B} \trg{\Cfun } \src{\Sigma}^{*} \src{A}
      \arrow[d, "\trg{B\Cfun }r"]
      \\
      \trg{\Cfun } \src{\Sigma} \src{A}
      \arrow[d, "t"]
      \arrow[r, "\trg{\Cfun }\src{\Sigma}(\id{,} \src{h})"]
      & \trg{\Cfun } \src{\Sigma} (\Id \times \src{B}) \src{A}
      \arrow[d, "t"]
      \arrow[rr, "\trg{\Cfun }(\src{\Sigma}\pi_{1}{,} \src{\rho})"]
      & & \trg{\Cfun } (\Id \times \src{B}) \src{\Sigma} \src{A}
      \arrow[r, "\trg{\Cfun }(\Id \times b)"]
      & \trg{\Cfun } (\Id \times \trg{B}) \src{\Sigma} \src{A}
      \arrow[r, "\trg{\rho}_{1}"]
      &  \trg{B} \trg{\Cfun } \src{\Sigma} \src{A}
      \arrow[d, "\trg{\Cfun }t"]
      \\
      \src{\Cfun } \src{\Sigma}^{*} \src{A}
      \arrow[r, "\src{\Cfun }\src{\Sigma}^{*}(\id{,} \src{h})"]
      & \src{\Cfun } \src{\Sigma}^{*} (\Id \times \src{B}) \src{A}
      \arrow[rr, "\src{\Cfun }(\src{\Sigma}^{*}\pi_{1}{,} \src{\lambda})"]
      & & \src{\Cfun } (\Id \times \src{B}) \src{\Sigma}^{*} \src{A}
      \arrow[r, "\src{\rho}_{1}"]
      & \src{B} \src{\Cfun } \src{\Sigma}^{*} \src{A}
      \arrow[r, "b"]
      &  \trg{B} \src{\Cfun } \src{\Sigma}^{*} \src{A}
    \end{tikzcd}
  \end{center}
  Where $r : \trg{\Cfun } \src{\Sigma}^{*} \src{A} \to \trg{\Cfun } \src{\Sigma}
  \src{A}$ is defined by induction on the free monad $\src{\Sigma}^{*}$ as the
  inductive extension of the algebra structure $[\src{\alpha}^{-1},\src{\Sigma
    \alpha}] : \src{A} \uplus
  \src{\Sigma}\src{\Sigma}^{*}\src{A} \to \src{\Sigma A}$, where $\src{\alpha}$
  is the initial algebra of $\src{\Sigma}$. Intuitively, $r$ simply exposes the outer
  layer of of $\src{\Sigma}^{*} \src{A}$ and ``merges'' the inner layers into
  $\src{A}$. The top rectangle commutes due to the inductive hypothesis and the
  bottom due to the fact that $(t,b)$ is a MMoDL. We
  note that $t \circ \trg{\Cfun }r \circ \trg{\Cfun }i_{1} \circ \trg{\Cfun }t^{*} \neq t \circ \trg{\Cfun }t^{*}$,
  however, since $\trg{\Cfun }r \circ \trg{\Cfun }i_{1}$ only ``re-packages'' the syntax layers, we know that the image
  of $(t \circ \trg{\Cfun }r \circ \trg{\Cfun }i_{1} \circ \trg{\Cfun }t^{*}, t \circ \trg{\Cfun }t^{*})$ is a
  bisimulation. This makes the graph of $t \circ \trg{\Cfun }t^{*}$ a bisimulation thus concluding (the base case is the
  trivial identity bisimulation) that $t^{*} : \trg{\Cfun }^{*}\src{A} \natt \src{\Cfun }^{*}\src{A}$ is a
  bisimulation. Map $\src{a}_{1}^{*}$ is a $\src{B}$-coalgebra homomorphism and hence also a $\trg{B}$-coalgebra
  homomorphism, which wraps up the proof.
\end{proof}
We are now almost ready to present our main theorem. But first, we introduce the following two ``structural'' lemmata
that we eventually make use of.

%% \snote{At some point at the intro I have to say what the $A$s and the $a_{1}$s
%%   are.} \cnote{done in a remark }

\begin{lemma}[Blue lemma]
  \label{lem:blue}
  Let syntax functors \src{\Sigma} and \trg{\Sigma} so that $\src{\Sigma} \triangleq \src{\Cfun} \uplus \src{\Pfun}$,
  $\trg{\Sigma} \triangleq \trg{\Cfun} \uplus \trg{\Pfun}$. For any natural transformation $s : \src{\Sigma} \natt
  \trg{\Sigma}^{*}$, we have
  \begin{center}
    \begin{tikzcd}
      \trg{C} \times \src{A}
      \arrow[d, "m"]
      \arrow[rr, "\id \times \comp{\cdot}"]
      & & \trg{C} \times \trg{A}
      \arrow[d, "{\PlugT}"]
      \\
      \trg{\Cfun }^{*} \src{A}
      \arrow[rr, "{\trg{a}_{1}^{*} \circ \trg{\Cfun}^{*}\cdot}"]
      & & \trg{A}
    \end{tikzcd}
  \end{center}

  Where $\comp{\cdot} : \src{A} \to \trg{A}$ is the compiler determined inductively by $s$,\IE~$\comp{\cdot} = s^{*}_{\emptyset}$.
\end{lemma}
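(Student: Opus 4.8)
The plan is to prove the square by the universal property (strong initiality) of $\trg{C} = (\top \uplus \trg{\Cfun})^{*}\emptyset$, which is the very device used to define both $m$ (\autoref{defn:crossplug}) and $\PlugT$ (\autoref{defn:plug}).

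First I would identify the bottom edge $\Phi := \trg{a}_{1}^{*} \circ \trg{\Cfun}^{*}\comp{\cdot} : \trg{\Cfun}^{*}\src{A} \to \trg{A}$ as a fold. Write $\kappa_{X} : \trg{\Cfun}\trg{\Cfun}^{*}X \to \trg{\Cfun}^{*}X$ for the canonical $\trg{\Cfun}$-algebra structure on the free $\trg{\Cfun}$-algebra $\trg{\Cfun}^{*}X$, whose unit is $\eta_{X}$, and let $\trg{a}_{1} = \trg{a}\circ i_{1}$ be the restriction of the initial $\trg{\Sigma}$-algebra to context constructors. By construction $\trg{a}_{1}^{*}$ is the unique $\trg{\Cfun}$-algebra morphism $(\trg{\Cfun}^{*}\trg{A},\kappa_{\trg{A}}) \to (\trg{A},\trg{a}_{1})$ with $\trg{a}_{1}^{*}\circ\eta_{\trg{A}} = \id$, and $\trg{\Cfun}^{*}\comp{\cdot}$ is a morphism of free $\trg{\Cfun}$-algebras with $(\trg{\Cfun}^{*}\comp{\cdot})\circ\eta_{\src{A}} = \eta_{\trg{A}}\circ\comp{\cdot}$; composing these and invoking uniqueness, $\Phi$ is the unique $\trg{\Cfun}$-algebra morphism $(\trg{\Cfun}^{*}\src{A},\kappa_{\src{A}}) \to (\trg{A},\trg{a}_{1})$ with $\Phi\circ\eta_{\src{A}} = \comp{\cdot}$, equivalently the unique map satisfying both $\Phi\circ\eta_{\src{A}} = \comp{\cdot}$ and $\Phi\circ\kappa_{\src{A}} = \trg{a}_{1}\circ\trg{\Cfun}\Phi$. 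I would also record that reassembling a context-constructor node by $\trg{a}_{1}$ (embed $\trg{\Cfun}\hookrightarrow\trg{\Sigma}$, then apply $\trg{a}$) is precisely how $\PlugT$ treats a $\trg{\Cfun}$-node in \autoref{defn:plug}.

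Second I would set $P := \PlugT \circ (\id \times \comp{\cdot})$ and $Q := \Phi \circ m$, both maps $\trg{C} \times \src{A} \to \trg{A}$, and show that each makes commute the ``cross-plug recursion'' obtained by post-composing the defining diagram of $m$ with $\Phi$ and using the previous paragraph. On the $\top$-summand of $(\top \uplus \trg{\Cfun})\trg{C}$ both $P$ and $Q$ send a hole with plug $\src{p}$ to $\comp{\src{p}}$: for $Q$ because $m$ sends that pair to $\eta_{\src{A}}(\src{p})$ and $\Phi\circ\eta_{\src{A}} = \comp{\cdot}$, and for $P$ because $\PlugT$ returns the plugged term, which here is $\comp{\src{p}}$. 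On the $\trg{\Cfun}$-summand both satisfy the recursion ``plug into the sub-contexts, then apply $\trg{a}_{1}$'': $\PlugT$ does so by definition, and $Q$ does so because $m$ conses a $\trg{\Cfun}$-layer onto the recursive results, which $\Phi$ then collapses via $\trg{a}_{1}\circ\trg{\Cfun}(-)$ since $\Phi\circ\kappa_{\src{A}} = \trg{a}_{1}\circ\trg{\Cfun}\Phi$. By strong initiality of $\trg{C}$ such a map is unique, hence $P = Q$, which is the claim.

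The main obstacle is the strength bookkeeping in the second step: one must unwind the definitions of $m$ and $\PlugT$ and chase the distributivity isomorphism $(\top \uplus \trg{\Cfun})(\trg{C}\times\src{A})\times\src{A} \cong (\top\times\src{A})\uplus(\trg{\Cfun}(\trg{C}\times\src{A})\times\src{A})$ together with the strength ``$st$'' to verify that ``recurse then reassemble'' gives literally the same composite in the two cases. This is routine but fiddly diagram chasing and is where a fully detailed proof would spend most of its length; I expect no conceptual difficulty beyond it.
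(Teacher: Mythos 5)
The paper states \autoref{lem:blue} without proof --- it is introduced only as a ``structural'' lemma to be invoked in the proof of \autoref{thm:robustSat} --- so there is no official argument to compare against; your proposal fills that gap correctly. Identifying $\trg{a}_{1}^{*}\circ\trg{\Cfun}^{*}\comp{\cdot}$ as the unique $\trg{\Cfun}$-algebra morphism out of the free algebra $\trg{\Cfun}^{*}\src{A}$ extending $\comp{\cdot}$, and then checking that both composites around the square satisfy the same parameterized recursion on $\trg{C}=(\top\uplus\trg{\Cfun})^{*}\emptyset$ (agreeing on the $\top$-summand, where both return the compiled plugged term, and on the $\trg{\Cfun}$-summand, where both recurse and reassemble via $\trg{a}_{1}$) is exactly the argument the strong-initiality definitions of $m$ and $\PlugT$ are set up to support, and the uniqueness clause closes the proof. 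The strength and distributivity bookkeeping you flag as the fiddly part is indeed routine and is the only thing a fully detailed write-up would add.
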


\begin{lemma}[Purple lemma]
  \label{lem:purple}
  Let syntax functors \src{\Sigma} and \trg{\Sigma} so that $\src{\Sigma} \triangleq \src{\Cfun} \uplus \src{\Pfun}$,
  $\trg{\Sigma} \triangleq \trg{\Cfun} \uplus \trg{\Pfun}$. For any back-translation $t : \trg{\Cfun} \natt \src{\Cfun}^{*}$, we have

  \begin{center}
    \begin{tikzcd}
      \trg{C} \times \src{A}
      \arrow[d, "m"]
      \arrow[rr, "t^{!} \times 1"]
      & & \src{C} \times \src{A}
      \arrow[d, "{\PlugS}"]
      \\
      \trg{\Cfun }^{*} \src{A}
      \arrow[rr, "{\src{a}_{1}^{*} \circ t^{*}}"]
      & & \src{A}
    \end{tikzcd}
  \end{center}

  Where $t^{!} : \trg{C} \to \src{C}$ is the translation determined inductively by $t$,\IE $t^{!} = t^{*}_{\emptyset}$.
\end{lemma}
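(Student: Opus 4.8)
The plan is to establish the identity $(\src{a}_1^{*}\circ t^{*})\circ m = \PlugS\circ(t^{!}\times\id)$ of maps $\trg{C}\times\src{A}\to\src{A}$ by (strong) initiality of the object $\trg{C}$ of target contexts, i.e.\ by structural induction on contexts. Both $m$ (\autoref{defn:crossplug}) and $\PlugS$ (\autoref{defn:plug}) are defined via strong initiality over $\top\uplus\trg{\Cfun}$ and $\top\uplus\src{\Cfun}$ respectively, while $t^{!}=t^{*}_{\emptyset}$ is the map on contexts induced from $t$ through the free-monad construction; hence it suffices to check that the two composites agree on the two kinds of generators of $\trg{C}$, namely the hole and a target context constructor applied to (already-treated) sub-contexts.

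For the base case, let $c_{0}\in\trg{C}$ be the hole. By the defining diagram of \autoref{defn:crossplug}, $m(c_{0},\src{p})=\eta_{\src{A}}(\src{p})$, the unit of the free monad $\trg{\Cfun}^{*}$; since $t^{*}$ is a monad morphism it preserves the unit, so $t^{*}(\eta_{\src{A}}(\src{p}))=\eta_{\src{A}}(\src{p})\in\src{\Cfun}^{*}\src{A}$, and the Eilenberg--Moore unit law for $\src{a}_1^{*}$ yields $\src{a}_1^{*}(\eta_{\src{A}}(\src{p}))=\src{p}$. On the other branch, $t^{!}$ maps the hole to the hole (being induced from $t$ by the free-monad construction, which preserves units), and $\PlugS(c_{0},\src{p})=\src{p}$ by \autoref{defn:plug}. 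Both composites therefore send $(c_{0},\src{p})$ to $\src{p}$.

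For the inductive step, let $\trg{c}=\trg{C}(\trg{c}_1,\dots,\trg{c}_n)$ arise from a context constructor $\trg{C}\in\trg{\Cfun}$ applied to sub-contexts for which the claim already holds. By \autoref{defn:crossplug}, $m(\trg{c},\src{p})$ is the element of $\trg{\Cfun}^{*}\src{A}$ whose outer layer is $\trg{C}$ and whose holes are filled with the $m(\trg{c}_i,\src{p})$. Because $t^{*}$ is the free-monad extension of $t$, applying $t^{*}$ rewrites the outer layer $\trg{C}$ into the source-context tree $t(\trg{C})$ and recurses into the sub-terms; applying $\src{a}_1^{*}$ and using its compatibility with the free-monad multiplication (the associativity law $\src{a}_1^{*}\circ\mu=\src{a}_1^{*}\circ\src{\Cfun}^{*}(\src{a}_1^{*})$) lets one first evaluate the sub-terms and then fold $t(\trg{C})$, so that
\[
  \src{a}_1^{*}(t^{*}(m(\trg{c},\src{p})))
  = \src{a}_1^{*}\bigl(t(\trg{C})[\,\src{a}_1^{*}(t^{*}(m(\trg{c}_i,\src{p})))\,]\bigr)
  = \src{a}_1^{*}\bigl(t(\trg{C})[\,\PlugS(t^{!}(\trg{c}_i),\src{p})\,]\bigr),
\]
where the last step is the induction hypothesis. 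On the other branch, by the inductive definition of $t^{!}$ the context $t^{!}(\trg{c})$ is obtained by substituting $t^{!}(\trg{c}_i)$ into the $i$-th hole of $t(\trg{C})$, and unfolding \autoref{defn:plug} (again invoking $\src{a}_1^{*}\circ\mu=\src{a}_1^{*}\circ\src{\Cfun}^{*}(\src{a}_1^{*})$, now to split $\PlugS$ of a composite context) gives $\PlugS(t^{!}(\trg{c}),\src{p})=\src{a}_1^{*}\bigl(t(\trg{C})[\,\PlugS(t^{!}(\trg{c}_i),\src{p})\,]\bigr)$, which is the same expression. This closes the induction.

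The fiddliest part will be making the two ``substitution/linearity'' identities used in the inductive step rigorous without drowning in indices: (i) that $\src{a}_1^{*}\circ t^{*}$, applied to a layered element of $\trg{\Cfun}^{*}\src{A}$, factors as ``translate each target layer by $t$, recurse, then fold by $\src{a}_1^{*}$''; and (ii) the analogous decomposition of $\PlugS$ along context composition. Both are instances of uniqueness from universal properties already in play (freeness of $\trg{\Cfun}^{*}$ and $\src{\Cfun}^{*}$, the Eilenberg--Moore laws for $\src{a}_1^{*}$, and strong initiality of $\trg{C}$ and $\src{C}$). The most economical element-free route is to observe that $\src{a}_1^{*}\circ t^{*}:\trg{\Cfun}^{*}\src{A}\to\src{A}$ is exactly the Eilenberg--Moore $\trg{\Cfun}^{*}$-algebra on $\src{A}$ obtained by restricting $\src{a}_1^{*}$ along the monad morphism $t^{*}$ (equivalently, the free extension of the $\trg{\Cfun}$-algebra $\src{a}_1^{*}\circ t_{\src{A}}$), and then to recognise both $(\src{a}_1^{*}\circ t^{*})\circ m$ and $\PlugS\circ(t^{!}\times\id)$ as the unique strong $(\top\uplus\trg{\Cfun})$-algebra morphism from the initial algebra $\trg{C}$ into $\src{A}$ with this induced structure, so that the square commutes by uniqueness. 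The proof of the Blue lemma (\autoref{lem:blue}) is entirely parallel, with the compiler $\comp{\cdot}=s^{*}_{\emptyset}$ in place of $t^{!}$.
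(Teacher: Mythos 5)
Your argument is correct, and it is exactly the proof the paper's definitions are set up for: the paper states the Purple (and Blue) lemma without proof, and the intended justification is precisely uniqueness of the strong $(\top\uplus\trg{\Cfun})$-algebra morphism out of the initial algebra $\trg{C}$, with $m$, $\PlugS$, and $t^{!}$ all defined by that same universal property and with $\src{a}_1^{*}\circ t^{*}$ supplying the induced algebra structure on $\src{A}$. Your element-level induction and your element-free reformulation at the end are two phrasings of the same argument; the only points that need care, which you already flag, are that the parameter $\src{A}$ must be threaded through via the strength, and that $t^{*}$ being a monad morphism together with the Eilenberg--Moore laws for $\src{a}_1^{*}$ is what licenses the ``translate each layer, recurse, then fold'' decomposition in the inductive step.
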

%
% \snote{Explain $Z$, the underlying category etc etc.}\cnote{this can be skipped (there is an intuition in the paper)}
%
\mmodlrhp*
%% \begin{theorem}[MMoDL (and MoDL) $\implies$ RHP]
%%   \label{th:main}
%%   Let GSOS laws $\src{\rho} : \src{\Sigma} (\Id \product \src{B}) \natt
%%   \src{B}\src{\Sigma}^{*}$ and $\trg{\rho} : \trg{\Sigma} (\Id \product \trg{B})
%%   \natt \trg{B}\trg{\Sigma}^{*}$, so that $\src{\Sigma} \triangleq \src{H} \uplus
%%   \src{S}$, $\trg{\Sigma} \triangleq \trg{H} \uplus \trg{S}$ and $\src{\rho},
%%   \trg{\rho}$ work in accordance to~\cref{rem:gsos}. If $(s,b) : \src{\rho}
%%   \to \trg{\rho}$ is a MoDL and $(s,b,t) : \src{\rho} \to \trg{\rho}$ is an
%%   MMoDL, the compiler $c : \src{A} \to \trg{A}$ as induced by $s : \src{\Sigma}
%%   \natt \trg{\Sigma}^{*}$ is \rhptau with $\tau : \src{Z} \to \trg{Z}$
%%   coinductively induced by $b : \src{B} \to \trg{B}$.
%% \end{theorem}

\begin{proof} We have to show that the following diagram commutes:
\begin{center}
\begin{tikzcd}[execute at end picture={
      \foreach \Nombre in  {A,B,...,H}
      {\coordinate (\Nombre) at (\Nombre.center);}
      \fill[Orchid,opacity=0.2]
      (A) -- (D) -- (E) -- (B) -- cycle;
      \fill[NavyBlue,opacity=0.2]
      (B) -- (C) -- (F) -- (E) -- cycle;
      \fill[YellowGreen,opacity=0.2]
      (D) -- (E) -- (F) -- (H) -- (G) -- cycle;
    }]
  |[alias=A]|{\src{C} \times \src{A}} && |[alias=B]|{\trg{C} \times \src{A}} && |[alias=C]|{\trg{C}\times \trg{A}} \\
	\\
	|[alias=D]|{\src{A}} && |[alias=E]|{\trg{\Cfun}^{*}\src{A}} && |[alias=F]|{\trg{A}} \\
	\\
	|[alias=G]|{\src{Z}} &&&& |[alias=H]|{\trg{Z}}
  \arrow["{\PlugS}", from=1-1, to=3-1]
  \arrow["{\PlugT}", from=1-5, to=3-5]
	\arrow["{\src{a}_{1}^{*} \circ t^{*}}"', from=3-3, to=3-1]
	\arrow["{\trg{a}_{1}^{*} \circ \trg{\Cfun}^{*}\comp{\cdot}}", from=3-3, to=3-5]
	\arrow["{\src{f}}", from=3-1, to=5-1]
	\arrow["{\trg{f}}", from=3-5, to=5-5]
	\arrow["{\id \times \comp{\cdot}}", from=1-3, to=1-5]
	\arrow["{t^{!} \times \id}"', from=1-3, to=1-1]
	\arrow["\tau", from=5-1, to=5-5]
	\arrow["m", from=1-3, to=3-3]
\end{tikzcd}
\end{center}

\noindent
Starting from top-middle node $\trg{C} \times \src{A}$, which is the pairing of a \T \emph{context} with a \S
\emph{term}, the left-outer path follows the back-translation of the \T context $(t^{!} \times \id)$, the \PlugS~of the
\S term to the back-translated context, then the semantics map $\src{f}$ mapping a \S term to its behavior in \S and
finally the behavioral translation $\tau$.
On the other side, the right-outer path consists of the evaluation -- \trg{f} -- of the \PlugT of the \T context and the
\emph{compiled} \S term.

To prove~\Cref{thm:robustSat} it suffices to show that the inner rectangles commute:
\begin{itemize}
\item The top-left (purple) rectangle commutes due to \Cref{lem:purple}.
\item The top-right (blue) rectangle commutes due to \Cref{lem:blue}.
\item The bottom (green) rectangle commutes due to $(t,b)$ being a MMoDL and $(s,b)$ a MoDL with the following argument.
\end{itemize}
\noindent
Notice that for the bottom green rectangle, it suffices to show that every arrow is a $\trg{B}$-coalgebra homomorphism,
this because the bottom-right node $\trg{Z}$ is the final coalgebra $z : \trg{Z} \to \trg{BZ}$.
We know that $\trg{f}$ is a $\trg{B}$-coalgebra homomorphism by construction while $\tau \circ \src{f}$ is a
$\trg{B}$-coalgebra homomorphism due to the fact that $(s,b)$ is a MoDL (and $b \circ h: \src{A} \to \trg{B}\src{A}$ is
a $\trg{B}$-coalgebra).
It is for the same reason that $\comp{\cdot} : \src{A} \to \trg{A}$ is a $\trg{B}$-coalgebra homomorphism which makes it
simple to show that $\trg{a}_{1}^{*} \circ \trg{\Cfun}^{*}\comp{\cdot}$ is also a $\trg{B}$-coalgebra homomorphism
($\trg{\Cfun}^{*}\src{A}$ is a $\trg{B}-coalgebra$ for \autoref{rmk:Bcoalgebra}).
Finally, $\src{a}_{1}^{*} \circ t^{*}$ is a $\trg{B}$-coalgebra homomorphism for \cref{th:bisim}.
\end{proof}

%%% Local Variables:
%%% coding: utf-8
%%% mode: latex
%%% TeX-master: "../paper/main_aplas"
%%% End:

}
{} %% Do not include appendices in non-extended version!

\end{document}